\documentclass[%
 reprint,
 amsmath,amssymb,
 aps,
]{revtex4-2}

\usepackage{graphicx}% Include figure files
\usepackage{dcolumn}% Align table columns on decimal point
\usepackage{bm}% bold math
\usepackage[font=small,labelfont=small]{caption}
\usepackage[colorlinks=true, linkcolor=blue, urlcolor=blue, citecolor=blue]{hyperref}

\usepackage{graphicx}
\usepackage{booktabs}
\usepackage{amsthm}
\usepackage{amssymb} 
\usepackage{mathrsfs}
\usepackage{amsmath}
\usepackage{cleveref}
\usepackage{braket}

\usepackage{algpseudocode}
\usepackage{algorithm}
\usepackage{subcaption}

\newtheorem{thm}{Theorem}

\newtheorem{lem}{Lemma}[section]

\begin{document}

\preprint{APS/123-QED}

%\title{Adiabatic-Passage-Based Parameter Setting for Quantum Approximate Optimization Algorithm} % on 3-SAT Problem
% Origin of Efficacy in QAOA for Random k-SAT: Sublinear Optimization via Hierarchical Adiabatic-Manifold Parameterization (for QST)
\title{Mechanism of Efficacy in QAOA for Random \textit{k}-SAT: \\ From Adiabatic Manifold to Sublinear Parameter Optimization}  % Manifold,Trajectory,Profile-

\author{Mingyou Wu}
\email{wumy@seu.edu.cn}
\altaffiliation{School of Computer Science and Engineering, Southeast University, Nanjing 211189, China}%Lines break automatically or can be forced with \\
%\author{Second Author}%
 %\email{Second.Author@institution.edu}
%\affiliation{%
% Authors' institution and/or address\\
% This line break forced with \textbackslash\textbackslash
%}%

\author{Hanwu Chen}
\email{hw\_chen@seu.edu.cn}
\affiliation{School of Computer Science and Engineering, Southeast University, Nanjing {\rm 211189}, China}
\affiliation{Key Laboratory of Computer Network and Information Integration (Southeast University), Ministry of Education,  Nanjing {\rm 211189}, China}
%\author{Zhihao Liu}
%\email{lzh@seu.edu.cn}
%\affiliation{School of Computer Science and Engineering, Southeast University, Nanjing {\rm 211189}, China}
%\affiliation{Key Laboratory of Computer Network and Information Integration (Southeast University), Ministry of Education,  Nanjing {\rm 211189}, China}

%\collaboration{MUSO Collaboration}%\noaffiliation

%\author{Charlie Author}
 %\homepage{http://www.Second.institution.edu/~Charlie.Author}
%\affiliation{
% Second institution and/or address\\
% This line break forced% with \\
%}%
%\affiliation{
% Third institution, the second for Charlie Author
%}%
%\author{Delta Author}
%\affiliation{%
% Authors' institution and/or address\\
% This line break forced with \textbackslash\textbackslash
%}%

%\collaboration{CLEO Collaboration}%\noaffiliation

\date{\today}% It is always \today, today,
             %  but any date may be explicitly specified

\begin{abstract}
	% the main work
	The Quantum Approximate Optimization Algorithm (QAOA) is a leading candidate for demonstrating quantum advantage on near-term devices, yet the physical origins of its efficacy remain poorly understood. In this work, we study QAOA for random \textit{k}-SAT problems within a universal-mixer \textit{k}-local search framework, establishing a formal correspondence between adiabatic state transfer and the QAOA ansatz. This correspondence yields a rigorous performance guarantee for random instances with clause density $m=\mathcal{O}(n^{1+\epsilon})$ and circuit depth $\Theta(n^2)$.
	
	We further investigate the NISQ regime with shallow circuits of depth $p=\mathcal{O}(n)$. Surprisingly, the optimal parameters do not become stochastic under depth compression, but instead remain confined to a structured low-dimensional region, which we identify as a \textit{smooth adiabatic manifold}. Numerical evidence indicates that this manifold persists across different circuit depths and arises from the variational suppression of adiabatic leakage. Based on this structure, we propose the smooth adiabatic-manifold parameterization (SAMP) strategy, transforming parameter optimization from an unstructured high-dimensional search into a guided refinement process. Numerical experiments on random 3-SAT instances show that SAMP achieves sublinear optimization scaling with circuit depth while providing robust zero-cost initialization for deep circuits.
%\keywords{quantum computing, quantum approximate optimization algorithm, quantum adiabatic computation, combinatorial optimization, 3-satisfiability problem}
\end{abstract}

%\keywords{Suggested keywords}%Use showkeys class option if keyword
%display desired
\maketitle

%\tableofcontents

\section{Introduction}\label{sec1}
% 首先就说量子计算基础，目前的一个NISQ环境，以及对应的变分算法
Quantum computation offers a fundamentally new paradigm for information processing, with provable speedups over classical approaches for certain problems, exemplified by Shor’s algorithm for integer factorization~\cite{Shor1994} and Grover’s algorithm for unstructured search~\cite{Grover1997}. Despite this theoretical promise, the realization of large-scale, fault-tolerant quantum computers remains a long-term challenge. In the current Noisy Intermediate-Scale Quantum (NISQ) era~\cite{Preskill2018}, limitations in qubit coherence and gate fidelity preclude the direct implementation of deep, error-corrected circuits.

% 介绍对应的
These hardware limitations have led to the development of hybrid quantum-classical approaches, most notably variational quantum algorithms (VQAs)~\cite{Cerezo2021}. By combining shallow parameterized quantum circuits with classical optimization, VQAs aim to extract computational advantages while remaining compatible with near-term devices. Representative examples include the Variational Quantum Eigensolver (VQE)~\cite{Peruzzo2014} and Quantum Neural Networks (QNNs)~\cite{Farhi2018QNN}.

% 引入QAOA
Among these, the Quantum Approximate Optimization Algorithm (QAOA)~\cite{Farhi2014} has emerged as a central framework for combinatorial optimization. Constructed from an alternating application of a problem Hamiltonian and a mixer Hamiltonian, QAOA prepares a parameterized quantum state intended to approximate the ground state of complex objective functions. Beyond its practical appeal, the structure of QAOA naturally connects it to broader models of quantum computation, enabling universal quantum computation under suitable constructions~\cite{Lloyd2018, Hadfield2019}, while in certain limits reproducing optimal quantum search behavior for unstructured problems~\cite{Jiang2017, Morales2020}.

% 提出gap
Despite this progress, a significant gap remains between the empirical performance of QAOA and its theoretical foundation. While QAOA often exhibits remarkable success in numerical experiments---particularly on structured and random instances~\cite{Harrigan2021, Zhou2020, Wurtz2021Setting}---it lacks a unified theoretical framework that explains its efficacy or provides performance guarantees beyond limited settings. This deficiency is especially pronounced for NP-hard problems, where both the structure of the optimization landscape and the scaling behavior of the algorithm remain poorly understood.

% 具体说明这个gap附近的研究文献
Existing analytical approaches provide only partial insight into this landscape, and are largely confined to extreme regimes of circuit depth. On one hand, complexity-theoretic results establish fundamental limitations in the shallow-depth limit, demonstrating that such circuits cannot surpass certain classical approximation barriers~\cite{Farhi2020Model, Bravyi2020}. On the other hand, connections to continuous-time or adiabatic quantum evolution suggest that achieving high-quality solutions for hard instances may require circuit depths that scale substantially with the problem size~\cite{Lykov2023, Basso2021}. Taken together, these limitations leave a theoretical gap at moderate, polynomially-scaled depths, where rigorous performance guarantees remain largely absent for general optimization problems.

%instance-specific analyses---for example, studies of QAOA on highly symmetric settings such as Max-Cut on regular graphs~\cite{Wang2018, Wang2022_PRX_Quantum}---provide valuable analytical characterizations, but are inherently restricted to special cases and do not readily extend to hard random instances. 

% 说明为了弥补这个gap，我们做了什么。
In this work, we bridge this gap by establishing a mechanistic foundation for the efficacy of QAOA, specifically focusing on random $k$-SAT. As a canonical NP-complete problem, $k$-SAT captures the intrinsic hardness of combinatorial optimization, while its random ensemble provides a natural setting for studying average-case complexity~\cite{Levin1986} and typical-instance behavior. To facilitate a rigorous analysis within the QAOA framework, we introduce a restricted optimization variant termed \textit{max-$k$-SSAT}. This formulation preserves the core computational structure of $k$-SAT while providing an optimization objective amenable to analytical treatment. Crucially, the efficient solvability of \textit{max-$k$-SSAT} directly implies a solution to the standard $k$-SAT decision problem with the same asymptotic complexity. A detailed formulation and proof are provided in Sec.~\ref{subsec_reduction}.

% 接下来就说明我们具体做了什么
Our analysis begins by establishing a connection between QAOA and the recently proposed $k$-local quantum search algorithm~\cite{mine1}. We generalize the mixer operator in $k$-local quantum search, leading to a formulation under which the QAOA ansatz emerges as a special case. Within this unified formulation, the QAOA evolution can be interpreted as a variational adaptation of an adiabatic $k$-local quantum search process. Building on this connection, we rigorously establish that such a correspondence is not merely structural, but also preserves the essential performance characteristics of the underlying search dynamics. Concretely, for any fixed $\epsilon > 0$, when the circuit depth scales as $p = \Theta(n^2)$ and the clause density satisfies $m = \Theta(n^{1+\epsilon})$, QAOA solves typical instances of random max-$k$-SSAT with probability $1 - \mathcal{O}(\mathrm{erfc}(n^{\delta/2}))$.
%Moreover, when $m = \Theta(n^{2+\epsilon})$, the problem becomes solvable in polynomial time on average.

% 这下面再引入参数设定的问题。
% 首先，说明这个结果的意义，这意味这当QAOA以 n^2的深度是，绝热路径是一条天然的优化路径通向最优解。这提供了远超给予Grover的有效性保证。
% --- 从理论保证向参数设定的过渡 ---
These results indicate that, already at quadratic depth, QAOA can reliably solve typical instances of random $k$-SAT with superlinear clause densities. This provides a concrete theoretical basis for its efficacy that far exceeds the guarantees derived from unstructured Grover-type search. More importantly, this correspondence reveals that the optimal variational trajectory is not an arbitrary path in the $2p$-dimensional parameter space, but is fundamentally rooted in a discretized adiabatic passage. In this regime, the adiabatic trajectory serves as a ``natural'' optimization path that guides the system toward the global minimum.

%而当深度降低时，反正就是绝热泄露那一系列问题吗，但是我们同样发现了，压缩深度的绝热k-local搜索，对于k-local搜索问题有效，而这个，就很像是标准型，其他的优化后的参数就收敛于这个。这与之前的大量研究吻合，就可以引入之前的研究了。

% 下面在基于这一点，说我们方法单纯以这个为基准，而不是盲人摸象，直接深入使用特性，得到了SAMP，这个之后再说。

% --- 讨论深度压缩、绝热泄露与变分补偿机制 ---
However, practical implementation on NISQ devices necessitates a significant compression of circuit depth, often down to linear or even sublinear $p$. In such restricted regimes, the strict adiabatic condition is inevitably violated, leading to non-adiabatic transitions and what is known as ``adiabatic leakage''. We contend that the true power of the variational loop in QAOA lies in its ability to ``heal'' these transitions—acting as a variational shortcut to adiabaticity. This perspective is supported by a growing body of evidence suggesting that optimal QAOA parameters do not behave like the stochastic initializations typically plagued by \textit{barren plateaus}~\cite{McClean2018, Wang2021}, but instead exhibit a high degree of geometric continuity.

% 讨论参数聚集的现状
Previous empirical studies have observed that for several combinatorial problems, optimal parameters tend to cluster around smooth, predictable trajectories~\cite{Zhou2020, Brandao2018}. This has inspired a variety of heuristic strategies to mitigate the high-dimensional optimization challenge, including interpolation-based parameter transfer schemes~\cite{Zhou2020}, adiabatic-inspired schedules~\cite{Sack2021, Sud2024, Wurtz2021Setting} and warm-starting approaches that extrapolate parameters across instances or system sizes~\cite{Egger2021Warm, Tate2023}. Complementing these heuristics, theoretical investigations on structured problems—most notably Max-Cut on regular graphs—have identified a phenomenon known as \textit{parameter concentration}~\cite{Wang2018, Wang2022, Brandao2018}. 

% --- 提出 Adiabatic Manifold 的观点 ---
Together, these results strongly suggest that the QAOA parameter landscape is not an unstructured variational space, but is instead governed by an underlying low-dimensional global structure. In this work, we provide a rigorous basis for this observation by analytically characterizing the parameter concentration regions for \textit{max-k-SSAT}. Specifically, we demonstrate that at a circuit depth of $p = \Omega(n^2)$, the optimal parameters converge toward a well-defined adiabatic trajectory with high probability. Crucially, our analysis extends beyond this theoretical limit to the NISQ-relevant regime of linear circuit depth. We show that as the depth is compressed from $\Omega(n^2)$ to $\mathcal{O}(n)$, the core geometric features of the parameter clustering persist, manifesting as a robust, low-frequency envelope. We formalize this persistent structure as the \textit{adiabatic manifold}---a smooth geometric subspace that preserves the essential characteristics of the optimal adiabatic passage even under significant depth compression. 

% --- 提出我们的参数设定方法 (SAMP) ---
By identifying the adiabatic manifold as the true locus of algorithmic efficacy, we move beyond heuristic parameter-tuning to propose a principled framework that exploits this geometric invariance: the \textit{smooth adiabatic-manifold parameterization} (SAMP). The core of SAMP lies in a reparameterization of the variational space designed to align with the manifold's curvature, thereby minimizing the inter-layer deviations between parameters. This transformation drastically enhances the exploitability of the passage's continuity, enabling efficient parameter reuse and interpolation across varying depths. To navigate this manifold efficiently, we implement a hierarchical refinement strategy: starting from a single degree of freedom to capture the global trend, we systematically double the degrees of freedom to capture finer diabatic corrections. This approach refocuses the classical optimizer on the most physically relevant regions, ultimately achieving a substantial reduction in optimization overhead and demonstrating the potential for sublinear scaling in parameter-setting costs.

The remainder of this paper is organized as follows. In Sec.~\ref{sec_reduction}, we introduce the formal reduction from $k$-SAT to the optimization variant max-\textit{k}-SSAT and perform a statistical characterization of the problem Hamiltonian. In Sec.~\ref{sec_theory}, we present the theoretical foundation of QAOA's efficacy, establishing its connection to $k$-local quantum search and proving the existence of an optimal adiabatic trajectory at the $\Omega(n^2)$ depth limit. Sec.~\ref{sec_SAMP} details the formulation of SAMP method, including the reparameterization of the variational space and the hierarchical refinement strategy. Numerical results on random 3-SAT instances, demonstrating the sublinear scaling of optimization costs, are presented in Sec.~\ref{sec_results}. Finally, Sec.~\ref{sec_conclusion} concludes the paper with a discussion on the transferability of the SAMP framework to broader classes of combinatorial problems and its potential to catalyze practical quantum advantage on near-term NISQ hardware, followed by an outlook on future research directions.

\section{Problem reduction and statistical analysis}\label{sec_reduction}
% 这一部分将k-SAT规约到max-k-SSAT，并对max-k-SSAT的问题哈密顿量进行统计分析

% 先来个总起
In this section, we establish the mathematical framework for evaluating QAOA on the typical-case instances of $k$-SAT. To bridge the gap between the Boolean decision problem and a variational quantum optimization task, we introduce a restricted optimization variant termed max-$k$-SSAT. Crucially, max-$k$-SSAT preserves the fundamental hardness of $k$-SAT, as its efficient solvability directly implies a solution to the standard decision problem. Our objective is to characterize the statistical properties of the problem Hamiltonian under ensembles that represent typical satisfiable instances, thereby providing a rigorous foundation for the efficacy guarantees presented in later sections.
% 前者作为最终证明目标，后者作为理论分析的入手点。然后就以后者为基础，讨论了特性。
\subsection{Problem reduction} 
\label{subsec_reduction}% : from \textit{k}-SAT to max-\textit{k}-SSAT

The \textit{Boolean satisfiability} (SAT) problem asks whether a given Boolean formula admits an interpretation that evaluates to true. In the context of $k$-SAT, the formula is expressed in conjunctive normal form (CNF), consisting of a conjunction of $m$ clauses, where each clause is a disjunction of exactly $k$ literals. While $k$-SAT is the prototypical NP-complete problem for $k \ge 3$~\cite{cook1971complexity, Karp1972}, this classification primarily characterizes its worst-case intractability. It does not imply that the problem is universally challenging; indeed, much of the practical interest lies in its average-case complexity under specific distributions, which often reveals a rich structure of typical-case hardness~\cite{Levin1986}. The details of average-case complexity theory is displayed in Appendix~\ref{apsubsec_average_case}. 

To address $k$-SAT within a variational quantum framework, it is natural to consider its optimization counterpart, max-\textit{k}-SAT. The goal here is to identify a variable assignment $x \in \{0,1\}^n$ that satisfies as many clauses as possible, thereby maximizing the objective function:
\begin{equation}
	f(x) = \sum_{\alpha=1}^m f_\alpha(x),
\end{equation}
where $\alpha$ indexes the clauses and $f_\alpha(x)$ is the characteristic function that yields $1$ if clause $\alpha$ is satisfied and $0$ otherwise. 

However, performing an unconstrained optimization over the entire instance space presents significant analytical and computational hurdles. While $k$-SAT and max-$k$-SAT are polynomially reducible to each other, the practical cost of these reductions is asymmetric. Solving $k$-SAT via a max-$k$-SAT oracle requires only a single query; conversely, solving max-$k$-SAT using a $k$-SAT oracle typically necessitates multiple iterative queries. 

Given that our goal is to establish a mechanistic foundation for algorithm design rather than addressing the most general optimization case, we shift our focus to a restricted variant termed max-\textit{k}-SSAT. This problem is confined to the subspace of satisfiable instances and aims to identify a satisfying assignment $t$ that effectively reaching the global maximum $f(t)=m$. The following lemma establishes that this restriction does not sacrifice the fundamental computational hardness of the $k$-SAT decision problem.

\begin{lem}\label{lem_problem_reduction}
	If max-\textit{k}-SSAT is solvable within a time complexity of $\mathcal{O}(f(n))$, then the $k$-SAT decision problem is also solvable within the same complexity magnitude.
\end{lem}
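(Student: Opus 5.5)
The plan is the standard ``run-and-verify'' reduction, which turns a promise-problem solver into a decision procedure. Suppose $\mathcal{A}$ solves max-\textit{k}-SSAT in time $\mathcal{O}(f(n))$. By definition, $\mathcal{A}$ is only guaranteed to behave correctly on satisfiable inputs, where it returns an assignment $t$ with $f(t)=m$. On unsatisfiable inputs its behaviour is unconstrained: it may return garbage, an arbitrary string, or fail to halt. Given an arbitrary $k$-SAT instance $\Phi$ with $n$ variables and $m$ clauses, the decision procedure does the following.
\begin{enumerate}
\item Run $\mathcal{A}$ on $\Phi$, with a clock set to $c\,f(n)$, where $c$ is the constant hidden in the $\mathcal{O}(f(n))$ bound. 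If the clock expires, or the output is not a well-formed string in $\{0,1\}^n$, answer ``unsatisfiable.''
\item Otherwise, take the returned assignment $x$ and evaluate $f(x)=\sum_{\alpha} f_\alpha(x)$ directly. Answer ``satisfiable'' if and only if $f(x)=m$.
\end{enumerate}

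Correctness splits into two cases. If $\Phi$ is satisfiable, it is a valid max-\textit{k}-SSAT instance. So $\mathcal{A}$ halts within the clock and returns $t$ with $f(t)=m$, and the check accepts. If $\Phi$ is unsatisfiable, no $x\in\{0,1\}^n$ has $f(x)=m$. So whatever $\mathcal{A}$ outputs, or if it times out, the procedure answers ``unsatisfiable.'' There are therefore no false positives, because the verification step certifies any positive answer. If $\mathcal{A}$ is probabilistic, succeeding with probability $1-\delta$, the same argument gives one-sided error with the same success probability. A constant number of repetitions boosts this without changing the asymptotic cost.

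The cost is $\mathcal{O}(f(n))$ for the simulation with the clock, plus $\mathcal{O}(km)$ for evaluating the $m$ clauses of width $k$. The main point to handle carefully is arguing that $\mathcal{O}(km)$ is absorbed into $\mathcal{O}(f(n))$. I would argue this as follows. Any algorithm solving max-\textit{k}-SSAT must at least read its input of size $\Theta(km\log n)$, since the output depends on every clause. Hence $f(n)=\Omega(m)$ for fixed $k$, and the verification overhead does not change the order of magnitude. For the random ensembles considered later, with $m=\mathcal{O}(n^{1+\epsilon})$, this is immediate.

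A secondary subtlety is that the clock requires $c\,f(n)$ to be computable within $\mathcal{O}(f(n))$ time. I would assume $f$ is time-constructible, as is standard. Alternatively, the clock can be dropped whenever $\mathcal{A}$ is known to halt on all inputs.
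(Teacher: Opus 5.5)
Your run-under-a-clock-then-verify construction is exactly the paper's proof, with the same two-case correctness argument: on a satisfiable instance $\mathcal{A}$ returns an assignment that passes the check, and on an unsatisfiable one no output can pass. Your remarks on time-constructibility, randomized solvers, and absorbing the $\mathcal{O}(km)$ verification cost go beyond the paper, which simply asserts that verification is efficient; the one caveat is that your input-reading argument for $f(n)=\Omega(m)$ is not immediate for a promise problem (altering an unread clause may break the satisfiability promise), so it is cleaner to state $f(n)=\Omega(m)$ as an explicit assumption.
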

\begin{proof}
	Let $U$ denote the set of all $k$-SAT instances on $n$ variables, which can be partitioned into two disjoint subsets: $U_s$ (satisfiable instances) and $U_u$ (unsatisfiable instances). Suppose there exists an algorithm $\mathcal{A}$ that, for any instance $I \in U_s$, returns a satisfying assignment within time $\mathcal{O}(f(n))$.
	
	We construct a decision algorithm $\mathcal{A}'$ for $k$-SAT as follows. Given an arbitrary instance $I \in U$, we execute $\mathcal{A}$ for at most $\mathcal{O}(f(n))$ steps. If $\mathcal{A}$ terminates and outputs an assignment $x$ within this bound, we verify whether $x$ satisfies $I$. If the verification is successful, $\mathcal{A}'$ concludes that $I \in U_s$. If $\mathcal{A}$ fails to provide an assignment within the time limit, or if the output fails verification, $\mathcal{A}'$ concludes that $I \in U_u$. 
	
	The correctness of $\mathcal{A}'$ follows directly from the properties of $U_s$ and $U_u$. If $I \in U_s$, $\mathcal{A}$ is guaranteed by assumption to return a satisfying assignment within $\mathcal{O}(f(n))$, which will pass verification. Conversely, if $I \in U_u$, no assignment can satisfy $I$; thus, any output from $\mathcal{A}$ (or its failure to terminate) will fail to satisfy the formula, leading $\mathcal{A}'$ to correctly output that the instance is unsatisfiable. Since the verification step is efficient, the total complexity of $\mathcal{A}'$ remains $\mathcal{O}(f(n))$.
\end{proof}

Based on this, the remainder of our theoretical investigation and algorithmic development will focus on random max-\textit{k}-SSAT. To evaluate its performance in typical-case scenarios, we now proceed to characterize the statistical properties of the problem instances.

\subsection{Random models and statistical analysis}
\label{subsec_statistical_analysis}

To evaluate the performance of QAOA on typical instances, we utilize the framework of average-case computational complexity theory~\cite{Levin1986}; the foundational details of this theory are presented in Appendix~\ref{apsubsec_average_case}. For our analysis, we employ concrete random ensembles, starting with the standard random $k$-SAT model $F(n,m,k)$. In this ensemble, an instance on $n$ variables is generated by uniformly and independently selecting $m$ clauses (with replacement) from the set of all $2^k C_n^k$ possible clauses~\cite{Achlioptas2006}, where $C_n^k$ denotes the binomial coefficient $\binom{n}{k}$. To align with the max-$k$-SSAT formulation, we consider two specific derivative models.

The first is the $F_s(n,m,k)$ model, a satisfiable-restricted ensemble derived from $F(n,m,k)$ by conditioning on satisfiability (i.e., through rejection sampling). This model represents the typical instances encountered in practical optimization and serves as our primary target for evaluating algorithmic efficacy. The second is the $F_f(n,m,k)$ model, a fixed-assignment proxy ensemble where a specific satisfying interpretation $t_0$ is fixed a priori, and clauses are sampled only from the $2^k-1$ configurations satisfied by $t_0$.

The $F_f$ model provides a mathematically tractable starting point for our analysis, as its constructive nature allows for a rigorous derivation of the Hamiltonian's spectral properties. While $F_s$ remains our ultimate target for performance guarantees, we leverage the $F_f$ model to extract the instance-independent statistical characteristics that underpin our parameter-setting strategy. In the following, we proceed to characterize the problem Hamiltonian under these random ensembles to establish a basis for the subsequent analysis.

Based on the $F_f(n,m,k)$ model, we analyze the statistical properties of the problem Hamiltonian $H_C$. In this ensemble, each clause $\alpha$ is sampled independently from the set of clauses satisfied by a fixed assignment $t$. Consequently, the Hamiltonian contribution of each clause, $H_\alpha$, can be treated as a random diagonal operator. Its eigenvalue $E_{\alpha,x}$ at basis state $\left| x \right\rangle$ is a Bernoulli random variable, where $E_{\alpha,x} = 1$ if $x$ satisfies $\alpha$, and $0$ otherwise. 

By combinatorial counting of the clause space, the mean and variance of $E_{\alpha,x}$ are derived as:
\begin{align}
	& {{\mu }_{k,x}}=\frac{{{2}^{k}}-2}{{{2}^{k}}-1}+\frac{C_{l}^{k}}{\left( {{2}^{k}}-1 \right)C_{n}^{k}}\le 1, \nonumber\\
	&  \sigma _{k,x}^{2}={{\left( 1-{{\mu }_{k,x}} \right)}^{2}}{{\mu }_{k,x}}+\frac{\mu _{k,x}^{2}\left( C_{n}^{k}-C_{l}^{k} \right)}{\left( {{2}^{k}}-1 \right)C_{n}^{k}}\le \frac{1}{{{2}^{k}}-1},
\end{align}
where $l = n - d_H(x,t)$ represents the number of bits where $x$ and $t$ agree ($d_H$ being the Hamming distance). 

Since $H_C = \sum_\alpha H_\alpha$, the total eigenvalues ${\mathcal{E}}_{k,x} = \sum_\alpha E_{\alpha, x}$ are the sum of $m$ i.i.d. variables. According to the Central Limit Theorem, for large $m$, the scaled eigenvalue ${\mathcal{E}}_{k,x}/m$ converges to a normal distribution:
\begin{equation}\label{eq_normal_distribution}
	\sqrt{m}\left( \frac{1}{m}{\mathcal{E}}_{k,x}-{{\mu }_{k,x}} \right)\sim N(0,\sigma _{k,x}^{2}).
\end{equation}

This statistical convergence reveals a profound property: as the clause density increases, the scaled operator $H_C/m$ concentrates around a deterministic ``standard form.'' To align this spectrum with the $k$-local search framework, we define the normalized problem Hamiltonian as:
\begin{equation}\label{eq_bar_HC}
	\bar{H}_C = \frac{(2^k-1)H_C}{m} - (2^k-2).
\end{equation}
The expectation value over the $F_f$ ensemble is thus
\begin{equation}\label{eq_normalized_HC}
	\mathbb{E}\!\left[\langle x \lvert\bar{H}_C\lvert x \rangle\right] = \frac{C_{l}^{k}}{C_{n}^{k}}.
\end{equation}
Remarkably, this expectation value exactly matches the Hamiltonian $H_k$ of the structured $k$-local search problem analyzed in Ref.~\cite{mine1}, which is defined by the objective function 
\begin{equation}\label{eq_k_local_search}
	f_k(x) = \frac{C_l^k}{C_n^k}.
\end{equation}

This correspondence suggests that a random max-$k$-SSAT instance can be viewed as a $k$-local search problem with ``missing'' clauses. This perspective provides a theoretical foothold for efficiently solving random max-$k$-SSAT instances under certain conditions. Furthermore, when employing variational methods, the optimal variational trajectory for a random instance is, on average, governed by the same adiabatic manifold as its structured counterpart. These two insights constitute the central points of our paper, which will be elaborated in detail in the following sections.

\section{Adiabatic origin of QAOA efficacy}\label{sec_theory}

This section aims to establish the theoretical foundation for the efficiency of QAOA applied to the random max-$k$-SSAT. Specifically, we first introduce the $k$-local quantum search algorithm recently proposed in Ref.~\cite{mine1}, which exhibits efficiency for max-$k$-SSAT instances with high clause density. However, its quantum gate complexity is substantial, scaling as $O(n^k + m)$. To reduce this complexity, we propose the universal-mixer variant of the $k$-local quantum search, and in Appendix~\ref{apsec_proof_universal_mixer}, we provide proofs that it preserves the efficiency of the original construction. Building on this foundation, we establish a connection between the adiabatic $k$-local quantum search and QAOA, allowing the latter to inherit the efficiency guarantees demonstrated for the former.

\subsection{\textit{k}-local quantum search}% and its adiabatic variant
\label{subsec_k_local_quantum_search}

To establish the theoretical foundation for our approach, we briefly review the \textit{k}-local quantum search algorithm introduced in Ref.~\cite{mine1}. This algorithm is designed to solve the \textit{k}-local search problem defined by the objective function $f_k(x)$ in Eq.~(\ref{eq_k_local_search}), and evolves according to
\begin{equation}\label{eq_k_local_circuit}
	|\psi\rangle = \left( e^{-i\theta H_{B,k}} e^{-i\theta H_k} \right)^p |+\rangle^{\otimes n},
\end{equation}
where $H_k$ encodes the objective function via $H_k |x\rangle = f_k(x) |x\rangle$, and $H_{B,k}$ is a $k$-local mixer Hamiltonian matched to the locality of $H_k$. Specifically, it is defined as $H_{B,k} = H^{\otimes n} H_{k,0} H^{\otimes n}$, where $H_{k,0}$ denotes the $k$-local Hamiltonian corresponding to the reference target $t=0$.

% 具体多少成功概率要讲清楚，这样阅读正文不需要去access appendices
For small constant $k$, this algorithm achieves efficient amplitude amplification with $p = \mathcal{O}(n^2)$ for an appropriate choice of $\theta = \Theta(n^{-1})$~\cite{mine1}. Moreover, this framework extends to random max-$k$-SSAT instances by replacing $H_k$ with the normalized Hamiltonian $\bar{H}_C$. Specifically, for any constant $\epsilon>0$ and sufficiently large $n$, the algorithm remains efficient with high probability $1-\mathcal{O}(\mathrm{erfc}(n^{\delta/2}))$ provided that the clause density satisfies $m = \Theta(n^{2+\epsilon+\delta})$ (see Ref.~\cite{mine1} and Appendix~\ref{apsubsubsec_circuit_k_local_QS} for details).

To improve robustness against deviations in the problem Hamiltonian, an adiabatic variant was introduced based on the interpolation
\begin{equation}\label{eq_Hamiltonian_AQS}
	\bar{H}_k(s) = sH_k + (1-s)H_{B,k}.
\end{equation}
Under a linear schedule $s = t/T$, the evolution remains efficient with total runtime $T = \mathcal{O}(n^2)$~\cite{mine1}. When applied to random max-$k$-SSAT instances, replacing $H_k$ by $\bar{H}_C$ yields the interpolation 
\begin{equation}
	H_k(s) = s\bar{H}_C + (1-s)H_{B,k}.
\end{equation} 
In this setting, the system is initialized in the uniform superposition $|+\rangle^{\otimes n}$, which is the highest-energy eigenstate of $H_{B,k}$, and the adiabatic evolution reduces the required clause density to $m = \Theta(n^{1+\epsilon+\delta})$, while achieving the same asymptotic success probability (see also Appendix~\ref{apsubsubsec_circuit_adiabatic_k_local_QS} for details).

Conceptually, the \textit{k}-local quantum search can be viewed as a structured generalization of unstructured quantum search. In particular, when $k=n$ and $\theta=\pi$, Eq.~(\ref{eq_k_local_circuit}) reduces to Grover's search in its circuit form, while the corresponding adiabatic formulation recovers the adiabatic Grover algorithm~\cite{Roland2002}. From a complementary perspective, the mixer Hamiltonian $H_{B,k}$ admits a natural interpretation as a $k$-local generalization of the standard transverse-field operator. Notably, when $k=1$, $H_{B,1}$ reduces to the conventional mixer $H_B = \sum_j \sigma_j^x$ (up to normalization and a global phase). 

\subsection{Universal-mixer \textit{k}-local quantum search}
\label{subsec_universal_mixer_QS}

The adiabatic $k$-local quantum search algorithm introduced in the previous section demonstrates impressive efficiency in theory. However, its practical implementation in the NISQ regime is severely limited by circuit depth. Each iteration of the algorithm incurs a cost of $O(m + n^k)$, and the number of iterations required scales as $\Theta(n^2)$, making direct implementation infeasible for realistic system sizes. This limitation motivates the exploration of alternative approaches.

For small constant $k$, the difference between the original $k$-local mixer $H_{B,k}$ and the simpler $1$-local mixer $H_{B,1}$ is effectively bounded in operator norm. This observation motivates a natural generalization of the algorithm, in which the $k$-local mixer is replaced by the $1$-local mixer, yielding a universal-mixer variant:
\begin{equation}\label{eq_universal_mixer_circuit}
	\ket{\psi} = \left( e^{-i\theta H_{B,1}} e^{-i\theta H_k} \right)^p \ket{+}^{\otimes n}.
\end{equation}
By removing the explicit locality matching in the mixer, this modification brings the circuit structurally closer to the standard QAOA ansatz, which employs a fixed, problem-independent mixer. 

% 说明应用方式，以及性能的具体证明
When applied to the max-$k$-SSAT problem, the $k$-local Hamiltonian $H_k$ can be replaced by the corresponding problem Hamiltonian $\bar{H}_C$. Under this setting, the quantum gate complexity per iteration is reduced to $O(m + n)$. Regarding algorithmic performance, Appendix~\ref{apsubsec_proof_search_on_search} and \ref{apsubsec_proof_search_on_SAT} provide detailed proofs that this universal-mixer variant preserves the fundamental structure of the original $k$-local quantum search algorithm. In particular, for the max-$k$-SSAT problem, the algorithm remains efficient with high probability $1 - \mathcal{O}(\mathrm{erfc}(n^{\delta/2}))$, provided that the clause density scales as $m = \Theta(n^{2+\epsilon+\delta})$.

% 接着就是引入绝热部分的讨论
% 这里就介绍绝热的基础知识，因为没有单开，之后也要用到
To further relax the clause-density requirement, we introduce an adiabatic variant of the universal-mixer construction. The time-dependent system Hamiltonian is defined as
\begin{equation}
	\bar{H}(s) = s H_k + (1-s) H_{B,1}.
\end{equation}
For a successful adiabatic evolution, the total runtime $T$ must satisfy
\begin{equation}\label{eq_adiabatic_condition}
	T \gg \frac{\varepsilon_0}{g_0^2},
\end{equation}
where $g_0 = \min_s g_k(s)$ is the minimum spectral gap along the interpolation. The gap $g_k(s)$ is the difference between the largest and second-largest eigenvalues of $\mathcal{H}_k(s)$, reflecting the maximization setting of the objective function $f_k(x)$. The quantity $\varepsilon_0$ characterizes the maximum transition amplitude induced by the variation of the Hamiltonian along the interpolation, and is given by
\begin{equation}\label{eq_T_varepsilon}
	\varepsilon_0 = \max_s \left| \left\langle \psi_0(s) \middle| \frac{d}{ds} \mathcal{H}_k(s) \middle| \psi_1(s) \right\rangle \right|,
\end{equation}
where $|\psi_0(s)\rangle$ and $|\psi_1(s)\rangle$ denote the instantaneous eigenstates associated with the largest and second-largest eigenvalues, respectively.

Under a linear schedule $s = t/T$, the adiabatic condition implies that the runtime scales as $T = \mathcal{O}(g_0^{-2})$. For the \textit{k}-local search problem, we show that the minimum spectral gap satisfies $g_0 = \Theta(n^{-1})$, which leads to an overall runtime $T = \mathcal{O}(n^2)$. This analysis extends to random max-$k$-SSAT instances by replacing $H_k$ with the normalized problem Hamiltonian $\bar{H}_C$, yielding
\begin{equation}
	H(s) = s \bar{H}_C + (1-s) H_{B,1}.
\end{equation}
As shown in Appendix~\ref{apsubsec_proof_adiabatic_on_SAT}, for any constant $\epsilon>0$ and sufficiently large $n$, the adiabatic evolution remains effective at clause density $m = \Theta(n^{1+\epsilon+\delta})$, achieving a success probability of $1-\mathcal{O}(\mathrm{erfc}(n^{\delta/2}))$.

\subsection{Connecting QAOA with adiabatic \textit{k}-local quantum search}
\label{subsec_connecting_qaoa}

To implement the adiabatic $k$-local quantum search within a circuit model, we discretize the dynamics using a Trotter--Suzuki decomposition. By adopting a linear schedule $s = t/T$ and dividing the total runtime $T$ into $p$ steps of duration $\Delta t = T/p$, the resulting state can be written as
\begin{equation}\label{eq_trotter_adiabatic_search}
	\ket{\psi} = \prod_{d=1}^p \left( e^{-i (p-d)\Delta t\, H_{B,1}} \; e^{-i d \Delta t\, \bar{H}_C} \right) \ket{+}^{\otimes n}.
\end{equation}
This discretized circuit corresponds to the QAOA ansatz with a fixed parameter schedule determined by the adiabatic path, providing a rigorous theoretical baseline for QAOA on max-$k$-SSAT at quadratic depth.

 % 首先说明变分化的第一个初衷，维持绝热但降低复杂度，我们可以得到第一个基准线，最优绝热路径，维持绝热情况下的参数，这里不知道我应该怎么说
However, adopting a uniform linear schedule does not fully exploit the local structure of the adiabatic evolution, and may therefore require an unnecessarily large number of iterations. To address this issue, we consider a segmented adiabatic evolution scheme, in which the full evolution path is divided into multiple intervals, $\{H(s_0), H(s_1), \ldots, H(s_p)\}$. Within each segment $(H(s_{d-1}), H(s_d)]$, the adiabatic evolution is treated independently.

Unlike the global evolution scheme, where the total runtime is constrained by the overall minimum spectral gap $g_0$, the segmented formulation allows the evolution time in each interval to depend only on the local minimum gap $g(s)$ and the corresponding transition matrix element $\varepsilon_0$ within that segment, according to the adiabatic condition in Eq.~\eqref{eq_adiabatic_condition}. Since many segments typically possess spectral gaps significantly larger than $g_0$, the required runtime for those intervals can be substantially reduced. Consequently, the total evolution time becomes shorter, leading to a corresponding reduction in the circuit depth after discretization.

If the number of segments is sufficiently large, each interval can be made correspondingly small. For a short segment $\left(s_d-\Delta s,s_d\right]$, the minimal evolution time required to maintain adiabaticity scales as $T_d \sim \varepsilon(s_d)g^{-2}(s_d).$ Taking the continuous limit $\Delta s \to 0$, the total evolution time over the entire interval $\left[0,1\right]$ satisfies
\begin{equation}\label{eq_evolution_time}
	T \sim \int_0^1 \varepsilon(s)g^{-2}(s)\,{\rm d}s.
\end{equation}
To adapt the evolution schedule to minimize computational resources, we promote the coefficients of the two Hamiltonians to independent variational parameters, yielding the circuit
\begin{equation}\label{eq_QAOA}
	\ket{\bm{\gamma}, \bm{\beta}} = \prod_{d=1}^p \left( e^{-i \beta_d \, H_{B,1}} \; e^{-i \gamma_d \, \bar{H}_C} \right) \ket{+}^{\otimes n},
\end{equation}
which coincides with the standard QAOA ansatz.

From this perspective, QAOA can be interpreted as a variational generalization of the Trotterized adiabatic $k$-local quantum search, where the fixed schedule $s=t/T$ is replaced by a flexible high-dimensional parameter sequence $(\bm{\gamma},\bm{\beta})$. These parameters are variationally optimized to maximize the objective expectation value
\begin{equation}\label{eq_QAOA_opt}
	E(\bm{\gamma}, \bm{\beta}) = \left\langle \bm{\gamma}, \bm{\beta} \right| H_C \left| \bm{\gamma}, \bm{\beta} \right\rangle,
\end{equation}
which may be viewed as variationally tracking the optimal adiabatic trajectory $(\boldsymbol{\gamma}^*,\boldsymbol{\beta}^*)$.

Although variational optimization allows the parameters to better conform to the adiabatic condition (see Eq.~\eqref{eq_QAOA_opt}) and thereby reduces the required depth, the scaling of the spectral gap $g_k(s) = \Theta(n^{-1})$ implies that maintaining strict adiabaticity still requires a depth of $p = \Omega(n^2)$. In the NISQ regime, practical implementations are typically restricted to much shallower circuits, with depth $p = \mathcal{O}(n)$. In such shallow-depth regimes, the adiabatic condition is inevitably violated. Understanding how to preserve algorithmic performance under these compressed-depth constraints is therefore a central focus of the following discussion.

\section{Smooth adiabatic-manifold parameterization}\label{sec_SAMP}

In the previous section, we established the effectiveness of QAOA on random max-$k$-SSAT by connecting it to the adiabatic $k$-local quantum search framework. However, practical implementations in the NISQ regime are typically restricted to much shallower circuits with depth $p=\mathcal{O}(n)$. This section focuses on such situations to understand how the effectiveness inherited from the adiabatic framework can persist under such substantial depth compression, and how the corresponding variational parameters can be optimized efficiently.

To this end, we first study how the compressed adiabatic evolution induces an effective variational structure in the QAOA parameter space, leading to the emergence of a low-dimensional manifold that preserves essential adiabatic characteristics. Building on this observation, we further analyze the geometric and statistical properties of the resulting adiabatic manifold on random max-$k$-SSAT instances. Motivated by the smoothness and stability of this structure, we then introduce an adiabatic-manifold-based reparameterization scheme together with a hierarchical refinement strategy, aiming to reduce the optimization complexity while maintaining algorithmic performance.

%%%%% 这部分还缺少的内容 %%%%%
% 给出这个流形存在的理论证明，比如设计压缩深度的搜索
%\subsection{Convergence and gradient analysis:} 
% 展示在该流形上的梯度行为，证明避开了贫瘠高原。
%\subsection{Sublinear complexity analysis:}
% 从理论上解释为什么自由度限制导致了亚线性优化成本。

\subsection{From adiabatic baseline to variational manifold}
\label{subsec_variational_manifold}

This section aims to explore the origin of algorithmic effectiveness on random max-$k$-SSAT under the constraint of circuit depth $p = \mathcal{O}(n)$. Recalling from Sec.~\ref{subsec_statistical_analysis} that random max-$k$-SSAT instances converge to the $k$-local search problem, we first examine the algorithm's performance on this idealized problem. Under this depth constraint, the known algorithm corresponds to the $k$-local quantum search (see Eq.~\eqref{eq_universal_mixer_circuit}). While this approach is effective for the ideal $k$-local search, our analysis indicates that it exhibits poor robustness against the perturbations introduced by max-$k$-SSAT, making practical deployment challenging.

Given these limitations, we turn our attention to the adiabatic $k$-local quantum search. As discussed in Sec.~\ref{sec_theory}, the original adiabatic protocol requires a depth of $\Theta(n^2)$. Compressing the circuit depth to $p = \mathcal{O}(n)$ inevitably leads to diabatic transitions to excited states—a phenomenon referred to as \textit{adiabatic leakage}. Understanding how to mitigate the impact of such leakage is crucial for preserving algorithmic performance. As a starting point, we consider the simplest approach: retaining the linear schedule $s = t/T$ while compressing the Trotter discretization to $O(n)$ steps.

When the Trotter discretization of the adiabatic evolution is compressed from $O(n^2)$ to $O(n)$ steps, the effective evolution time per layer must be increased. To accommodate this compressed circuit depth, we introduce a tunable parameter $\rho$ that scales the Trotter time steps. Under this formulation, the circuit is expressed as
\begin{equation}\label{eq_Trotterized_QS}
	\ket{\psi} = \prod_{d=1}^p \left( e^{-2i \frac{n+1-d}{p+1}\, \rho \pi \, H_{B,1}} \; e^{-2i \frac{d}{p+1}\, \rho \pi \, \bar{H}_C} \right) \ket{+}^{\otimes n},
\end{equation}
where the inclusion of $n+1$ ensures that the boundary terms do not render the evolution trivial, thereby preserving as much of the dynamics as possible within the constrained depth.

In this regime, the adiabatic condition is violated, and the characteristics of the $k$-local quantum search become prominent. Specifically, each Trotterized iteration in Eq.~\eqref{eq_Trotterized_QS} corresponds to the system Hamiltonian $\mathcal{H}_k(s)$ interpolated at a particular value of $s$. According to Lemma~\ref{lemma_gap}, the spectral gap of $\mathcal{H}_k(s)$ generally scales as $\Theta(n^{-1})$, implying that the gap of each iteration Hamiltonian is also $\Theta(n^{-1})$. With $O(n)$ iterations, the highest and next-highest energy states remain sufficiently well-separated, and combined with the proofs in Appendix~\ref{apsubsec_proof_search_on_search} on $k$-local quantum search, this ensures the algorithm retains its search capability.

This observation indicates that the algorithm defined in Eq.~\eqref{eq_Trotterized_QS} exhibits a hybrid behavior, blending both search and adiabatic properties. While this allows the algorithm to operate with only $O(n)$ iterations and provides some robustness against the perturbations introduced by max-$k$-SSAT, the precise degree of this robustness remains unclear. To begin addressing the former, we examine its performance on the $k$-local search problem under $O(n)$ depth. Using the circuit in Eq.~\eqref{eq_Trotterized_QS}, we simulate 3-local search instances with $n=16, 18, 20$ and $p=n$. The resulting amplitudes of the target state $\ket{t}$ for various choices of $\rho$ are shown in Tab.~\ref{tab_Trotterized_QS}.

\begin{table}[h]
	\centering\footnotesize
	\caption{\small Target state amplitudes for selected parameters $\rho$ after applying Eq.~\eqref{eq_Trotterized_QS} to 3-local search instances with $n=16,18,20$ and $p=n$.}
	\label{tab_Trotterized_QS}
	\tabcolsep 2.75pt 
	\begin{tabular*}{0.43\textwidth}{ccccccccc}
		\toprule
		$n \backslash \rho$ & 1 & 2 & 3 & 4 & 5 & 6 & 7 & 8 \\
		\bottomrule\addlinespace[0.3em]
		16 & 0.090 & 0.577 & 0.887 & 0.922 & 0.614 & 0.193 & 0.118 & 0.103 \\
		18 & 0.066 & 0.524 & 0.848 & 0.877 & 0.799 & 0.519 & 0.056 & 0.097 \\
		20 & 0.048 & 0.475 & 0.806 & 0.851 & 0.790 & 0.584 & 0.244 & 0.022 \\
		\bottomrule
	\end{tabular*}
\end{table}

The simulation results indicate that as $\rho$ increases, the amplitude of the target state $\ket{t}$ exhibits fluctuations within a certain range, following an overall trend of initially rising and then falling, reaching a pronounced peak. For instance, when $n=20$ and $\rho=7.5$, the amplitude reaches 0.918. Similar behavior is observed when the circuit depth $p$ is varied, with the corresponding optimal value of $\rho$ shifting accordingly. 

These observations suggest the existence of a variational manifold: compressing or stretching the evolution does not eliminate it, but merely scales it along with the Trotter discretization. The key question is whether this manifold retains sufficient adiabatic character to preserve algorithmic effectiveness in the presence of perturbations introduced by the max-$k$-SSAT problem.This issue will be examined in detail in the following subsection.

%These observations suggest that the parameter space is not unstructured, but instead contains a low-dimensional and smoothly connected effective region. In particular, the concentration of high-performance parameters near the diagonal indicates that the algorithm remains sensitive to a balance between the mixer Hamiltonian and the problem Hamiltonian, even in the compressed-depth regime.

\subsection{Emergence of adiabatic manifold}
\label{subsec_adiabatic_manifold}

Building on the variational manifold introduced in the previous subsection, we further investigate its behavior on random max-$k$-SSAT instances. To facilitate this analysis, we introduce additional variational degrees of freedom into the system Hamiltonian while retaining the linear schedule $s = t/T$. The resulting parameterized Hamiltonian is defined as
\begin{equation}\label{eq_paraform_two}
	H(t) = \frac{t}{T} f_\gamma \, \bar{H}_C + \frac{T-t}{T} f_\beta \, H_{B,1},
\end{equation}
where $f_\gamma$ and $f_\beta$ are tunable parameters that independently control the relative strengths of $\bar{H}_C$ and $H_{B,1}$ throughout the evolution.

Using this parameterized Hamiltonian with Trotterization in the form of Eq.~\eqref{eq_Trotterized_QS}, we evaluate the algorithm on random instances of max-3-SSAT generated from the model $F_s(12, m, 3)$, with circuit depth set to $p = n$. The clause density $m$ is chosen to be linear in $n$ and close to the satisfiability threshold, corresponding to the hardest region of $k$-SAT (see Appendix~\ref{apsubsec_average_case}). We explore a range of values for the tunable parameters $(f_\gamma, f_\beta)$ between 1 and 7. For each parameter pair, the circuit is measured over 10,000 shots. The resulting probabilities of obtaining the target state $\ket{t}$ are summarized in the heatmap shown in Figure~\ref{fig_exp4ng}. Additional simulation results are provided in Appendix~\ref{apsubsec_extra_simulation}. 

\begin{figure}[!t]
	\centering
	\footnotesize
	{\includegraphics[width=0.9\linewidth]{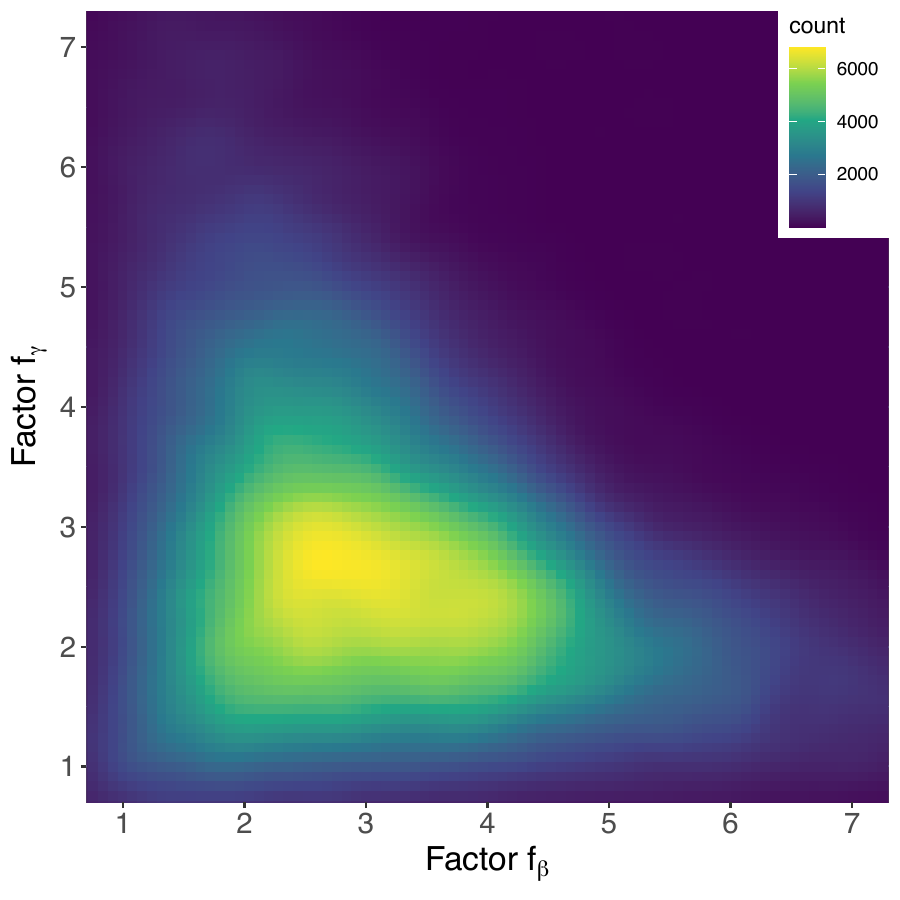}}
	\caption{\small Distribution of measurement counts for the target state $\ket{t}$. For each pair of $(f_\gamma, f_\beta)$, the final state is measured 10,000 times, and the count represents the number of times $\ket{t}$ is observed. The simulation is performed on a random instance from $F_s(12, m, 3)$, with the circuit depth set to $p = n$.} 
	%\hspace{0.2\textwidth}This value is directly proportional to the ground state success probability.
	\label{fig_exp4ng}
\end{figure}

In the simulations, the amplitude of the target state $\ket{t}$ reaches a pronounced peak and exhibits a characteristic behavior of first increasing and then decreasing as functions of both $f_\gamma$ and $f_\beta$. As the circuit depth $p$ varies, the shape and location of this high-performance region may shift, yet the peak structure persists provided that $p$ is not too small. This behavior closely resembles the phenomenon observed in the $k$-local quantum search problem, suggesting the existence of a variational manifold that preserves sufficient adiabatic character to suppress the perturbations introduced by max-$k$-SSAT. We refer to this structure as the \textit{Smooth Adiabatic Manifold}.

Furthermore, the heatmaps reveal a clear structural pattern in the parameter landscape. A pronounced increase of the target-state probability is observed near the diagonal direction $f_\gamma \approx f_\beta$, along which the performance rapidly approaches its optimum. In contrast, deviations away from this direction lead to noticeably flatter gradients and reduced success probability. Moreover, once $(f_\gamma,f_\beta)$ exceed a certain threshold, the high-performance region collapses abruptly into a nearly flat landscape with significantly degraded amplitudes.

This phenomenon is also consistent with the theoretical analysis. In the original $k$-local quantum search framework,  the symmetry between the mixer Hamiltonian and the problem Hamiltonian is evident. Although the universal-mixer formulation for max-$k$-SSAT breaks this symmetry, the spectral concentration property $\bar H_C \rightarrow H_k$ implies that the effective strengths of $\bar H_C$ and $H_{B,1}$ remain of the same order. Consequently, the optimal parameter region continues to align approximately along the symmetric direction $f_\gamma \approx f_\beta$, thereby preserving a smooth adiabatic structure in the variational landscape.

Motivated by this observation, it is natural to follow the form of Eq.~\eqref{eq_Trotterized_QS} by factoring out the overall energy scale (or equivalently the effective evolution time), while introducing a single parameter $\theta$ to control the relative contributions of the two Hamiltonians. Under this formulation, the variational quantum circuit ican be viewed as a Trotterized evolution generated by the effective system Hamiltonian
\begin{equation}\label{eq_H_theta_t}
	\bar{H}(\theta,t) = \frac{t}{T} \sin{\theta} \, \bar{H}_C + \frac{T-t}{T} \cos{\theta} \, \bar{H}_B,
\end{equation}
where the total evolution time is given by $T = 2p\rho\pi$, and the discretization interval is chosen as $\Delta t = 2\rho\pi$. Comparing this construction with a depth-$p$ QAOA circuit, the associated variational parameters $(\gamma_d,\beta_d)$ at layer $d$ ($1 \le d \le p$) are given by
\begin{equation}\label{eq_linear_para}
	\gamma_d = \frac{2 d \pi}{p+1} \, \rho \, \sin{\theta}, \qquad
	\beta_d = \frac{2 (p-d+1) \pi}{p+1} \, \rho \, \cos{\theta}.
\end{equation}

Based on this circuit construction, we further perform numerical experiments on a larger collection of random instances generated from the model $F_s(n,m,3)$. In particular, we investigate the regions in the parameter space $(\theta,\rho)$ that maximize the amplitude of the target state. The resulting distribution of the optimal parameters $(\theta^*,\rho^*)$ over random max-$k$-SSAT instances is illustrated in Figure~\ref{fig_exp4id}.

\begin{figure}[!t]
	\centering
	\footnotesize
	{\includegraphics[width=0.9\linewidth]{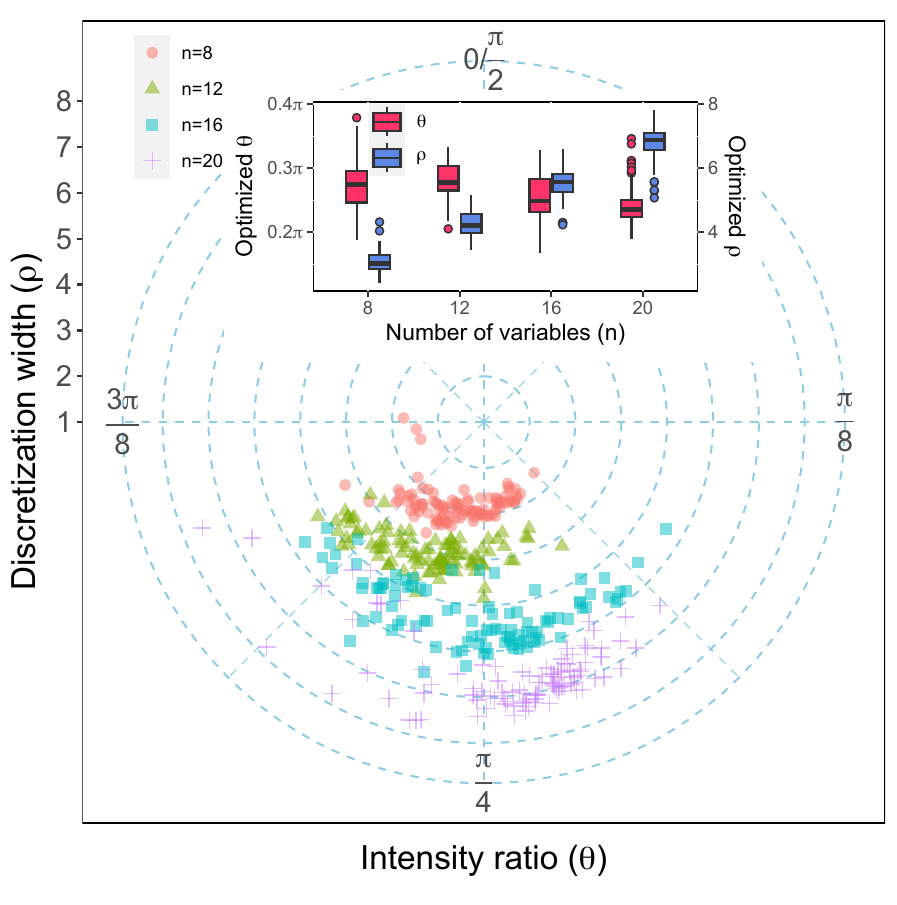}}
	\caption{\small Distribution of optimal $(\theta^*, \rho^*)$ for 100 random instances of $F_s(n, m, 3)$, visualized in polar coordinates. Results for $n = 8, 12, 16, 20$ are indicated by red circles, green triangles, blue squares, and purple crosses, respectively. The inner plot shows that $\theta^*$ values cluster around $\pi/4$, while $\rho^*$ increases approximately linearly with $n$.} %To clearly illustrate the underlying distribution, three idealized conditions are adopted that are not attainable in real applications: (i) the problem Hamiltonian $H_C$ is normalized using the exact $G_0$; (ii) the fidelity is used as the optimization objective; and (iii) the optimal parameters are determined via exhaustive search.
	\label{fig_exp4id}
\end{figure}

In the simulations, although the optimal value $\theta^*$ may deviate slightly from $\pi/4$, it consistently remains concentrated around this region. This observation further supports the existence of an approximately symmetric balance between the mixer Hamiltonian and the problem Hamiltonian within the effective variational manifold. For the parameter $\rho$, larger values correspond to longer effective evolution times within each discretized interval. Since QAOA circuits are typically implemented at relatively shallow depths, an approximately linear increase in $\rho$ effectively compensates for the compression of the adiabatic $k$-local quantum search evolution from the original $\Theta(n^2)$ depth scale down to $O(n)$.

These results provide further evidence that this variational manifold retains sufficient adiabatic character even under the perturbations introduced by max-$k$-SSAT. Meanwhile, the concentration of optimal parameters within a narrow and continuously connected region suggests that the corresponding variational landscape is not random or fragmented, but instead possesses a highly smooth and navigable structure.

The persistence of this manifold arises from a combination of the spectral concentration of the Hamiltonian discussed in Sec.~\ref{subsec_statistical_analysis} and the inherent effectiveness of the adiabatic $k$-local quantum search. In the remainder of this section, we focus on analyzing and exploiting this manifold, exploring strategies to maintain algorithmic performance while efficiently navigating the cost landscape of parameter optimization.

\subsection{Adiabatic-manifold-based reparameterization}\label{subsec_para_space}

The discussions in the previous subsections primarily focused on the existence and structural properties of the low-dimensional adiabatic manifold. However, in the NISQ regime, practical limitations such as further depth compression and hardware noise can significantly reduce the amplitude of the target state after the evolution. As a result, variational parameter optimization remains essential for maintaining algorithmic performance.

As analyzed in Sec.~\ref{subsec_connecting_qaoa}, the optimization of the variational parameters $\boldsymbol{\gamma}$ and $\boldsymbol{\beta}$ allows the discrete evolution to better approximate the optimal adiabatic trajectory $(\boldsymbol{\gamma}^*,\boldsymbol{\beta}^*)$. Moreover, when the circuit depth becomes significantly insufficient to satisfy the adiabatic condition, variational optimization can additionally act as a ``healing'' mechanism that partially compensates for the adiabatic leakage induced by finite-depth discretization, noise, and perturbations.

Nevertheless, directly optimizing all QAOA parameters in the original high-dimensional parameter space remains computationally expensive and often suffers from poor trainability. Motivated by the smoothness and concentration properties of the adiabatic manifold established in the previous subsections, we instead seek a reparameterization adapted to the manifold structure itself. The central idea is to transform the variational parameter space into a representation that is more naturally aligned with the effective low-dimensional geometry of the adiabatic manifold, thereby enabling more efficient optimization while preserving algorithmic performance.

To formalize this reparameterization, we interpret the discrete QAOA parameters as samples drawn from an underlying continuous adiabatic trajectory. According to the analysis of the optimal adiabatic trajectory in Sec.~\ref{subsec_connecting_qaoa}, the system Hamiltonian can be generalized as
\begin{equation}
	H(t) = f_C(t)\,\bar{H}_C + f_B(t)\,H_{B,1}.
\end{equation}
As the evolution proceeds, the system Hamiltonian evolves smoothly from the mixer Hamiltonian toward the problem Hamiltonian. The pair of functions $\bm{f}(t)=(f_C(t),f_B(t))$ therefore completely characterizes the adiabatic trajectory.

Importantly, the optimal adiabatic trajectory $\bm{f}^*(t)$ is expected to remain continuous with respect to time. Intuitively, if an ``optimal'' schedule contained discontinuities, one could always construct a smoother alternative through local interpolation, thereby reducing the magnitude of $\frac{{\rm d}H(t)}{{\rm d}t}$, which in turn relaxes the adiabatic constraint and decreases the required evolution time. Consequently, the effective variational manifold induced by the compressed adiabatic evolution is naturally expected to inherit a smooth geometric structure.

The remaining question is therefore how to parameterize such adiabatic trajectories in a manner that better aligns with the geometry of the smooth adiabatic manifold. Motivated by the previous analysis of the linear schedule described in Eq.~\eqref{eq_H_theta_t} and Eq.~\eqref{eq_linear_para}, where the manifold exhibits favorable structural properties, we replace the original parameter space $(\bm{\gamma},\bm{\beta})$ with an alternative representation $(\bm{\tau},\bm{\theta})$ that is more naturally adapted to the manifold geometry. After factoring out the global scaling component, the correspondence between the two parameterizations can be written as
\begin{equation}\label{eq_repara}
	\gamma_d=\frac{2\pi d}{p+1}\tau_d\sin{\theta_d}, \qquad
	\beta_d=\frac{2(p+1-d)\pi}{p+1}\tau_d\cos{\theta_d},
\end{equation}

Here, we present this reparameterization primarily from an intuitive and phenomenological perspective motivated by the observed manifold structure. A more detailed derivation and theoretical analysis of this parameterization framework are deferred to Appendix~\ref{apsec_analysis_am_repara}. 

Importantly, the two parameters admit clear physical interpretations. The angle parameter $\theta_d$ determines the relative strength between the normalized Hamiltonians $\bar{H}_C$ and $\bar{H}_B$, thereby characterizing the local evolution direction along the adiabatic manifold. Meanwhile, the scaling parameter $\tau_d$ controls both the effective discretization interval in the Trotterized adiabatic evolution and the total evolution time associated with each QAOA layer.

\subsection{Hierarchical refinement strategy}\label{subsec_para_AP}

Building on the manifold-based parameter space $(\bm{\tau},\bm{\theta})$ introduced in the previous subsection, we now investigate how the smooth adiabatic manifold can be further exploited for parameter initialization and optimization. The primary objective of this subsection is to reduce the complexity of variational parameter optimization while simultaneously improving the effectiveness and stability of the resulting QAOA evolution.

\subsubsection{Parameter initialization}

We first discuss the initialization strategy. Based on the observations in Sec.~\ref{subsec_adiabatic_manifold}, the linear-schedule regime exhibits favorable performance and therefore provides a natural starting point for parameter initialization. In particular, according to Fig.~\ref{fig_exp4ng}, one direct approach is to initialize the parameters within the high-performance region. However, the precise location of this optimal region generally shifts with the system size $n$, and therefore typically requires prior information, for example obtained from corresponding $k$-local quantum search or random instances.

Another more general consideration is to initialize in regions with sufficiently large gradients, which are more favorable for subsequent variational optimization. Since both the mixer Hamiltonian and the problem Hamiltonian are normalized, their effective strengths are expected to remain of comparable magnitude during the evolution. This directly motivates the symmetric choice $\theta=\frac{\pi}{4}$, which balances the contributions of the two Hamiltonians along the adiabatic trajectory.

The choice of $\rho$ can likewise be understood from the role of the problem Hamiltonian during the evolution. Specifically, the problem information enters the circuit through the unitary $e^{-i\gamma \bar{H}_C}$, which applies phase factors $e^{-i\gamma \bar{f}(x)}$ to the computational basis states $\ket{x}$, where $\bar{f}(x)$ denotes the normalized objective function. The target state is distinguished through its phase evolution relative to the remaining basis states. Consequently, maximizing distinguishability requires the target phase $e^{-i\gamma \bar{f}(x)}$ to remain sufficiently separated from the phases associated with non-target states. Due to the intrinsic $2\pi$ periodicity of quantum phases, it is therefore natural to constrain $\gamma \bar{f}(x)$ approximately within the interval $[0,2\pi)$. Under the adiabatic-manifold parameterization adopted here, this consideration naturally leads to the choice $\rho=\sqrt{2}$. Substituting these initialization parameters into Eq.~\eqref{eq_linear_para} then yields the corresponding initialization in the original parameter space $(\bm{\gamma},\bm{\beta})$.

\subsubsection{Parameter tuning}

By employing a carefully designed initialization, the resulting parameters can be conceptualized as a rough approximation of the adiabatic manifold under linear schedule. The goal of the parameter setting process is to reduce the deviation between the estimated path $\bm{f}(t)$ and the target $\bm{f}^*(t)$ by minimizing the discrepancy $\Delta \bm{f}(t)= \bm{f}(t) - \bm{f}^*(t)$. Specifically, we aim to minimize the norm $\left\| \Delta \bm{f}(t)\right\|$, where the norm is defined as the maximum absolute deviation, i.e., $\left\| g(x) \right\|= \left\| \max_x \{g(x)\}\right\|$, due to the continuity of the function.

As discussed in Sec.~\ref{subsec_adiabatic_manifold}, a substantial gradient is typically observed near the initialization, which effectively guides the parameter optimization process. Therefore, aside from the desired precision threshold $\epsilon$, the optimization cost is primarily determined by the initial deviation  $\left\| \Delta \bm{f}(t)\right\|$. If the initial deviation is bounded, the optimization is expected to converge within a constant number of gradient steps, with each gradient evaluation requiring $\mathcal{O}(p)$ circuit executions.

According to the distribution of the optimal parameters $(\theta^*, \rho^*)$ illustrated in Figure~\ref{fig_exp4id}, the deviation $\Delta \bm{f}(t)$ is generally bounded when the Hamiltonian is properly normalized. Furthermore, if the overall parameter magnitudes are scaled to be $\mathcal{O}(n)$, the optimization can proceed with constant gradient steps, regardless of problem size. As a result, in the idealized setting, the overall optimization cost becomes approximately linear in $p$, providing a baseline for further improvement.

Building on this, we consider a more structured parameter adjustment strategy inspired by the idea of progressive refinement. Since the parameterized adiabatic passage aims to use $p$ sampling points to approximate the continuous optimal path $\bm{f}^*(t)$, it is more practical to begin with a small number of points that capture the coarse structure of the path and then incrementally refine the approximation by increasing the resolution. This idea is conceptually similar to the FOURIER strategy proposed in Ref.~\cite{Zhou2020}.

Specifically, FOURIER introduces an indicator $q$ to represent the current number of sampling points and decomposes $\bm{f}(t)$ into $q$ components using a cos/sin Fourier transform. As $q$ increases linearly, the optimal parameters from the previous stage are reused as the initialization for the next. On average, this results in a per-iteration deviation $\left\| \Delta \bm{f}_{d}(t)\right\|$ for the $d$-th component of roughly $\left\| \Delta \bm{f}(t)\right\|/q$, thereby reducing the optimization cost at each step. However, since $q$ grows linearly, the total number of stages is $\mathcal{O}(n)$, resulting in an overall cost that remains superlinear.

To address this, we refine the strategy by doubling the number of sampling points at each iteration instead of increasing them linearly. In the $l$-th iteration, $2^l$ sampling points are used, and the initial parameters are obtained by interpolating those from the previous stage with $2^{l-1}$ points. Assuming sufficient continuity in the parameter landscape, the deviation $\left\| \Delta f_j(t)\right\|$ for the $j$-th interval (with $1 \le j \le 2^l$) can be approximately reduced to $\left\| \Delta f(t)\right\|/2^l$. As a result, only logarithmic iterations are required, potentially reducing the overall optimization cost to $\mathcal{O}(\log(p))$. The corresponding progressive refinement strategy is detailed in Algorithm~\ref{alg2}.

% explain about each step
In this algorithm, the interpolation routine $Interp(\bm{v}, L)$ is flexible; in this paper, we use cubic spline interpolation. In Line~1--2, the parameters $\theta_0 = \pi/4$ and $\tau_0 = \sqrt{2}$ are initialized and then optimized using a QAA-inspired setting to obtain quasi-optimal values $(\theta_0^*, \tau_0^*)$. In Line~3, the Hamiltonians $\bar{H}_C$ and $\bar{H}_B$ are rescaled according to these optimized parameters to ensure consistent parameter magnitudes across different resolutions. In Line~4, the number of interpolation stages is set as $T_u = \lfloor \log_2 p \rfloor$, and the parameter vectors are initialized as $\bm{\theta} \leftarrow (\pi/4)$ and $\bm{\tau} \leftarrow (1)$. In the while loop (Line~5 to Line~10), the number of sampling points is doubled by interpolating $\bm{\theta}$ and $\bm{\tau}$. These interpolated vectors are optimized by maximizing the expected value $\langle \bm{\theta}', \bm{\tau}' \mid H_C \mid \bm{\theta}', \bm{\tau}' \rangle$, where $\bm{\theta}', \bm{\tau}'$ are upsampled to full length $p$. The resulting parameters are used to initialize the next round, and the process continues until the final resolution $p$ is reached.

\begin{algorithm}[H]
	\caption{Hierarchical refinement strategy parameter tuning}
	\label{alg2}
	\small
	\begin{algorithmic}[0]
		\Require ~~\\
		Hamiltonian $\bar{H}_C$, $H_{B,1}$; 
		interpolation routine $Interp(\bm{v}, L)$ that interpolates vector $\bm{v}$ to length $L$
		\Ensure ~~\\
		Quasi-optimal parameters $(\bm{\theta}, \bm{\tau})$ for depth-$p$ QAOA
	\end{algorithmic}
	\begin{algorithmic}[1]
		\State Initialize $\theta_0 = \pi/4$, $\tau_0 = \sqrt{2}$
		\State Optimize $(\theta_0, \tau_0)$ using QAA-inspired setting to obtain $(\theta_0^*, \tau_0^*)$
		\State $\bar{H}_C \leftarrow \sqrt{2} \tau^*_0 \sin\theta^*_0 \cdot \bar{H}_C$, $H_{B,1} \leftarrow \sqrt{2} \tau^*_0 \cos\theta^*_0 \cdot H_{B,1}$
		\State $T_u \leftarrow \left\lfloor \log_2 p \right\rfloor$, $\bm{\theta} \leftarrow (\pi/4)$, $\bm{\tau} \leftarrow (1)$
		\While{$T_u > 0$}
		\State $T_u \leftarrow T_u - 1$
		\State $\bm{\theta} \leftarrow Interp(\bm{\theta}, \left\lceil p / 2^{T_u} \right\rceil)$, $\bm{\tau} \leftarrow Interp(\bm{\tau}, \left\lceil p / 2^{T_u} \right\rceil)$
		\State Construct $\bm{\theta}' = Interp(\bm{\theta}, p)$ and $\bm{\tau}' = Interp(\bm{\tau}, p)$
		\State Optimize $\bm{\theta}, \bm{\tau}$ to maximize $\left\langle \bm{\theta}', \bm{\tau}' \right| H_C \left| \bm{\theta}', \bm{\tau}' \right\rangle$
		\EndWhile
	\end{algorithmic}
\end{algorithm}

\section{Numerical results on 3-SAT}\label{sec_results}

This section presents a comparative analysis of the parameter optimization framework proposed in this work and several existing methods, including the TQA-based initialization from~\cite{Sack2021} and the FOURIER heuristics introduced in~\cite{Zhou2020}, with particular emphasis on the observed logarithmic scaling of the optimization cost.

To ensure fair and meaningful comparison, we adopt a standardized experimental setting across all methods. In particular, we fix the QAOA depth at $p = n$, where $n$ is the number of variables, and evaluate algorithmic performance using success probability and optimization cost. Optimization is performed using the BFGS algorithm, and random 3-SAT instances are generated near the satisfiability threshold. Details of the simulation setup, including circuit depth justification, instance generation protocol, evaluation metrics, and cost accounting, are provided in Appendix~\ref{app_exp}.

\subsection{Performance benchmarking}
\label{subsec_benchmarking}

% show the similarity of SAMP parameter setting with FOURIER heuristic strategy
The SAMP method optimizes $2p$ parameters and shares similarities with the FOURIER heuristic in its attempt to reduce optimization cost by exploiting parameter continuity. Specifically, the FOURIER method restricts the degrees of freedom in the parameter space to a value $q$, from which the full set of parameters is reconstructed via a cos/sin Fourier transform. Analogously, the SAMP method introduces an indicator $T_u$ to control the degrees of freedom and generates the full parameter set through interpolation. 

While both methods may produce parameter sets with limited expressivity if terminated early, for the purpose of fair performance comparison, we consider the fully optimized 2$p$-dimensional parameter space in both cases. The detailed steps of the FOURIER heuristic under this setting are provided in Appendix~\ref{app_FOUR}.

% show the difference and present the simulation
Notably, the FOURIER heuristic is primarily designed for settings without reliable initial parameters, often requiring $\mathcal{O}(n)$ iterations in its outer loop. In contrast, the SAMP method inherits the advantage of adiabatic manifold by initializing directly at full depth using a theoretically motivated setting. Moreover, it effectively transforms the parameter space $({\bm\gamma}, {\bm\beta})$ into a more continuous representation, which benefits the subsequent optimization. The TQA method also provides a high-quality initialization through pre-computation and can be directly used to optimize the full parameter set without any additional structure or strategy.

We compare the performance of these three approaches on 100 random instances drawn from $F_s(n,m_n^*,3)$. The results are shown in Figs.~\ref{fig_exp_com}. Figs.~\ref{fig_exp_com1a} and \ref{fig_exp_com1b} show the performance of the three methods on the same set of 100 random instances, evaluated in terms of success probability and optimization cost, respectively.

\begin{figure}[t]
	\centering
	\begin{subfigure}[t]{0.95\linewidth}
		\centering
		\includegraphics[width=\linewidth]{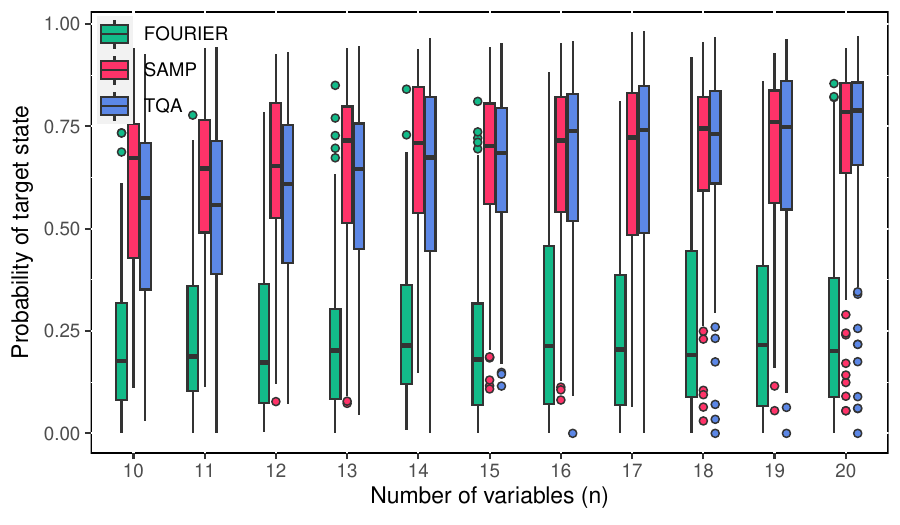}
		\caption{\small Distribution of success probabilities.}
		\label{fig_exp_com1a}
	\end{subfigure}
	\vspace{1ex}
	\begin{subfigure}[t]{0.95\linewidth}
		\centering
		\includegraphics[width=\linewidth]{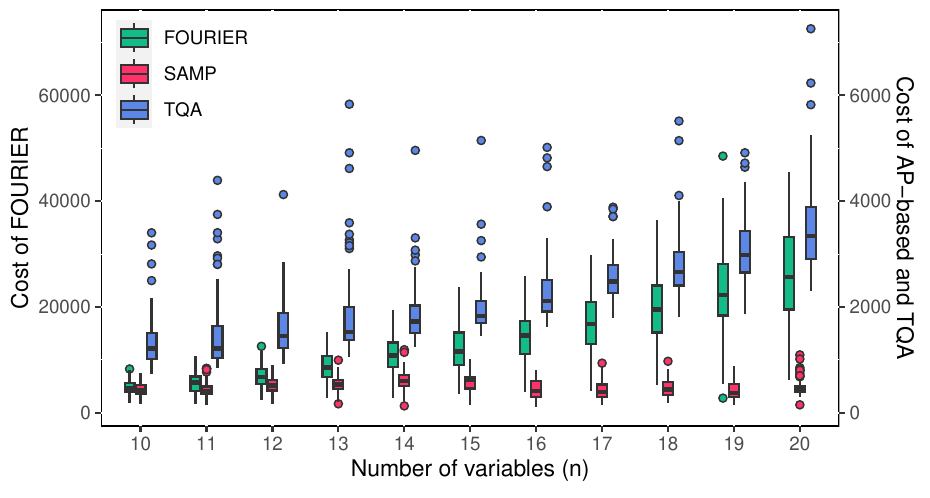}
		\caption{\small Distribution of optimization costs (dual y-axis).}
		\label{fig_exp_com1b}
	\end{subfigure}
	\caption{\small Performance comparison of the FOURIER heuristic (green), the SAMP method (red), and the TQA method (blue) over the same 100 random instances from $F_s(n, m_n^*, 3)$ with $10 \le n \le 20$.}
	\label{fig_exp_com}
\end{figure}

Distribution of optimization costs corresponding to the same 100 instances used in Fig.~\ref{fig_exp_com1a}. The results for the FOURIER heuristic, the SAMP method, and the TQA method are shown in green (left), red (middle), and blue (right), respectively. A dual y-axis is used to better reflect the differences in magnitude between methods.

% analysis of the res of probability
Compared to the FOURIER heuristic, both the SAMP and TQA methods demonstrate superior performance in terms of success probability. The strong results reported for FOURIER on Max-Cut problems in Ref.~\cite{Zhou2020} highlight the critical role of Hamiltonian normalization and parameter initialization---especially for problems like 3-SAT, where the magnitude of the problem Hamiltonian increases rapidly. In the current setting, parameters are initialized only once, which leads to poor performance for heuristic strategies such as FOURIER that lack access to a known feasible starting point. This often results in the optimization process being trapped in local optima.

% analysis of the res of optimization cost
In terms of optimization cost, the FOURIER heuristic exhibits growth slightly exceeding $\mathcal{O}(n^2)$, with average costs of 4853.65 at $n=10$ and 26121.76 at $n=20$. This high cost arises from two factors: the linear increase in circuit depth $q$ and the gradual expansion of the parameter space from 1 to $q$ degrees of freedom. In contrast, the TQA method benefits from a reliable initialization obtained through pre-computation, allowing the elimination of the outer optimization loop and yielding costs slightly above $\mathcal{O}(n)$ (with averages of 1326.99 at $n=10$ and 3523.13 at $n=20$). Notably, the SAMP method achieves even lower costs, approaching sublinear scaling. A more detailed discussion on optimization cost is provided in Sec.~\ref{subsec_perform}.

%This subsection displays more simulation results on general instances 
\subsection{Scaling of optimization cost}\label{subsec_perform}

The average optimization costs of the SAMPmethod for $4 \le n \le 20$ are summarized in Tab.~\ref{tab_count}. As the number of variables $n$ increases, both the circuit depth $p$ and the number of parameters $2p$ naturally grow. Correspondingly, the optimization cost also increases, following an overall sublinear trend—rising from 300.5 at $n=4$ to 546.3 at $n=20$. However, it is worth noting that this growth is not strictly monotonic in the simulations. Instead, the cost tends to increase within certain intervals and exhibits fluctuations when transitioning between them.

\begin{table}[!t] % zhe last three data is not replaced 
	\footnotesize
	\caption{\small The average optimization cost of the SAMP method over 1000 random instances in $F_s(n, m_n^*, 3)$ for $4 \le n \le 20$. The range of $n$ is partitioned into three intervals---$4 \le n \le 7$, $8 \le n \le 15$, and $16 \le n \le 20$---based on the logarithmic indicator $\left\lfloor \log_2{p} \right\rfloor$. }
	\label{tab_count}
	\tabcolsep 2.75pt %space between two columns. ÓÃÓÚµ÷ÕûÁÐ¼ä¾à
	\begin{tabular*}{0.44\textwidth}{ccccccccc}
		\toprule
		$n$ & [4 & 5 & 6 & 7] & [8& 9 & 10 & 11 \\\hline
		cost & $300.5$& $342.9$& $379.5$& $410.6$& $430.4$& $440.3$& $445.0$ & $446.4$\\
		%Harder & Bbb & Ccc & Ddd & 10 & 11 & 12 & 16 & 17 & 18 & 19 & 20\\
		\bottomrule
		%\toprule
		12 & 13 & 14 & 15] & [16 & 17 & 18 & 19 & 20]\\\hline
		$532.2$& $544.3$& $586.0$& $612.2$& $524.9$& $535.7$& $573.5$& $531.7$& $546.3$\\
		%Harder & Bbb & Ccc & Ddd & 10 & 11 & 12 & 16 & 17 & 18 & 19 & 20\\
		\bottomrule
	\end{tabular*}
\end{table}

The performance of the SAMP method is closely related to the outer loop governed by the logarithmic indicator $0 < T_u \le l$, where $l = \left\lfloor \log_2{p} \right\rfloor$. This indicator determines when to increase the number of parameters. Accordingly, the range of variable size $n$ is divided into three intervals: $4 \le n \le 7$, $8 \le n \le 15$, and $16 \le n \le 20$, within which $l$ remains constant while the number of parameters grows gradually. When $n = 2^l$, the interpolation strategy is best aligned with the structure of the algorithm, ensuring a precise doubling of the parameter space in each iteration. As $n$ increases beyond these ideal points, interpolation becomes less effective due to reduced continuity between parameters, leading to increased optimization costs.

Empirical results support this trend: when $n = 4$, $8$, and $16$, the corresponding average optimization costs are 300.5, 430.4, and 524.9, respectively---demonstrating an approximately linear relationship with $l$. This observation suggests that, under this interpolation framework, the optimization cost is more closely correlated with $\log_2{p}$ than with $n$ or $p$ directly.

To further illustrate this dependence, consider the case $n = 16$. As the indicator $T_u$ decreases---effectively doubling the degrees of freedom in each iteration---the cumulative optimization costs are 40.7, 123.1, 294.7, 448.6, and 524.9. These results reveal a near-linear relationship with $l - T_u$, and notably, the final three values closely match the total costs observed for $n = 4$, $8$, and $16$. This consistency indicates that the optimization cost is predominantly determined by the degrees of freedom in the parameter space, rather than directly by $n$ or $p$. Therefore, in practical applications, setting the degrees of freedom to $2^l$ or $2^{l+1}$ can yield near-optimal efficiency, even when $p \ne 2^l$. In such cases, the performance in terms of success probability remains largely unaffected due to the inherent continuity of the optimal adiabatic passage.

% analysis about the continuity of optimized parameters
This efficiency stems from the observed continuity in the optimized parameters, as illustrated in Fig.~\ref{fig_exp6para}. The parameters exhibit smooth profiles reminiscent of trigonometric functions, with $\theta_d$ and $\rho_d$ centered around $\pi/4$ and 1, respectively. This smoothness greatly reduces the variation in $g(s)$ during simulation, resulting in $\max_d{\tau_d}$ being no more than twice $\min_d{\tau_d}$. In contrast, for $g(s)$ itself, the maximum typically differs from the minimum by a much larger factor. These findings are consistent with the theoretical analysis in Sec~\ref{subsec_para_AP}, highlighting the importance of Hamiltonian normalization and parameter continuity. They also suggest that the Sin/Cos Fourier transform is particularly well-suited to exploiting the continuous structure of the parameter space $({\theta}, {\tau})$. While this work adopts a simple interpolation routine for convenience, the underlying continuity points to promising directions for more sophisticated heuristic parameterizations.

\begin{figure}[!t]
	\centering
	\footnotesize
	{\includegraphics[width=1\linewidth]{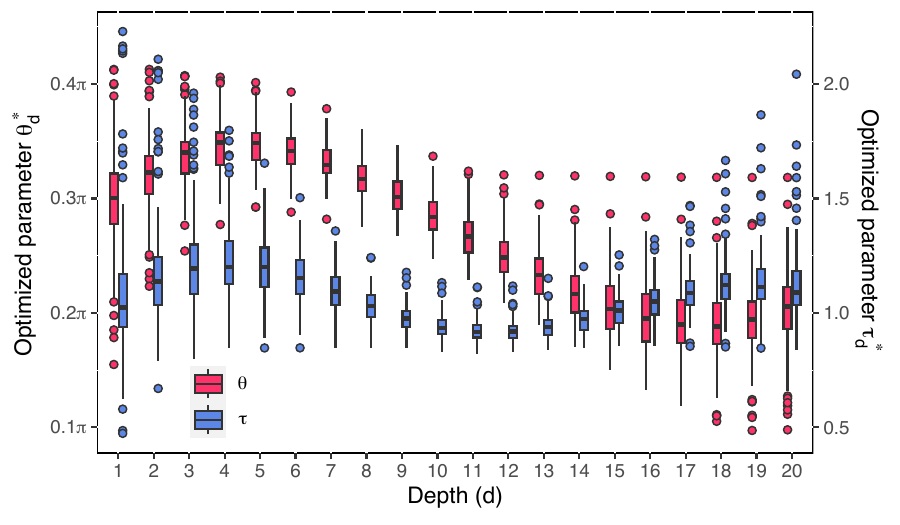}}
	\caption{\small The distribution of the optimized parameters ${\bm \theta}^*$ and ${\bm \tau}^*$ for 100 random instances in $F_s(20,m_n^*,3)$, presented in red on the left and blue on the right, respectively. The optimized parameters generally present superior continuity reminiscent of trigonometric functions, with even the outliers persisting very close to the adjacent parameters. }
	\label{fig_exp6para}
\end{figure}

% Finally give the results checking the running time of circuit. 
In practical implementations, the expectation value is estimated by executing the quantum circuit multiple times with the given parameters, typically requiring around 1000 shots. The complexity of a single circuit run scales as ${\mathcal O}(mp)$, depending on the cost Hamiltonian $H_C$ and the depth $p$. Although this may suggest a high total cost, the actual overhead is mitigated by the preprocessing and parameter initialization steps.

% 之后再改
%As shown in Fig.~\ref{fig_exp_com2a}, QAOA can often yield favorable performance with only a few circuit evaluations in the early stages. Nevertheless, effective parameter setting remains crucial, particularly due to the problem reduction when handling unsatisfiable instances. Given a time constraint of $T = {\mathcal O}({\rm Poly}(n))$ to terminate the optimization process, a well-designed parameter setting method can accelerate convergence to a quasi-optimal solution and reduce the overall error rate. This advantage extends the applicability of the algorithm to general instances of 3-SAT.

% Even with the presence of noisy in simulation, the average required runs remain insufficient for a single expectation calculation, as shown in Table \ref{tab_noisy_run}.

%\subsection{Noise Resilience}

\section{Discussion and generalization}\label{sec_conclusion}
% 这里面记得讨论k-local量子搜索的一般形式。

\subsection{On the Applicability Beyond 3-SAT}\label{sec_app_beyond}% in nearly 1 page

% introduce the framework of method on general COP
While the core analysis in this work is based on max-3-SSAT, the proposed parameter initialization framework is not inherently limited to this setting. In particular, it can be naturally extended to general max-$k$-SAT problems and more broadly to general combinatorial optimization problems. The key idea is to construct a suitable random instance model for the problem of interest, derive the corresponding random problem Hamiltonian $H_C$, and estimate its maximum energy difference. Depending on the problem structure, this may involve techniques such as problem reduction or model approximation. Once the statistical properties of $H_C$ are understood---analytically or empirically---the same parameter setting method introduced for Max-3-SSAT can be applied. We outline such generalizations below through several representative classes of problems.

% introdu about the graph-related problem 
For combinatorial optimization problems on graphs, such as max-cut and max-clique, natural random models like the Erdős–Rényi model can be employed. The eigenvalue distribution of the corresponding problem Hamiltonian can then be derived accordingly. For instance, in Ref.~\cite{Zhou2020}, ,max-cut is studied on 3-regular graphs. Notably, in the case of edge density around 0.5, the maximum eigenvalue of $H_C$ is bounded by the total number of edges, approximately $1.5n$, which remains on the same order as that of $H_B$. As a result, both the FOURIER heuristics exhibit a high probability of reaching the target state.

However, for $k$-regular graphs with larger $k$, or for random graphs such as $G(n, \rho)$, the normalization of the Hamiltonian becomes essential to ensure effective parameter initialization. Moreover, for constrained combinatorial problems like max-clique, the form of the problem Hamiltonian depends on the specific design of the cost function $C(x)$. Nevertheless, its eigenvalue distribution can still be analyzed in a similar manner.

% for COP without known random model
While some combinatorial optimization problems lack widely accepted random models, it is often still possible to construct problem-specific ones. For example, in the set cover problem, a random model can be defined by restricting each subset to contain exactly $k$ elements. Given a universal set $U$ with $n$ elements, each subset $S_j \subset U$ is chosen such that $\left| S_j \right | = k$. Then, by uniformly, independently, and with replacement selecting $m$ subsets from the $C_n^k$ possible options, a random instance model similar to $F(n, m, k)$ can be formulated. Moreover, even when an analytical model is hard to construct or evaluate, empirical approaches remain viable for obtaining approximate results.

To demonstrate that the proposed framework extends beyond random 3-SAT, we further consider combinatorial optimization problems on graphs as a representative class of applications. Although the framework could potentially be applied to a variety of problems, including Max-Clique and even non-graph problems such as Set Cover, we focus on Max-Cut as a concrete example. The graph instances are generated using the Erd\H{o}s--R\'enyi random graph model, with edge density fixed at $0.5$. We then apply the SAMP framework to these random Max-Cut instances, and present the corresponding simulation results in Appendix~\ref{app_cut}.

The numerical results show that the performance of the proposed framework remains largely stable across different problem sizes, while the associated optimization cost is comparable to that observed in the random 3-SAT setting. These observations suggest that the proposed parameterization and optimization framework is not restricted to random max-$k$-SSAT, but can also be effectively extended to Max-Cut problems. More broadly, the stability of the performance across varying system scales indicates the potential applicability of the framework to a wider class of combinatorial optimization problems, provided that suitable random-instance models can be constructed or reasonably approximated.

\subsection{Discussion and conclusion}\label{sec_conclu} % in half page、
% Concludes the method

In this work, we established a rigorous connection between the theoretical foundations of adiabatic quantum evolution and the variational effectiveness of QAOA for $k$-local combinatorial optimization problems. Using the universal-mixer $k$-local quantum search as a canonical reference framework, we showed that the effectiveness of QAOA is not merely the result of heuristic variational optimization, but instead originates from an underlying low-dimensional geometric structure, which we refer to as the \textit{smooth adiabatic manifold}.

Building on this observation, we proposed the SAMP framework, which exploits the geometric regularity inherited from adiabatic evolution to address the long-standing challenge of parameter optimization in the NISQ regime. The proposed framework is constructed upon two complementary ingredients. The first is a statistical normalization of the Hamiltonian derived from the concentration properties of random $k$-SAT models, which enables stable and instance-independent parameter scaling. The second is a geometric reparameterization of the variational parameter space motivated by the continuity and smoothness of optimal adiabatic trajectories. Together, these ingredients provide an effective shortcut to adiabaticity, allowing the algorithm to retain high success probabilities at circuit depths scaling as $p=\mathcal{O}(n)$, despite the $\Theta(n^2)$ depth typically required by standard linear adiabatic evolution.

Our numerical simulations on random 3-SAT instances demonstrate that SAMP substantially reduces the optimization overhead while maintaining strong algorithmic performance. In particular, the optimization cost exhibits a significantly improved scaling behavior with respect to the circuit depth $p$, potentially approaching logarithmic scaling in practical regimes. Beyond computational efficiency, the SAMP framework also provides several conceptual advantages.

First, the framework naturally provides a high-quality initialization strategy for arbitrary circuit depths, avoiding the need for incremental layer-by-layer optimization procedures and enabling direct utilization of the available computational depth. Second, the resulting parameter manifold exhibits strong structural stability across varying system sizes $n$ and clause densities $m$, suggesting the existence of universal geometric features in the variational landscapes of random $k$-local optimization problems. Third, by interpreting QAOA as a variational correction mechanism for adiabatic leakage, the framework offers a physically transparent picture of the underlying quantum dynamics, moving beyond the conventional black-box interpretation of variational optimization.

The emergence of the Smooth Adiabatic Manifold also opens several promising directions for future research. One immediate direction is the development of more expressive manifold parameterizations, including higher-order schedules or machine-learning-assisted manifold tracking techniques, in order to further improve optimization efficiency. In addition, although this work primarily focused on the average-case behavior of random 3-SAT instances, the structural insights developed here may also provide useful tools for studying phase transitions and hardness structures in more challenging optimization problems and general $k$-local Hamiltonian systems. More broadly, the principles underlying SAMP are not restricted to SAT-type problems. Instead, they suggest a general physics-informed framework for constructing variational ans\"atze in quantum optimization and quantum simulation, thereby helping bridge the gap between rigid adiabatic evolution and flexible variational quantum control.

\section*{Acknowledgements}
We acknowledge support by Innovation Program for Quantum Science and Technology (Grant Nos. 2021ZD0302900) and Natural Science Foundation of Jiangsu Province, China (Grant No.BK20220804).

\appendix
% Reset figure and table counters
\renewcommand{\thefigure}{A.\arabic{figure}}
\renewcommand{\thetable}{A.\arabic{table}}
\setcounter{figure}{0}
\setcounter{table}{0}

\section{Preliminaries}\label{sec_preli}

\subsection{Average-case complexity theory and random \textit{k}-SAT problem}\label{apsubsec_average_case}

In Levin's framework of average-case complexity theory~\cite{Levin1986}, a \textit{random problem} is represented by a pair $(\mu,R)$, where $R\subset \mathbb{N}\times\mathbb{N}$ denotes the binary relation over instance-witness pairs $(x,y)$, and $\mu:\mathbb{N}\to[0,1]$ is the cumulative distribution function over instances. The associated probability mass function is given by $\mu'(x)=\mu(x)-\mu(x-1)$, which specifies the probability of generating a particular instance $x$. A random problem is said to be \textit{polynomial on average} if the decision problem
\[
\bar{R}(x)\Leftrightarrow \exists y\,R(x,y)
\]
can be solved in polynomial average time. More precisely, although the runtime $t(x)$ may be exponential in $\left|x\right|$ for certain rare instances, the average normalized runtime is required to remain finite:
\begin{equation}
	\sum_x \mu'(x)t(x)/\left|x\right| < \infty.
\end{equation}
Thus, average-case polynomial complexity permits extremely hard instances, provided that their occurrence probabilities are sufficiently small.

Consider two problems $P_1$ and $P_2$ with corresponding instance-witness relations $R_1$ and $R_2$. A polynomial-time algorithm $f(x)$ is said to reduce $P_1$ to $P_2$ if
\[
\bar{R}_1(x)\Leftrightarrow \bar{R}_2(f(x)).
\]
For random problems, the associated probability distributions must also be taken into account. In Levin's theory~\cite{Levin1986}, the relation $\mu_1\lesssim\mu_2$ is defined if there exists a constant $k$ such that
\[
\forall x,\qquad
\frac{\mu_1'(x)}{\mu_2'(x)}<|x|^k.
\]
A polynomial-time algorithm $f(x)$ reduces a random problem $(\mu_1,R_1)$ to $(\mathfrak{f}[\mu_2],R_2)$ if both $\mu_1\lesssim\mu_2$ and
\[
\bar{R}_1(x)\Leftrightarrow \bar{R}_2(f(x))
\]
hold. Here, $\mathfrak{f}[\mu_2]$ denotes the output distribution induced by $f(x)$,
\[
\mathfrak{f}[\mu_2](x')
=
\sum_{f(x)\le x'}\mu_2'(f(x)).
\]
Under this framework, a random NP problem is called complete if every random NP problem can be reduced to it.

In this work, we focus on random $k$-SAT and its related optimization variants. The standard random $k$-SAT model is denoted by $F(n,m,k)$, which is widely used in probabilistic and average-case analyses. In this model, an instance is generated by sampling $m$ clauses independently and uniformly with replacement from the set of all $2^kC_n^k$ possible $k$-clauses~\cite{Achlioptas2006}. Under this distribution, random $k$-SAT constitutes a random NP-complete problem~\cite{Karp1972, Livne2010}, since the model naturally captures the distributional structure of random instances and supports reductions from other random NP problems.

The $k$-SAT problem asks whether a Boolean formula is satisfiable, and remains NP-complete for $k\ge3$. Its optimization counterpart, max-$k$-SAT, seeks an assignment maximizing the number of satisfied clauses and is NP-hard for $k\ge2$. In this paper, we instead consider a restricted optimization variant, max-$k$-SSAT, which only involves satisfiable instances. Although no known polynomial-time reduction from max-$k$-SSAT to an established NP-complete problem currently exists, Lemma~\ref{lem_problem_reduction} shows that an efficient algorithm for max-$k$-SSAT would imply a polynomial-time algorithm for the $k$-SAT decision problem. Consequently, the existence of such an algorithm would have direct implications for the P versus NP question.

To characterize random max-$k$-SSAT instances, we introduce two random models. The first model, $F_s(n,m,k)$, extends the standard $F(n,m,k)$ model by selectively generating clauses while preserving satisfiability; unsatisfiable branches are discarded during the construction process. The second model, $F_f(n,m,k)$, serves as a theoretically motivated approximation to $F_s(n,m,k)$. In this model, a random assignment $t_0$ is first selected, and only clauses satisfied by $t_0$ are sampled. The corresponding clause-generation procedures are illustrated in Fig.~\ref{fig_clause}.

Although $F_s(n,m,k)$ provides a more natural representation of satisfiable random instances, the model $F_f(n,m,k)$ is substantially more amenable to theoretical analysis. Despite their structural differences, Lemma~\ref{lem_model_reduction} shows that the two models possess similar computational properties, making $F_f(n,m,k)$ a practical analytical framework for studying random satisfiable instances.

\begin{figure}[!t]
	\centering
	\begin{minipage}[c]{0.45\textwidth}
		{\includegraphics[width=1\linewidth]{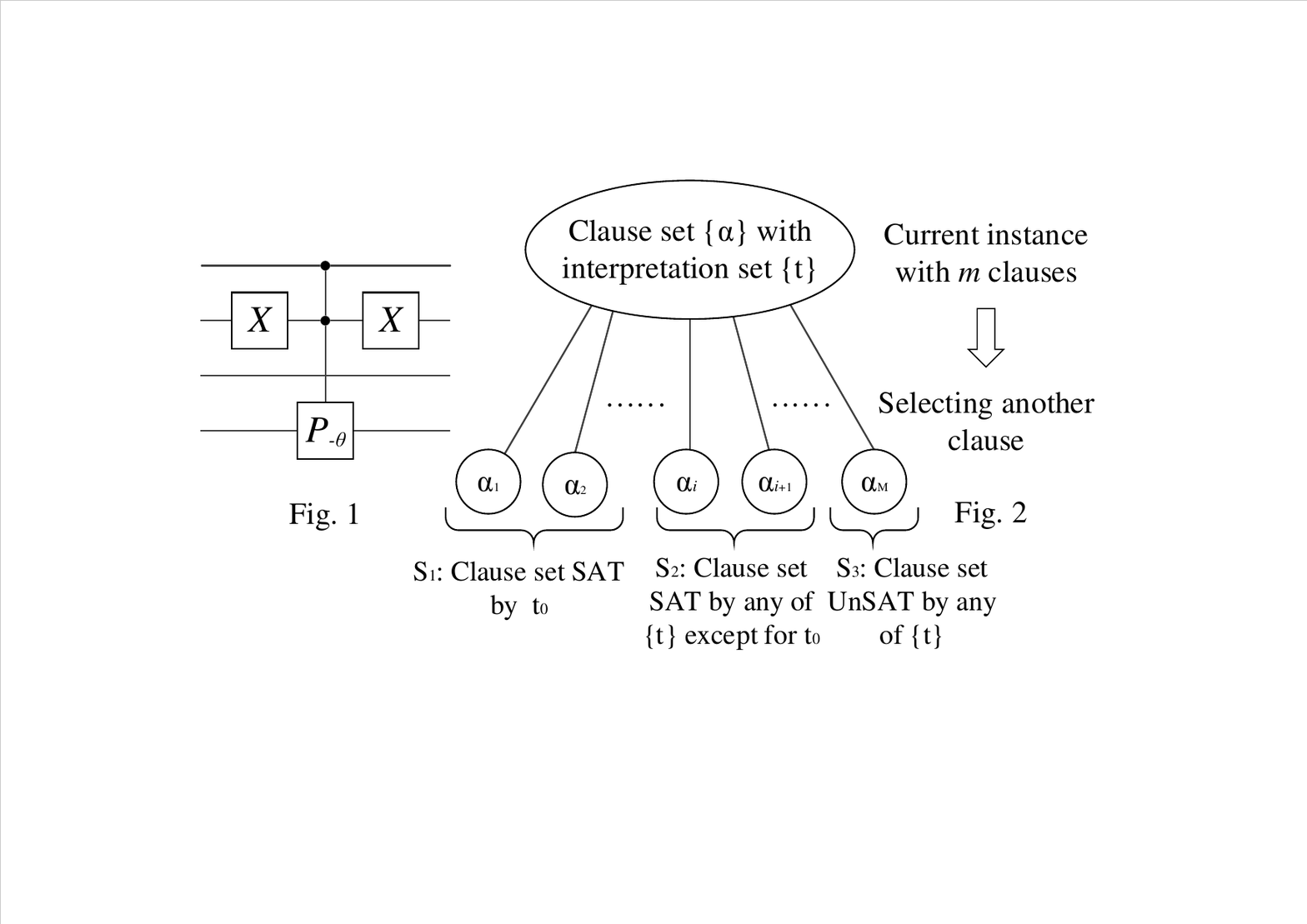}}
	\end{minipage}%\begin{minipage}[c]{0.17\textwidth}
	\footnotesize
	\caption{An illustrative example of quantum circuit for $e^{i\theta H_\alpha}$, where $\alpha = \lnot x_1 \lor x_2 \lor \lnot x_4$. This clause is represented as $\alpha = (-1, 2, -4)$, corresponding to the Hamiltonian term $H_\alpha = -p_{\bar\alpha}$, where $\bar\alpha=(1, -2, 4)$. This implies that the 1st, 2nd, and 4th qubits (from top to bottom in the figure) are occupied, and an additional pair of $X$ gate is applied to the 2nd qubit. In this circuit, the 4th qubit serves as the controlled qubit; however, it is notable that the controlled qubit can be any of the involved qubits, with the rest qubits acting as the control qubits.}
	\label{fig_U1gate}
	\footnotesize
	\caption{Clause selection process in generating a random $k$-SAT instance. Given an instance $I \in U_s$ with satisfying assignment set $\{t\}$, model $F(n, m, k)$ selects clauses uniformly at random from the full set $S_1 \cup S_2 \cup S_3$. In contrast, $F_s(n, m, k)$ samples only from clauses that are satisfied by at least one $t \in \{t\}$, i.e., from $S_1 \cup S_2$, while $F_f(n, m, k)$ restricts selection to clauses in $S_1$ that are satisfied by a fixed assignment $t_0$. When ${t} = {t_0}$, the selection sets of $F_s(n, m, k)$ and $F_f(n, m, k)$ coincide, and as $m$ increases, $F_s(n, m, k)$ gradually converges to $F_f(n, m, k)$.}
	\label{fig_clause}
\end{figure}

\subsection{Problem Hamiltonian of k-SAT instances}\label{apsubsec_HC_decomposation}

In this paper, we use multi-controlled phase gates to implement the evolution operator of the problem Hamiltonian associated with a Boolean formula. Consider a Boolean conjunction
\[
\alpha = (\lnot) x_{a_1} \land (\lnot)x_{a_2} \land \cdots \land (\lnot)x_{a_k},
\]
whose characteristic function satisfies $f_\alpha(x)=1$ if and only if the conjunction is satisfied, and $f_\alpha(x)=0$ otherwise. The corresponding problem Hamiltonian $h_\alpha$ is therefore defined through
\[
h_\alpha \ket{x}= f_\alpha(x) \ket{x}.
\]
We denote the conjunction compactly as $\alpha=(\pm a_1,\pm a_2,\cdots,\pm a_k)$, where $+a_t$ corresponds to the literal $x_{a_t}$ and $-a_t$ corresponds to $\lnot x_{a_t}$. Under this notation, the problem Hamiltonian $h_\alpha$ can be written as
\begin{equation}
	{{h}_{\alpha}}=\sigma_{z^\mp}^{(a_1)} \otimes \sigma_{z^\mp}^{(a_2)} \otimes \cdots \otimes \sigma_{z^\mp}^{(a_k)},
\end{equation}
where a positive (negative) literal corresponds to $\sigma_{z^-}$ ($\sigma_{z^+}$), respectively, while all remaining qubit positions are tensored with the identity operator $I$. Here, $\sigma_{z^\pm}^{(a_j)}$ denotes $\sigma_{z^\pm}$ acting on the $a_j$-th qubit, and
\[
\sigma_{z^\pm} = \frac{1}{2}(I\pm\sigma_z)
\]
are the projectors onto $\ket{0}$ and $\ket{1}$, respectively, with matrix representations
\begin{equation}
	\sigma_{z^+} = \begin{bmatrix} 1 & 0 \\ 0 & 0 \end{bmatrix}, \quad
	\sigma_{z^-} = \begin{bmatrix} 0 & 0 \\ 0 & 1 \end{bmatrix}.
\end{equation}

The corresponding evolution operator $e^{i\theta h_\alpha}$ is equivalent to a $(k-1)$-controlled phase gate, which can be implemented with gate complexity ${\mathcal O}(k)$~\cite{Barenco1995}. In the single-qubit case, this operator reduces to the phase gate
\[
P_\theta=e^{i\theta\sigma_{z^-}},
\]
whose matrix form is
\begin{equation}
	P_\theta = \begin{bmatrix} 1 & 0 \\ 0 & e^{i\theta} \end{bmatrix}.
\end{equation}
An example quantum circuit corresponding to a specific Boolean formula is shown in Fig.~\ref{fig_U1gate}.

In the $k$-SAT problem, each clause $\alpha$ is a Boolean disjunction of the form
\[
\alpha = (\lnot) x_{a_1} \lor (\lnot)x_{a_2} \lor \cdots \lor (\lnot)x_{a_k},
\]
which is similarly denoted as $(\pm a_1,\pm a_2,\ldots, \pm a_k)$. Letting the complement of $\alpha$ be
\[
\bar\alpha = (\mp a_1,\mp a_2,\ldots, \mp a_k),
\]
the corresponding clause Hamiltonian can be expressed as
\[
H_\alpha = -h_{\bar\alpha}.
\]
Since the objective function of the Boolean formula is given by
\[
f(x) = \sum_\alpha f_\alpha(x),
\]
the total problem Hamiltonian takes the form
\begin{equation}
	H_C = \sum_\alpha H_\alpha.
\end{equation}
Because $H_C$ is diagonal in the computational basis, its evolution operator can be decomposed into the product of the evolutions generated by the individual clause Hamiltonians:
\begin{equation}
	e^{-i \theta H_C} = \prod_\alpha e^{-i \theta H_\alpha}.
\end{equation}

\section{Proof of efficiency of the universal-mixer \textit{k}-local quantum search}
\label{apsec_proof_universal_mixer}

In this section, we provide a complete proof of the efficiency of the universal-mixer variant of the \textit{k}-local quantum search introduced in Sec.~\ref{subsec_connecting_qaoa}. Our analysis builds upon the theoretical framework developed in Ref.~\cite{mine1}, where the original \textit{k}-local quantum search and its adiabatic variant were established. However, the present setting introduces a crucial modification: the specialized $k$-local mixer Hamiltonian $H_{B,k}$ is replaced by the universal transverse-field mixer $H_B = \sum_j \sigma_j^x$. The main technical challenge is therefore to show that this substitution preserves the algorithmic efficiency, despite the loss of explicit locality alignment between the problem and mixer Hamiltonians.

The remainder of this section is organized as follows. Sec.~\ref{apsubsec_proof_circuit} revisits the circuit constructions, introduces the notation used throughout the proofs, and collects several preliminary results adapted from Ref.~\cite{mine1}; Sec.~\ref{apsubsec_proof_search_on_search} establishes that the universal-mixer circuit retains the efficiency proven for the idealized \textit{k}-local search problem; Sec.~\ref{apsubsec_proof_search_on_SAT} extends this result to random max-$k$-SSAT instances via the statistical correspondence described in the main text; and finally, Sec.~\ref{apsubsec_proof_adiabatic_on_SAT} analyzes the adiabatic variant, establishes the spectral gap bound, and derives the corresponding average-case complexity.

Throughout this section, we emphasize the similarities and differences with the original analysis in Ref.~\cite{mine1}, and explicitly highlight the new arguments required for handling the universal mixer.

% 先重述线路，及其分解以及一些证明的前置
\subsection{Framework and basic properties of \textit{k}-local quantum search} 
\label{apsubsec_proof_circuit}

We begin by reviewing the circuit construction of the \textit{k}-local quantum search introduced in Ref.~\cite{mine1}, and adapting it to the universal-mixer setting considered in this work. 
This subsection also collects several preliminary results that will be used throughout the subsequent analysis.

% 第一部分解释介绍线路
\subsubsection{Circuit formulation of original \textit{k}-local quantum search}
\label{apsubsubsec_circuit_k_local_QS}

The \textit{k}-local quantum search algorithm is defined through the iterative application of a problem-dependent phase operator and a structured mixer operator. 
Starting from the uniform superposition $|+\rangle^{\otimes n}$, the state evolves according to
\begin{equation}\label{eq_ap_k_local_QS}
	|\psi\rangle = \left( e^{-i\theta H_{B,k}} e^{-i\theta H_k} \right)^p |+\rangle^{\otimes n},
\end{equation}
where $H_k$ is the problem Hamiltonian encoding the objective function $f_k(x)$ via
\begin{equation}
	H_k |x\rangle = f_k(x) |x\rangle,
\end{equation}
and $H_{B,k}$ is a $k$-local mixer Hamiltonian designed to match the locality structure of $H_k$.

To implement the evolution operator $e^{-i\theta H_k}$, the Hamiltonian $H_k$ can be decomposed into a sum of $k$-local components, as shown in Ref.~\cite{mine1}. 
Consider first the reference Hamiltonian $H_{k,0}$ corresponding to the target string $t=0$. 
Let $I_{n,k}$ denote the set of all $k$-tuples of qubit indices. 
For each $\alpha \in I_{n,k}$, define the local operator $h_k^{(\alpha)}$ as $h_k = \sigma_{z^+}^{\otimes k}$ acting on the qubits indexed by $\alpha$. 
Then $H_{k,0}$ admits the decomposition
\begin{equation}
	H_{k,0} = \frac{1}{C_n^k} \sum_{\alpha \in I_{n,k}} h_k^{(\alpha)}.
\end{equation}
Since $H_{k,0}$ is diagonal, its evolution operator factorizes as
\begin{equation}
	e^{-i\theta H_{k,0}} = \prod_{\alpha \in I_{n,k}} e^{-\frac{i\theta}{C_n^k} h_k^{(\alpha)}}.
\end{equation}
For a general target string $t$, the Hamiltonian $H_k$ differs from $H_{k,0}$ only by local basis flips, which can be implemented by inserting appropriate $X$ gates before and after each local evolution.

The mixer Hamiltonian $H_{B,k}$ is defined as a $k$-local generalization of the transverse-field operator, given by
\begin{equation}
	H_{B,k} = \frac{1}{C_n^k} \sum_{\alpha \in I_{n,k}} X_k^{(\alpha)},
\end{equation}
where $X_k = H^{\otimes k} h_k H^{\otimes k}$, and $X_k^{(\alpha)}$ acts on the qubits indexed by $\alpha$. 
Equivalently, $H_{B,k}$ can be written as
\begin{equation}
	H_{B,k} = H^{\otimes n} H_{k,0} H^{\otimes n},
\end{equation}
which follows from a basis transformation argument (see Ref.~\cite{mine1}). 
When $k=1$, the operator $H_{B,k}$ reduces to the standard transverse-field Hamiltonian $H_B = \sum_j \sigma_j^x$, up to a normalization factor and a global phase.

Under the above decomposition, both $e^{-i\theta H_k}$ and $e^{-i\theta H_{B,k}}$ can be implemented as products of local unitary operations. 
Each term $e^{-i\theta h_k^{(\alpha)}}$ corresponds to a $(k-1)$-controlled phase gate, which can be realized with gate complexity $\mathcal{O}(k)$~\cite{Barenco1995}. 
Therefore, for a small constant $k$, a single iteration of the unitary
\begin{equation}
	U_k = e^{-i\theta H_{B,k}} e^{-i\theta H_k}
\end{equation}
can be implemented efficiently using $\mathcal{O}(n^k)$ elementary quantum gates.

The above algorithm was originally introduced to solve the \textit{k}-local search problem, where the objective function $f_k(x)$ is defined through local constraints. As shown in Ref.~\cite{mine1}, for small constant $k$, with $\theta = \Theta(n^{-1})$, this algorithm achieves efficient amplification with a number of iterations scaling as $p = \mathcal{O}(n^2)$. 

When applied to random max-$k$-SSAT instances, the ideal Hamiltonian $H_k$ is replaced by the normalized problem Hamiltonian $\bar{H}_C$, leading to the circuit
\begin{equation}
	|\psi\rangle = \left( e^{-i\theta H_{B,k}} e^{-i\theta \bar{H}_C} \right)^{p_\theta} |+\rangle^{\otimes n}.
\end{equation}
As established in Ref.~\cite{mine1}, the spectrum of $\bar{H}_C$ concentrates around that of $H_k$, and the deviation $\Delta H_C = \bar{H}_C - H_k$ vanishes in probability as the clause density $m \to \infty$. 
It was shown in Ref.~\cite{mine1} that, for any fixed $\epsilon>0$ and sufficiently large $n$, when $m=\Theta(n^{2+\epsilon+\delta})$, the $k$-local quantum search algorithm retains the same efficiency scaling as in the ideal $k$-local search setting with probability $1-\mathcal{O}(\mathrm{erfc}(n^{\delta/2}))$.

% 接着介绍绝热的部分内容
\subsubsection{Adiabatic formulation of \textit{k}-local quantum search}
\label{apsubsubsec_circuit_adiabatic_k_local_QS}

While the circuit-based implementation relies on iterative applications of local unitaries, its performance may become sensitive to deviations from the ideal Hamiltonian when such perturbations accumulate over many iterations. 
To address this limitation, an adiabatic variant of the \textit{k}-local quantum search was introduced in Ref.~\cite{mine1}. Specifically, the time-dependent system Hamiltonian is defined as
\begin{equation}
	\bar{H}_k(s) = s H_k + (1-s) H_{B,k},
\end{equation}
where $s \in [0,1]$ is a continuous interpolation parameter. 
At $s=0$, the system Hamiltonian reduces to $H_{B,k}$, whose highest-energy eigenstate is the uniform superposition $|+\rangle^{\otimes n}$. As $s$ increases, the Hamiltonian gradually incorporates the problem structure encoded in $H_k$, and at $s=1$, it coincides with the problem Hamiltonian.

Under sufficiently slow evolution, the adiabatic theorem ensures that the system remains in its instantaneous eigenstate throughout the process. 
In the present setting, the system is initialized in the highest-energy eigenstate of $H_{B,k}$ and evolves along the corresponding eigenstate branch of $\bar{H}_k(s)$, eventually reaching the target state that maximizes the objective function $f_k(x)$.

Ref.~\cite{mine1} considers a linear interpolation schedule $s = t/T$, leading to the time-dependent Hamiltonian
\begin{equation}
	\bar{H}_k(t) = \frac{t}{T} H_k + \left(1-\frac{t}{T}\right) H_{B,k}.
\end{equation}
The efficiency of the adiabatic evolution is governed by the spectral properties of $\bar{H}_k(s)$, in particular the minimum spectral gap along the evolution path. As shown in Ref.~\cite{mine1}, for the $k$-local search problem, the minimum spectral gap scales as $g_{k,0} = \Theta(n^{-1})$, leading to an overall runtime $T = \mathcal{O}(n^2)$.

When applied to random max-$k$-SSAT instances, the ideal Hamiltonian $H_k$ is replaced by the normalized problem Hamiltonian $\bar{H}_C$, resulting in the perturbed interpolation
\begin{equation}
	\bar{H}(s) = s \bar{H}_C + (1-s) H_{B,k}.
\end{equation}
Due to the enhanced robustness of adiabatic evolution against perturbations, the required clause density for efficient performance can be relaxed. 
In particular, it was shown in Ref.~\cite{mine1} that, with total runtime $T = \mathcal{O}(n^2)$, the adiabatic formulation remains efficient for $m = \Theta(n^{1+\epsilon+\delta})$, achieving efficiency comparable to that of the $k$-local search problem with probability $1-\mathcal{O}(\mathrm{erfc}(n^{\delta/2}))$.

% 首先就是说明我们研究的起点，$H_{Bk}$和$H_{B}$.

% 然后给出变体的搜索

% 为了应对同样会出现的偏差的问题，我们也给出了对应绝热形式，这里你先起草一个

% 介绍新提出的线路
\subsubsection{Circuit of the universal-mixer variant}

We consider the universal-mixer variant introduced in the main text, obtained by replacing the structured mixer $H_{B,k}$ with the $1$-local mixer $H_{B,1}$. The resulting circuit is given by
\begin{equation}\label{eq_ap_um_k_local_search}
	|\psi\rangle = \left( e^{-i\theta H_{B,1}} e^{-i\theta H_k} \right)^p |+\rangle^{\otimes n}.
\end{equation}
For small constant $k$, the operator-norm difference between $H_{B,k}$ and $H_{B,1}$ remains bounded, implying a close structural proximity between the two mixers. Under this condition, the above circuit preserves the efficiency scaling of the original \textit{k}-local quantum search for the $k$-local search problem.

When extended to random max-$k$-SSAT instances, replacing $H_k$ with the normalized problem Hamiltonian $\bar{H}_C$ yields
\begin{equation}
	|\psi\rangle = \left( e^{-i\theta H_{B,1}} e^{-i\theta \bar{H}_C} \right)^p |+\rangle^{\otimes n},
\end{equation}
for which the same efficiency guarantees hold in the corresponding clause-density regime as in the structured-mixer case.

Compared with the original construction, this modification removes the explicit locality matching in the mixer operator, bringing the circuit closer to the standard QAOA ansatz with fixed, problem-independent mixers. The main technical task is to establish that the favorable spectral properties underlying the original analysis are preserved under this replacement; the corresponding proofs are provided in Appendices~\ref{apsubsec_proof_search_on_search} and~\ref{apsubsec_proof_search_on_SAT}.

We next consider the corresponding adiabatic formulation under the universal mixer. The time-dependent system Hamiltonian is defined as
\begin{equation}
	\bar{\mathcal{H}}(s) = s H_k + (1-s) H_{B,1}.
\end{equation}
For the $k$-local search problem, the minimum spectral gap along this interpolation remains of order $\Theta(n^{-1})$, which leads to a total runtime scaling as $\mathcal{O}(n^2)$ under a linear schedule. 

For random max-$k$-SSAT instances, replacing $H_k$ with $\bar{H}_C$ yields the perturbed interpolation
\begin{equation}
	\bar{\mathcal{H}}(s) = s \bar{H}_C + (1-s) H_{B,1}.
\end{equation}
The adiabatic evolution retains the same robustness against perturbations as in the structured-mixer setting, and achieves comparable efficiency in the corresponding clause-density regime, with success probability of the same asymptotic order; see Appendix~\ref{apsubsec_proof_adiabatic_on_SAT}.
% 不用讲Trotterized circuit，这是别的地方讨论的

% 为了证明我们的结论，我们借用了一些之前论文的结论，然后这里就罗列出来
\subsubsection{Preliminary results and known properties}

To facilitate the analysis in the subsequent subsections, we collect several technical results established in Ref.~\cite{mine1}, which will serve as key ingredients in our proofs. For completeness, we restate these results in a form tailored to our setting.

We start with the simplest case of $k=1$, which serves as the base case for the inductive arguments developed later. In this setting, the $k$-local quantum search circuit reduces to
\begin{equation}
	\ket{\psi} = \left( e^{-i\theta H_{B,1}} e^{-i\theta H_1} \right)^p \ket{+}^{\otimes n}.
\end{equation}
This case captures the minimal locality structure and establishes the baseline performance of the algorithm. The corresponding unitary operator can be approximately eigendecomposed locally on each qubit, as detailed in Lemma~\ref{lemma_eigen_1_local_QS}. Its effectiveness has been rigorously demonstrated in Ref.~\cite{mine1}, and is summarized in Lemma~\ref{lemma_1_local_QS}.

Next, we recall several properties that will play a central role in our analysis.  Let $h_k = \sigma_{z^+}^{\otimes k}$ and $X_k = H^{\otimes k} h_k H^{\otimes k}$. Lemmas~\ref{lemma_exchange_H}, \ref{lemma_HX_commutator} and \ref{lemma_commutator} show that the corresponding unitary operators either commute exactly or approximately commute for sufficiently small angles. Finally, we recall a model reduction result that relates different random max-$k$-SSAT ensembles, as formalized in Lemma~\ref{lem_model_reduction}. 

\begin{lem}[Lemma~B.4 in Ref.~\cite{mine1}]\label{lemma_eigen_1_local_QS}
	The 1-local search operator $U_C = H^{\otimes n} e^{i\pi H_Z/2n} H^{\otimes n} e^{i\pi H_Z/2n}$ can be approximately eigendecomposed as
	\begin{equation}
		U_C = {{V}^{T}}EV + \mathcal{O}(n^{-2}), 
	\end{equation}
	where $V=V_{0}^{\otimes n}$, and $E=E_{0}^{\otimes n}$ with 
	\begin{equation*}
		{{E}_{0}}=\left[ \begin{matrix}
			{{e}^{\frac{i\pi }{\sqrt{2}n}}} & 0  \\
			0 & {{e}^{\frac{-i\pi }{\sqrt{2}n}}}  \\
		\end{matrix} \right],
		{{V}_{0}}=\left[ \begin{matrix}
			\cos \left( \frac{\pi }{8} \right) & \sin \left( \frac{\pi }{8} \right)  \\
			-\sin \left( \frac{\pi }{8} \right) & \cos \left( \frac{\pi }{8} \right)  \\
		\end{matrix} \right].
	\end{equation*}
\end{lem}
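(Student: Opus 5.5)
The plan is to exploit the fact that every ingredient of $U_C$ is a tensor product over qubits. I read $H_Z$ as the $1$-local Hamiltonian $\sum_{j}\sigma_z^{(j)}$; an additive constant or the projector form $\sum_j\sigma_{z^+}^{(j)}$ changes things only by a global phase and a rescaling that I will track. With this reading,
\[
e^{i\pi H_Z/2n}=\bigotimes_{j=1}^n e^{ia\sigma_z},\qquad a=\frac{\pi}{2n},
\]
and $H^{\otimes n}$ is likewise a product. Hence $U_C=u^{\otimes n}$ exactly, with $u=He^{ia\sigma_z}He^{ia\sigma_z}=e^{ia\sigma_x}e^{ia\sigma_z}$. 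Since $V=V_0^{\otimes n}$ and $E=E_0^{\otimes n}$, the whole statement reduces to showing $u=V_0^TE_0V_0+\mathcal{O}(a^2)$ on a single qubit, followed by a lifting step.

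For the single qubit I apply the Baker--Campbell--Hausdorff formula:
\[
e^{ia\sigma_x}e^{ia\sigma_z}=\exp\!\Big(ia(\sigma_x+\sigma_z)-\tfrac{a^2}{2}[\sigma_x,\sigma_z]+\mathcal{O}(a^3)\Big)=e^{ia(\sigma_x+\sigma_z)}+\mathcal{O}(a^2).
\]
The leading generator $\sigma_x+\sigma_z$ equals $\sqrt2$ times a Pauli operator along the axis at angle $\pi/4$ in the $x$--$z$ plane. Its eigenvalues are $\pm\sqrt2$, with eigenvectors $(\cos\frac{\pi}{8},\sin\frac{\pi}{8})^T$ and $(-\sin\frac{\pi}{8},\cos\frac{\pi}{8})^T$, which are precisely the rows of $V_0$. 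Therefore
\[
e^{ia(\sigma_x+\sigma_z)}=V_0^T\,\mathrm{diag}\big(e^{i\sqrt2 a},e^{-i\sqrt2 a}\big)V_0,
\]
and $\sqrt2\,a=\pi/(\sqrt2 n)$ reproduces $E_0$. Everything up to this point is a routine $2\times 2$ computation.

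The main obstacle is lifting the single-qubit error to $n$ qubits. A naive telescoping bound,
\[
\|u^{\otimes n}-w^{\otimes n}\|\le n\|u-w\|,
\]
with $w=V_0^TE_0V_0$ and using unitarity of all factors, only yields $\mathcal{O}(n\cdot n^{-2})=\mathcal{O}(n^{-1})$ in operator norm. I would therefore read the $\mathcal{O}(n^{-2})$ in the lemma, as Ref.~\cite{mine1} does, as the factorwise error: $U_C=\bigotimes_j\big(V_0^TE_0V_0+\mathcal{O}(n^{-2})\big)$. I would state the global operator-norm consequence $\mathcal{O}(n^{-1})$ separately.

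To see what is actually achievable, I would also compute the exact eigensystem of $u$. Using $\cos\phi=\tfrac12\mathrm{tr}(u)$ shows that the eigenphases deviate from $\pm\sqrt2 a$ only at order $a^3$. The $a^2\sigma_y$ BCH term, however, tilts the eigenvectors by an angle of order $a$, and it is this tilt that fixes the single-factor accuracy. Finally, because later applications iterate $U_C$ about $\Theta(n)$ times, I would check that this factorwise form, rather than a global norm bound, is what is needed downstream. The exact factorization $U_C=u^{\otimes n}$ lets each qubit be analysed independently, so per-qubit errors do not compound across qubits in the amplitude of the product target state.
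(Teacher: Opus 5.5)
The paper gives no proof of this lemma; it imports it as Lemma~B.4 of Ref.~\cite{mine1}, so there is no in-paper argument to compare against. Your argument is correct and is the natural proof: the exact factorization $U_C=u^{\otimes n}$ with $u=e^{ia\sigma_x}e^{ia\sigma_z}$, first-order BCH, and diagonalization of $\sigma_x+\sigma_z=\sqrt{2}\,H$ (Hadamard).

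Your caveat about the error term can be sharpened. It is not only that telescoping gives just $\mathcal{O}(n^{-1})$: the global operator-norm claim is false. Take $w=V_0^TE_0V_0$. Both $u$ and $w$ lie in $SU(2)$, and your $ia^2\sigma_y$ BCH term gives $w^\dagger u=e^{i\phi\,\hat{m}\cdot\vec{\sigma}}$ with $\phi=a^2+\mathcal{O}(a^3)$ and $\hat{m}=\hat{y}+\mathcal{O}(a)$. Now $U_C=w^{\otimes n}(w^\dagger u)^{\otimes n}$, where $w^{\otimes n}=V^TEV$ is unitary and $(w^\dagger u)^{\otimes n}$ has eigenphases $(n-2j)\phi$ for $j=0,\dots,n$. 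Hence
\[
\|U_C-V^TEV\|=\|(w^\dagger u)^{\otimes n}-I\|=|e^{in\phi}-1|=\frac{\pi^2}{4n}+\mathcal{O}(n^{-2}).
\]
So reading the error factorwise is not a convenient convention. It is the only reading under which the statement, with these fixed $V_0$ and $E_0$, is true.

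One correction to your closing remark: per-qubit errors \emph{do} compound across qubits, multiplicatively. The target amplitude is $\langle 0|u^p|+\rangle^{n}$. The per-qubit bound $\|u^p-w^p\|\le p\,\|u-w\|=\mathcal{O}(pa^2)=\mathcal{O}(n^{-1})$ at $p\approx n/\sqrt{2}$ would, by itself, allow an $\mathcal{O}(1)$ loss over $n$ factors. What rescues the downstream use is that $w^p|+\rangle\approx|0\rangle$ maximizes the single-qubit overlap. The $\mathcal{O}(a)$ tilt of the rotation axis of $u$ therefore costs only $\mathcal{O}(a^2)=\mathcal{O}(n^{-2})$ in per-qubit fidelity, which gives $1-\mathcal{O}(n^{-1})$ overall.
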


\begin{lem}[Lemma~B.5 in Ref.~\cite{mine1}]\label{lemma_1_local_QS}
	For $k=1$, the $k$-local quantum search requires approximately $\frac{n}{\sqrt{2}}$ oracle calls to evolve the initial state toward the target state $\ket{t}$. Moreover, the amplitude on $\ket{t}$ converges to $1$ as $n$ increases.
\end{lem}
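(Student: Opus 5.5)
The approach is to exploit the fact that for $k=1$ everything factorizes over qubits. I would reduce to a single-qubit problem, solve that exactly, and then take the $n$-fold product.

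\emph{Reduction to one qubit.} Conjugating by $X$ gates on the positions where $t_j=1$ leaves $\ket{+}^{\otimes n}$ and $H_{B,1}$ invariant, so without loss of generality I take $t=0$. Then $H_1=\frac1n\sum_j \sigma_{z^+}^{(j)}$, and $H_{B,1}$ is, up to normalization and a global phase, $\frac1n\sum_j H\sigma_{z^+}^{(j)}H$. Both Hamiltonians are sums of commuting single-qubit terms, so one iteration is exactly a product
\begin{equation*}
U=u^{\otimes n},\qquad u=H e^{-i\theta\sigma_{z^+}}H\, e^{-i\theta\sigma_{z^+}},\quad \theta=\tfrac{\pi}{2n}.
\end{equation*}
Consequently the amplitude on $\ket{t}$ after $p$ steps is $a_p^{\,n}$, where $a_p=\bra{0}u^p\ket{+}$. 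The whole question therefore reduces to showing $|a_p|=1-\mathcal{O}(n^{-2})$ for a suitable $p\approx n/\sqrt2$. Only then does the product satisfy $|a_p|^n=1-\mathcal{O}(n^{-1})\to1$.

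\emph{Single-qubit dynamics.} Up to a global phase, $u$ is an $SU(2)$ element, i.e.\ a Bloch-sphere rotation. I would compute its angle and axis by multiplying the two rotations $e^{-i\theta\sigma_{z^+}}\propto e^{-i\theta\sigma_z/2}$ and $He^{-i\theta\sigma_{z^+}}H\propto e^{-i\theta\sigma_x/2}$. By BCH (or the exact $SU(2)$ composition formula), the product is a rotation by angle $\sqrt2\,\theta+\mathcal{O}(\theta^3)$ about the axis $(\hat x+\hat z)/\sqrt2$, tilted by an $\mathcal{O}(\theta)$ component along $\hat y$ coming from the commutator. This reproduces the eigenvalues $e^{\pm i\pi/(\sqrt2 n)}$ and the eigenbasis $V_0$ (rotation by $\pi/8$) of Lemma~\ref{lemma_eigen_1_local_QS}.

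The Bloch vectors of $\ket{+}$ ($\hat x$) and $\ket{0}$ ($\hat z$) are mirror images through the ideal axis $(\hat x+\hat z)/\sqrt2$. Hence a rotation by exactly $\pi$ about that axis maps $\ket{+}$ to $\ket{0}$. This requires $p\sqrt2\,\theta=\pi$, i.e.\ $p=n/\sqrt2$ iterations, which is the claimed count.

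\emph{Error control (main obstacle).} I must show that the deviations cost only $\mathcal{O}(n^{-2})$ in fidelity per qubit, not $\mathcal{O}(n^{-1})$. There are three sources of error.

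\begin{itemize}
\item[(i)] \emph{Axis tilt.} The tilt is $\mathcal{O}(\theta)$. Since I use the exact axis of $u$ (it is one fixed rotation applied $p$ times, so no errors accumulate across steps), the final point is the exact $\pi$-rotation of $\hat x$ about the tilted axis. This point differs from $\hat z$ by a Bloch-sphere distance of $\mathcal{O}(\theta)$. Its fidelity with $\ket{0}$ is therefore $1-\mathcal{O}(\theta^2)$, and the amplitude magnitude is also $1-\mathcal{O}(\theta^2)$. The step needing care is checking that no first-order term in $\theta$ survives in the overlap; this follows because fidelity is quadratic in Bloch-sphere displacement near its maximum.
\item[(ii)] \emph{Integer rounding of $p$.} Rounding $p$ to an integer changes the total rotation angle by at most $\sqrt2\theta=\mathcal{O}(n^{-1})$, which again costs $\mathcal{O}(n^{-2})$ in fidelity.
\item[(iii)] \emph{Higher-order angle correction.} The $\mathcal{O}(\theta^3)$ correction to the rotation angle contributes $\mathcal{O}(p\theta^3)=\mathcal{O}(n^{-2})$ to the total angle, which is negligible.
\end{itemize}

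Combining these gives $|a_p|^n\ge(1-cn^{-2})^n=1-\mathcal{O}(n^{-1})$. The amplitude on $\ket{t}$ therefore converges to $1$ after approximately $n/\sqrt2$ oracle calls.
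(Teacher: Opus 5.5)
Your route is essentially the one behind the cited proof: reduce to $t=0$, factor over qubits, and use the single-qubit structure recorded in Lemma~\ref{lemma_eigen_1_local_QS}. There, $V_0$ is the eigenbasis of a rotation about $(\hat x+\hat z)/\sqrt2$, and $p=n/\sqrt2$ gives $E_0^{p}=\mathrm{diag}(i,-i)$, so $V_0^{T}E_0^{p}V_0\ket{+}=i\ket{0}$. That is exactly your half-turn exchanging $\hat x$ and $\hat z$.

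Your error control is sound, and it is the careful way to use that lemma. Its $\mathcal{O}(n^{-2})$ error holds per qubit, so on the full register the per-step operator-norm discrepancy is of order $n^{-1}$. Naive accumulation over $p\sim n$ steps would then give only an $\mathcal{O}(1)$ bound. Your observation fixes this: each qubit ends $\mathcal{O}(n^{-1})$ away from an exact $\ket{0}$, hence with fidelity $1-\mathcal{O}(n^{-2})$, and that is what makes $|a_p|^n\to 1$.

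There is, however, a factor-of-two slip to fix. With $\theta=\pi/(2n)$, the factor $e^{-i\theta\sigma_{z^+}}\propto e^{-i\theta\sigma_z/2}$ is a Bloch rotation by $\pi/(2n)$, so $u$ rotates by $\pi/(\sqrt2\, n)$ per step. Its eigenphases are then $\pm\pi/(2\sqrt2\, n)$, which is half the rotation angle, not equal to it. Likewise, $p\sqrt2\,\theta=\pi$ gives $p=\sqrt2\, n$, not $n/\sqrt2$.

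The operator $e^{\mp i\pi H_Z/2n}$ of Lemma~\ref{lemma_eigen_1_local_QS} (equivalently $e^{-i\pi H_1}$ up to phase) corresponds to $\theta=\pi/n$. With that value, the eigenphases $\pm\pi/(\sqrt2\, n)$, the eigenbasis $V_0$, and the count $p=n/\sqrt2$ all come out as claimed, and your three error estimates are unchanged.
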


\begin{lem}[Lemma~B.1 in Ref.~\cite{mine1}]\label{lemma_exchange_H}
	Given a single-qubit gate $M$ that satisfies $M^2=I$, $\mathcal{H}_{M,k} = M^{\otimes n} (\sum_{\alpha \in I_{n,k}} h_\alpha)M^{\otimes n}$ can equivalently be written as $H_{M,k} =\sum_{\alpha \in I_{n,k}} M_k^{(\alpha)}$, where $M_k^{(\alpha)}$ denotes $M_k$ acting on qubits indexed by $\alpha$, and $M_k = M^{\otimes k} h_k M^{\otimes k}$.
\end{lem}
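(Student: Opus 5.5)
The proof is a direct tensor-product calculation; no spectral or probabilistic input is needed. Conjugation by the fixed operator $M^{\otimes n}$ is linear in its argument, so
\[
M^{\otimes n}\Big(\sum_{\alpha\in I_{n,k}} h_\alpha\Big)M^{\otimes n}=\sum_{\alpha\in I_{n,k}} M^{\otimes n} h_\alpha M^{\otimes n}.
\]
It therefore suffices to prove the single-term identity $M^{\otimes n} h_\alpha M^{\otimes n}=M_k^{(\alpha)}$ for each fixed $k$-tuple $\alpha$. Here $h_\alpha$ means $h_k=\sigma_{z^+}^{\otimes k}$ acting on the qubits indexed by $\alpha$ and the identity on the remaining $n-k$ qubits. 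The left-hand side of the lemma is written $\mathcal{H}_{M,k}$ and the right-hand side $H_{M,k}$; both denote the same operator.

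Fix $\alpha$ and let $\bar\alpha$ denote the complementary set of $n-k$ qubit indices. Up to a relabeling of tensor factors, which I will implement by a qubit permutation $P_\alpha$, write
\[
h_\alpha=P_\alpha\big(h_k\otimes I^{\otimes(n-k)}\big)P_\alpha^{\dagger}.
\]
The key observation is that $M^{\otimes n}$ is invariant under any qubit permutation, because the same gate $M$ acts on every qubit. Hence $P_\alpha^\dagger M^{\otimes n}P_\alpha=M^{\otimes k}\otimes M^{\otimes(n-k)}$, with the first factor on $\alpha$ and the second on $\bar\alpha$. Using the mixed-product rule $(A\otimes B)(C\otimes D)(E\otimes F)=ACE\otimes BDF$, we obtain
\[
M^{\otimes n}h_\alpha M^{\otimes n}=P_\alpha\Big(M^{\otimes k}h_kM^{\otimes k}\otimes M^{\otimes(n-k)}M^{\otimes(n-k)}\Big)P_\alpha^\dagger .
\]
The hypothesis $M^2=I$ gives $M^{\otimes(n-k)}M^{\otimes(n-k)}=(M^2)^{\otimes(n-k)}=I^{\otimes(n-k)}$. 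The bracket therefore equals $M_k\otimes I^{\otimes(n-k)}$, and conjugating back by $P_\alpha$ gives exactly $M_k^{(\alpha)}$.

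Summing this single-term identity over $\alpha\in I_{n,k}$ yields the claim. The only step that needs care is the bookkeeping of non-contiguous index sets $\alpha$. I handle it through the permutation invariance of $M^{\otimes n}$, so I do not expect a real obstacle. Note that $M^2=I$ is used only on the complementary qubits $\bar\alpha$; on the support $\alpha$ the conjugation $M^{\otimes k}h_kM^{\otimes k}$ is simply the definition of $M_k$.

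Two remarks follow. Taking $M=H$ (the Hadamard gate, which satisfies $H^2=I$) together with the normalization $1/C_n^k$ recovers the identity $H_{B,k}=H^{\otimes n}H_{k,0}H^{\otimes n}=\frac{1}{C_n^k}\sum_\alpha X_k^{(\alpha)}$ used earlier. Taking $M=X$ converts $H_{k,0}$ into the Hamiltonian for a target $t$ that flips the corresponding bits, which is the "local basis flip" statement made in the circuit construction.
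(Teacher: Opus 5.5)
Your proof is correct, and it is the natural direct argument. The paper gives no proof of its own: it imports the lemma from Ref.~\cite{mine1} and invokes it only as a ``basis transformation argument.'' Linearity reduces the lemma to $M^{\otimes n}h_\alpha M^{\otimes n}=M_k^{(\alpha)}$, and the copies of $M$ on the $n-k$ qubits outside $\alpha$ cancel because $M^2=I$.

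The one imprecision is in your closing remark, not in the proof. A uniform $M=X$ yields only the all-ones target. The target-$t$ reduction in Lemma~\ref{lemma_reduce_t} instead conjugates by the qubit-dependent product $\bigotimes_j X^{t_j}$, which is not permutation invariant. Your factor-by-factor computation still covers that case verbatim, since each factor squares to $I$, provided you drop the appeal to permutation invariance of $M^{\otimes n}$ there.
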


\begin{lem}[Lemma~B.3 in Ref.~\cite{mine1}]\label{lemma_HX_commutator}
	For any integer $k$, the operator $H^k e^{i\theta h_k} H^k$ commutes with the $X$ gate, irrespective of upon which qubit the $X$ gate operates.
\end{lem}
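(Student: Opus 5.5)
The plan is to compute the operator explicitly and reduce the claim to an eigenvector statement. The first step is to rewrite $h_k=\sigma_{z^+}^{\otimes k}$ as the rank-one projector $\ket{0}^{\otimes k}\bra{0}^{\otimes k}$. Since $h_k$ is a projector ($h_k^2=h_k$), the power series of the exponential collapses to
\begin{equation*}
	e^{i\theta h_k}=I+\left(e^{i\theta}-1\right)h_k .
\end{equation*}
Conjugating by $H^{\otimes k}$ uses $H^2=I$ and $H\ket{0}=\ket{+}$, which gives
\begin{equation*}
	H^{\otimes k}e^{i\theta h_k}H^{\otimes k}=I+\left(e^{i\theta}-1\right)\ket{+}^{\otimes k}\bra{+}^{\otimes k}.
\end{equation*}
This is the operator $e^{i\theta X_k}$ with $X_k=H^{\otimes k}h_kH^{\otimes k}$, consistent with the definition in Appendix~\ref{apsubsubsec_circuit_k_local_QS}. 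I read the statement's $H^k$ as $H^{\otimes k}$.

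Next I will treat the location of the $X$ gate in two cases. If $X$ acts on a qubit outside the $k$ qubits on which the operator is supported, the two operators act on disjoint tensor factors. They therefore commute trivially.

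If $X$ acts on one of the $k$ supporting qubits, say qubit $j$, the key fact is that $\ket{+}$ is a $+1$ eigenvector of $X$. Hence $X_j\ket{+}^{\otimes k}=\ket{+}^{\otimes k}$, and taking the adjoint, $\bra{+}^{\otimes k}X_j=\bra{+}^{\otimes k}$. It follows that
\begin{equation*}
	X_j\ket{+}^{\otimes k}\bra{+}^{\otimes k}=\ket{+}^{\otimes k}\bra{+}^{\otimes k}=\ket{+}^{\otimes k}\bra{+}^{\otimes k}X_j .
\end{equation*}
The identity part commutes with $X_j$ trivially. So $X_j$ commutes with the whole operator for every real $\theta$ and every $k$.

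I expect no genuine obstacle here. The only step needing care is the closed form of the exponential, which relies on $h_k$ being idempotent. That is immediate because it is a tensor product of projectors. The same argument also covers any product of several $X$ gates, and hence the bit-flip conjugations used to move from target $t=0$ to a general target $t$.
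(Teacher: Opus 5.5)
Your proof is correct and complete. Since $h_k=(\ket{0}\bra{0})^{\otimes k}$ is idempotent, the conjugated operator equals $I+(e^{i\theta}-1)\ket{+}^{\otimes k}\bra{+}^{\otimes k}$. Every $X_j$ on its support fixes $\ket{+}^{\otimes k}$, and an $X$ off the support acts on a disjoint tensor factor, so commutation follows. Reading $H^k$ as $H^{\otimes k}$ is the only interpretation that makes sense.

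The paper gives no proof of this statement; it imports it as Lemma~B.3 of Ref.~\cite{mine1}. So there is no in-paper argument to compare yours against.

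A shorter alternative uses $HXH=Z$. Conjugating by $H^{\otimes k}$ turns the claim into $[Z_j,e^{i\theta h_k}]=0$, which holds because both operators are diagonal in the computational basis. This route never uses that $h_k$ is a projector. It therefore applies directly to $H^{\otimes n}e^{-i\theta H_{k_0,0}}H^{\otimes n}$ for the whole diagonal Hamiltonian $H_{k_0,0}$, which is the form in which the lemma is used in Lemma~\ref{lemma_reduce_t}.

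Your explicit computation has its own advantages. It identifies the operator as $e^{i\theta X_k}$ and shows that the mechanism behind the commutation is $X\ket{+}=\ket{+}$. As you note, it also extends to products of $X$ gates such as $X_C$, after factorizing the diagonal exponential into commuting local terms.
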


\begin{lem}[Lemma~B.7 in Ref.~\cite{mine1}]\label{lemma_commutator}
	For any sufficiently small $\theta_1,\theta_2$, if the qubits acted upon by $e^{i\theta_1 h_{k_1}}$ overlap with those acted upon by $e^{i\theta_2 X_{k_2}}$, then the two operators approximately commute, with an error of order ${\mathcal O}(\theta_1 \theta_2)$.
\end{lem}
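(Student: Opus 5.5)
The plan is to exploit the fact that both generators are orthogonal projectors. This makes the exponentials affine in the generators, so the commutator can be computed exactly rather than through a Baker--Campbell--Hausdorff expansion.

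First, recall that $h_k=\sigma_{z^+}^{\otimes k}$ is a tensor product of rank-one projectors, so $h_k^2=h_k$ and $h_k^\dagger=h_k$. Since $X_k=H^{\otimes k}h_kH^{\otimes k}$ is a unitary conjugate of $h_k$ (with $H^2=I$), it is also a projector: $X_k^2=X_k$. Embedding either operator into the $n$-qubit space by tensoring with the identity on the remaining qubits preserves idempotency. For any projector $P$ the power series collapses, since $P^j=P$ for $j\ge 1$, giving
\begin{equation*}
e^{i\theta P}=I+\left(e^{i\theta}-1\right)P .
\end{equation*}
Apply this to $P=h_{k_1}^{(\alpha)}$ with angle $\theta_1$ and to $P=X_{k_2}^{(\beta)}$ with angle $\theta_2$.

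Second, multiply out both orderings. The identity terms and the single-generator terms agree in $e^{i\theta_1 h}e^{i\theta_2 X}$ and $e^{i\theta_2 X}e^{i\theta_1 h}$, so only the bilinear term survives:
\begin{equation*}
e^{i\theta_1 h}e^{i\theta_2 X}-e^{i\theta_2 X}e^{i\theta_1 h}=\left(e^{i\theta_1}-1\right)\left(e^{i\theta_2}-1\right)[h,X].
\end{equation*}
Using $|e^{i\theta}-1|\le|\theta|$ and $\|[h,X]\|\le 2\|h\|\|X\|=2$ (both are projectors of norm at most $1$), the operator-norm error is at most $2|\theta_1\theta_2|=\mathcal{O}(\theta_1\theta_2)$. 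This is the claimed bound, and it holds without any expansion remainder, so the qualifier "sufficiently small" is needed only to make the bound meaningful.

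Finally, the overlap hypothesis only identifies the nontrivial case. If the supports $\alpha$ and $\beta$ are disjoint, the operators act on different tensor factors and commute exactly, consistent with the bound. There is no real obstacle here; the one point requiring care is checking that the conjugated operator $X_k$, after embedding with identities, is still idempotent, which is what makes the closed-form exponential valid. If a relative rather than absolute error is wanted, the same identity gives it directly, because both exponentials are unitary.
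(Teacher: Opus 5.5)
Your argument is correct and complete. Since $h_{k_1}$ and $X_{k_2}=H^{\otimes k_2}h_{k_2}H^{\otimes k_2}$ remain orthogonal projectors after padding with identities, $e^{i\theta P}=I+(e^{i\theta}-1)P$ is exact. The commutator of the exponentials therefore equals $(e^{i\theta_1}-1)(e^{i\theta_2}-1)[h,X]$, and its norm is at most $2|\theta_1\theta_2|$.

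The paper gives no proof to compare against; it imports the statement as Lemma~B.7 of Ref.~\cite{mine1}. The phrase ``sufficiently small $\theta_1,\theta_2$'' suggests the intended route is a first-order expansion $e^{i\theta P}=I+i\theta P+\mathcal{O}(\theta^2)$, which gives $-\theta_1\theta_2[h,X]$ plus Taylor remainders. Your idempotency identity instead gives an exact formula, so the smallness hypothesis is superfluous. More usefully, it gives a constant that is explicit and uniform in $k_1$, $k_2$, the supports and the angles.

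That uniformity is what the paper actually needs in Lemma~\ref{lemma_eigen_reduction}. There $e^{-i\theta\frac{1}{n+1}H^{(n+1)}_{B,1,1}}$ is moved past $e^{-i\theta\frac{k}{n+1}H'_{n,k-1}}$, which is a product of $C_n^{k-1}$ commuting projector exponentials, each supported on qubit $n+1$ and each with angle $\theta k/((n+1)C_n^{k-1})$. The telescoping bound $\|[U_1\cdots U_N,V]\|\le\sum_j\|[U_j,V]\|$ for unitaries, combined with your estimate, gives a total error of at most $2k\theta^2/(n+1)^2$. This is exactly the claimed $\mathcal{O}(\theta^2n^{-2})$. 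An unquantified per-factor $\mathcal{O}(\theta_1\theta_2)$ would leave open whether the hidden constant survives summation over $C_n^{k-1}$ terms.

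The constant $2$ can even be improved to $\tfrac{1}{2}$, since $[P,Q]=[P-\tfrac{1}{2}I,Q-\tfrac{1}{2}I]$ and $\|P-\tfrac{1}{2}I\|=\tfrac{1}{2}$ for any projector, but nothing depends on this.
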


\begin{lem}[Lemma~3.3 in Ref.~\cite{mine1}]\label{lem_model_reduction}
	For any fixed $\epsilon > 0$ and sufficiently large $n$, if $m = \Omega(n^{1+\epsilon})$ and the random max-$k$-SSAT problem under $F_f(n,m,k)$ can be solved in polynomial time on average, then the corresponding problem under $F_s(n,m,k)$ can also be solved in polynomial time on average.
\end{lem}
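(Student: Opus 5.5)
The plan is to show that the distribution of $F_s(n,m,k)$ is \emph{dominated}, in Levin's sense $\mu_s\lesssim\mu_f$ from Appendix~\ref{apsubsec_average_case}, by that of $F_f(n,m,k)$ with the planted assignment $t_0$ drawn uniformly. Once $\mu_s'(I)\le \mathrm{poly}(n)\,\mu_f'(I)$ holds for every instance $I$, the average runtime of any algorithm $\mathcal{A}$ that is polynomial on average under $F_f$ transfers to $F_s$:
\begin{equation*}
\sum_I \mu_s'(I)\,t(I)/|I| \le \mathrm{poly}(n)\sum_I \mu_f'(I)\,t(I)/|I|.
\end{equation*}
In the standard Levin framework this domination gives polynomial-on-average time under $F_s$, since domination by a polynomial factor is exactly what the reduction $\lesssim$ requires. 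We use the identity map as the reduction, because the instance sets coincide: both are the satisfiable formulas.

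\textbf{Step 1: write both probability masses explicitly.} Let $N=C_n^k$ and let $S(I)\ge 1$ be the number of satisfying assignments of $I$. An ordered clause sequence $I$ has probability
\begin{equation*}
\mu_s'(I)=\frac{(2^kN)^{-m}}{\Pr_F[\mathrm{sat}]}
\end{equation*}
under $F_s$, and probability
\begin{equation*}
\mu_f'(I)=\frac{S(I)}{2^n}\bigl((2^k-1)N\bigr)^{-m}
\end{equation*}
under $F_f$, because $I$ is generated exactly when $t_0$ is one of its $S(I)$ solutions. Writing $\mathbb{E}_F[S]=2^n(1-2^{-k})^m$, the ratio is
\begin{equation*}
\frac{\mu_s'(I)}{\mu_f'(I)}=\frac{\mathbb{E}_F[S]}{S(I)\,\Pr_F[\mathrm{sat}]}\le \frac{\mathbb{E}_F[S]}{\Pr_F[\mathrm{sat}]}.
\end{equation*}

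\textbf{Step 2: bound the remaining ratio by the planted solution count.} We lower-bound $\Pr_F[\mathrm{sat}]$ by the second moment method, $\Pr_F[\mathrm{sat}]\ge \mathbb{E}_F[S]^2/\mathbb{E}_F[S^2]$. This gives
\begin{equation*}
\frac{\mathbb{E}_F[S]}{\Pr_F[\mathrm{sat}]}\le \frac{\mathbb{E}_F[S^2]}{\mathbb{E}_F[S]}=\sum_{t'}\Pr_F\bigl[t'\text{ sat}\mid t_0\text{ sat}\bigr]=\mathbb{E}_{F_f}[S].
\end{equation*}
The middle equality uses the symmetry over $t_0$, and the last sum is the expected number of solutions of a planted instance. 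Grouping the $t'$ by Hamming distance $d=d_H(t',t_0)$ yields
\begin{equation*}
\mathbb{E}_{F_f}[S]=\sum_{d=0}^n C_n^d\, q_d^m,\qquad q_d=1-\frac{1}{2^k-1}\Bigl(1-\frac{C_{n-d}^k}{N}\Bigr),
\end{equation*}
since a clause satisfied by $t_0$ fails on $t'$ exactly when it is the single falsifying sign pattern on a $k$-set that touches one of the $d$ flipped bits. This is the same combinatorics that produced $\mu_{k,x}$ in Sec.~\ref{subsec_statistical_analysis}.

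\textbf{Step 3: show $\sum_{d\ge1}C_n^d q_d^m=o(1)$, so that $\mathbb{E}_{F_f}[S]=1+o(1)$ and the domination constant is $O(1)$.} This estimate is the main obstacle. For small $d$ (say $d\le n/2$) we have $1-C_{n-d}^k/N\ge c\,kd/n$, hence
\begin{equation*}
C_n^d q_d^m\le \exp\!\Bigl(d\ln n-\frac{c'\,d\,m}{n}\Bigr)=\exp\!\bigl(d(\ln n-c'n^{\epsilon})\bigr),
\end{equation*}
and the resulting geometric series in $d$ is $O\bigl(n e^{-c'n^{\epsilon}}\bigr)$. For $d>n/2$ we have $q_d\le 1-c''$ uniformly, so the total contribution of this range is at most $2^n(1-c'')^m=2^n e^{-\Omega(n^{1+\epsilon})}=o(1)$. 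The hypotheses $\epsilon>0$ and $m=\Omega(n^{1+\epsilon})$ are used precisely here: $m/n$ must dominate $\ln n$ in the small-$d$ range.

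I expect the main difficulty to be the careful lower bound on $1-C_{n-d}^k/N$ uniformly over the small-$d$ range, together with checking that Levin's definition permits our choice of length function and cumulative-distribution convention. If the second moment inequality turns out to be too lossy, the fallback is to compute $\Pr_F[\mathrm{sat}]$ directly through its planted representation, which gives the same quantity $\mathbb{E}_{F_f}[S]$.
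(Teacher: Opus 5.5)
Your proof is correct. The paper gives no proof of this lemma. It imports it from Ref.~\cite{mine1}, and the only hint at the mechanism is the caption of Fig.~\ref{fig_clause}: once the solution set collapses to $\{t_0\}$, the clause-selection sets of $F_s$ and $F_f$ coincide, so $F_s$ ``converges'' to $F_f$ as $m$ grows. Your argument turns that heuristic into a self-contained quantitative statement.

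Step~1 is the exact identity $\mu_f'(I)=S(I)\,\mu_s'(I)/\mathbb{E}_{F_s}[S]$. In other words, the planted ensemble is the size-biased satisfiable ensemble. It follows that $\mathbb{E}_{F_f}[S]=\mathbb{E}_{F_s}[S^2]/\mathbb{E}_{F_s}[S]\ge\mathbb{E}_{F_s}[S]$, which is exactly your second-moment step. Together with Step~3 this gives $\mu_s'(I)\le(1+o(1))\,\mu_f'(I)$ for every satisfiable $I$. Because both sides are $1+o(1)$, the fallback you mention is unnecessary.

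This conclusion is stronger than the lemma needs. It shows that $F_s$ and $F_f$ are $o(1)$-close in total variation, and that $\mathbb{E}_{F_s}[t]\le(1+o(1))\,\mathbb{E}_{F_f}[t]$ for every nonnegative cost $t$. The transfer therefore holds under any reasonable notion of ``polynomial on average'', and your worry about Levin's conventions goes away. State the conclusion through this constant factor rather than the $\mathrm{poly}(n)$ factor you first aim for. Under the criterion $\sum_x\mu'(x)t(x)/|x|<\infty$ written in App.~\ref{apsubsec_average_case}, a merely polynomial domination factor would not suffice on its own.

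Step~3 can also be simplified:
\begin{itemize}
\item The bound $C_{n-d}^k/C_n^k\le (n-d)/n$ gives $q_d\le e^{-d/((2^k-1)n)}$ for every $1\le d\le n$.
\item Hence $\sum_{d\ge1}C_n^d q_d^m\le\sum_{d\ge1}\bigl(n\,e^{-m/((2^k-1)n)}\bigr)^d$, with no case split needed.
\item This is already $o(1)$ for $m\ge C n\ln n$ with $C>2^k-1$, so $m=\Omega(n^{1+\epsilon})$ is more than enough.
\end{itemize}

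Finally, state explicitly the two modelling conventions your formulas rely on, since the paper is loose about both.
\begin{itemize}
\item $F_s$ must be $F(n,m,k)$ conditioned on satisfiability, i.e.\ the rejection-sampling definition of Sec.~\ref{subsec_statistical_analysis}. The wording in App.~\ref{apsubsec_average_case} might suggest a clause-by-clause ``discard'' process instead. That process has a non-uniform likelihood on satisfiable formulas, and the ratio $\mathbb{E}_{F_s}[S]/S(I)$ would no longer apply.
\item $t_0$ in $F_f$ must be uniformly random and unknown to the solver. For a fixed $t_0$ the $F_f$ problem is trivial (output $t_0$), and its support carries only about a $2^{-n}$ fraction of the $F_s$ mass.
\end{itemize}
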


In addition to the results presented above, a slight modification of a lemma from Ref.~\cite{mine1} is required to accommodate arbitrary target states $t$ in the circuit. The modified statement and its corresponding proof are given in Lemma~\ref{lemma_reduce_t}. This lemma establishes that, after generalizing the locality of the circuit, it suffices to consider the case $t = 0$ in the analysis of an arbitrary target state.

\begin{lem}\label{lemma_reduce_t}
	Consider a $k$-local quantum search circuit of the form
	\begin{equation}\label{eq_general_k_local_QS}
		\ket{\psi} = \left( e^{-i\theta H_{B,k_0}} e^{-i\theta H_{k}} \right)^p \ket{+}^{\otimes n}.
	\end{equation}
	The evolution with an arbitrary target state $t$ can be reduced to the case with $t = 0$.
\end{lem}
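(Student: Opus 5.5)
The plan is to conjugate the whole circuit by the Pauli string $X^{t} := \bigotimes_{j} X_j^{t_j}$, which flips exactly those qubits where $t_j=1$. The reduction needs three facts: this flip maps the target-$t$ problem Hamiltonian to the target-$0$ one, it commutes with the mixer, and it fixes the initial state. With these in hand, the flip can be pushed through every layer of the circuit.

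\textbf{Problem Hamiltonian.} Write $H_{k,t}$ for the $k$-local search Hamiltonian with target $t$. By definition, $H_{k,t}\ket{x}=f_k(x)\ket{x}$, and $f_k(x)$ depends only on $l=n-d_H(x,t)$. Since $d_H(x\oplus t,0)=d_H(x,t)$, we get
\[
X^{t} H_{k,0} X^{t}\ket{x} = f_{k}^{(0)}(x\oplus t)\ket{x} = f_{k}^{(t)}(x)\ket{x},
\]
so $H_{k,t}=X^{t}H_{k,0}X^{t}$. Equivalently, in the decomposition $H_{k,0}=\frac{1}{C_n^k}\sum_\alpha h_k^{(\alpha)}$, each local projector $\sigma_{z^+}$ becomes $\sigma_{z^-}$ on the flipped qubits. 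This is the "local basis flips" remark made earlier, and it gives $e^{-i\theta H_{k,t}}=X^{t}e^{-i\theta H_{k,0}}X^{t}$.

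\textbf{Mixer.} Next I would show that $X^{t}$ commutes with $H_{B,k_0}$ for any locality $k_0$ (including $k_0=1$). Write $H_{B,k_0}=\frac{1}{C_n^{k_0}}\sum_\alpha X_{k_0}^{(\alpha)}$ with $X_{k_0}=H^{\otimes k_0}h_{k_0}H^{\otimes k_0}$. Then
\[
X_{k_0}=\bigotimes_{j=1}^{k_0}\ket{+}\!\bra{+}.
\]
Since $X\ket{+}=\ket{+}$, we have $X\ket{+}\!\bra{+}=\ket{+}\!\bra{+}=\ket{+}\!\bra{+}X$, so each local term commutes with an $X$ on any qubit. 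Single-qubit $X$ gates on qubits outside the support $\alpha$ commute with the term trivially. This is exactly Lemma~\ref{lemma_HX_commutator}, applied at the level of the exponentials $H^{\otimes k_0}e^{i\theta h_{k_0}}H^{\otimes k_0}$. Hence $X^{t}e^{-i\theta H_{B,k_0}}X^{t}=e^{-i\theta H_{B,k_0}}$. The point to check is that the argument does not rely on the mixer locality matching the problem locality; it does not, because each mixer term is just a tensor product of $\ket{+}\!\bra{+}$'s. This is the only genuine modification relative to Ref.~\cite{mine1}, and I expect it to be routine.

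\textbf{Initial state and telescoping.} Since $X^{t}\ket{+}^{\otimes n}=\ket{+}^{\otimes n}$, inserting $X^{t}X^{t}=I$ between layers gives
\[
\ket{\psi_t} = \left(e^{-i\theta H_{B,k_0}}X^{t}e^{-i\theta H_{k,0}}X^{t}\right)^p\ket{+}^{\otimes n} = X^{t}\left(e^{-i\theta H_{B,k_0}}e^{-i\theta H_{k,0}}\right)^p\ket{+}^{\otimes n} = X^{t}\ket{\psi_0}.
\]
Therefore $\langle x|\psi_t\rangle=\langle x\oplus t|\psi_0\rangle$; in particular, the amplitude on $\ket{t}$ equals the amplitude on $\ket{0}$ in the $t=0$ circuit, and the full amplitude distribution over Hamming distances is preserved. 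The identical argument covers layer-dependent angles $(\gamma_d,\beta_d)$ and the adiabatic interpolation $sH_k+(1-s)H_{B,1}$, since $X^{t}$ intertwines the problem Hamiltonian and commutes with the mixer for every $s$. The same goes for $\bar H_C$ under $F_f$, because the clause distribution with fixed $t_0$ is the image under $x\mapsto x\oplus t_0$ of the distribution with $t_0=0$.
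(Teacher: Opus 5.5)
Your proposal is correct and follows essentially the same route as the paper. You conjugate the target-$t$ problem evolution by the flip $X^{t}=\bigotimes_j X_j^{t_j}$, use that the mixer commutes with every $X$ gate (Lemma~\ref{lemma_HX_commutator}) so the inserted flips cancel in pairs, and are left with a single $X^{t}$ in front. The differences are cosmetic: you treat general $(k,k_0)$ in one step instead of first doing $k=k_0=1$, you verify the commutation directly via $X_{k_0}=(\ket{+}\!\bra{+})^{\otimes k_0}$, and you state explicitly that the trailing flip is absorbed by $X^{t}\ket{+}^{\otimes n}=\ket{+}^{\otimes n}$, which the paper leaves implicit.
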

\begin{proof}
	% reduce the circuit to the standard form 
	% 首先是k = 1
	We first consider the case $k = k_0 = 1$. The Hamiltonian $H_{k,0}$ is proportional to $H_Z = \sum_j \sigma_z^{(j)}$, i.e., $H_{k,0}$ equals $H_Z / 2n$ up to a global phase. Using the identity $e^{-i\theta \sigma_z} = X e^{i\theta \sigma_z} X$, the Hamiltonian $H_{k}$ for an arbitrary target $t$ can be expressed as $H_{k,0}$ conjugated by a set of $X^{t_j}$ gates acting on the $j$-th qubit from both sides, where $t_j$ is the $j$-th bit of $t$, with $X^1 = X$ and $X^0 = I$. 
	
	Consequently, the evolution of the 1-local quantum search reads
	\begin{equation*}
		\ket{\psi_p} = \left( H^{\otimes n} e^{-i \pi H_Z / 2n} H^{\otimes n} X_C e^{-i \pi H_Z / 2n} X_C \right)^p \ket{+}^{\otimes n},
	\end{equation*}
	where $X_C = \prod_j X_{(j)}^{t_j}$, and $X_{(j)}^{t_j}$ denotes $X^{t_j}$ acting on the $j$-th qubit. 
	Since $H e^{i\theta \sigma_z} H$ commutes with $X^{t_j}$, most of the $X_C$ operators cancel, leaving only a single $X_C$ at the front. 
	Thus, the evolution reduces to
	\begin{equation*}
		\ket{\psi_p} = X_C \left( H^{\otimes n} e^{i \pi H_Z / 2n} H^{\otimes n} e^{i \pi H_Z / 2n} \right)^p \ket{+}^{\otimes n}.
	\end{equation*}
	Here, the operator $X_C$ fully encodes the target state $\ket{t}$, since $X_C \ket{0}^{\otimes n} = \ket{t}$. 
	Therefore, the 1-local quantum search circuit with arbitrary target $t$ reduces to the standard case with $t = 0$.  
	
	We now extend the argument to general $k$ and $k_0$. 
	Using $X_{(j)}^{t_j} X_{(j)}^{t_j} = I$, any additional pairs of $X_{(j)}^{t_j}$ gates appearing in each term $e^{i\theta h_k^{(\alpha)}}$ can be factored out by Lemma~\ref{lemma_exchange_H}. 
	Accordingly, the evolution with target $t$ can be written as
	\begin{equation*}
		\ket{\psi_p} = \left( H^{\otimes n} e^{-i\pi H_{k_0}} H^{\otimes n} X_C e^{-i\pi H_{k}} X_C \right)^p \ket{+}^{\otimes n}.
	\end{equation*}
	Moreover, Lemma~\ref{lemma_HX_commutator} ensures that $H^k e^{i\theta h_k} H^k$ commutes with $X$ gates on all qubits. By the same reasoning as above, the reduction to the standard form extends naturally to the $k$-local quantum search circuit.
\end{proof}

\subsection{Proof of efficiency of universal-mixer variant on search problems}
\label{apsubsec_proof_search_on_search}
% 先证明在搜索问题上的有效性

This subsection analyzes the efficiency of the universal-mixer $k$-local quantum search in the regime of small constant $k$, extending the result established in Lemma~\ref{lemma_1_local_QS} for the 1-local case. As $k$ increases, the objective function of the $k$-local search evolves gradually. To quantify this change, we define the deviation
\begin{equation}
	\Delta f_{k}(x) = f_{k+1}(x) - f_{k}(x),
\end{equation}
which can be explicitly expressed as
\begin{equation}\label{eq_deviation_fkx}
	\Delta f_{k}(x) = -\frac{n-l}{n-k} f_k(x),
\end{equation}
where $l = n - d_H(x,t)$ and $d_H(x,t)$ denotes the Hamming distance between $x$ and the target $t$. Importantly, $\Delta f_{k}(x) \le 0$ for any $k<n$ and arbitrary $x$, reflecting that the increase in locality reduces the contribution of structural information inherent in the objective function.

We begin by recalling the inductive framework developed in Ref.~\cite{mine1}, where the efficiency of the \textit{k}-local quantum search is established via an induction on the locality parameter $k$. Starting from the $k=1$ case (Lemma~\ref{lemma_1_local_QS}), both the problem Hamiltonian and the mixer Hamiltonian are lifted to higher locality in a coordinated manner, leading to the circuit in Eq.~\eqref{eq_ap_k_local_QS}.

In contrast, the universal-mixer variant considered in this work follows a different inductive strategy. Instead of increasing the locality of both components, we fix the mixer Hamiltonian at $k=1$, as in Eq.~\eqref{eq_ap_um_k_local_search}, and only increase the locality of the problem Hamiltonian. The analysis is therefore carried out by induction on $k$, starting from the $k=1$ case, while treating the increment from $H_k$ to $H_{k+1}$ explicitly. We focus on the regime of small constant $k$, and show that this inductive step preserves the efficiency scaling of the algorithm.

For the discrete circuit in Eq.~\eqref{eq_ap_um_k_local_search}, directly analyzing the fidelity of the evolved state is technically challenging. 
Following the approach of Ref.~\cite{mine1}, we instead interpret the evolution from the perspective of phase rotations. Applying the second-order Trotter--Suzuki formula
\begin{equation*}
	e^{-i\theta (H_A + H_B)} 
	= e^{-i\theta/2 H_A} e^{-i\theta H_B} e^{-i\theta/2 H_A} + \mathcal{O}(\theta^3),
\end{equation*}
the unitary evolution can be approximated as
\begin{align}\label{eq_trotterized_um_QS}
	\left( e^{-i\theta H_{B,n,1}} e^{-i\theta H_{n,k}} \right)^{p} |+\rangle
	&= e^{-i p \theta \mathcal{H}_{n,k,1}} |+\rangle \nonumber\\
	&\quad + \mathcal{O}(n^{-1}) + \mathcal{O}(\theta^3 p)
\end{align}
where the effective Hamiltonian is defined as
\begin{equation}
	\mathcal{H}_{n,k,1} = H_{n,k} + H_{B,n,1}.
\end{equation}
Here $H_{n,k}$ and $H_{B,n,1}$ are the scaled versions of $H_{k}$ and $H_{B,1}$, respectively, with an additional factor of order $n$ introduced.

Under this approximation, the discrete-time evolution is reduced to a continuous-time evolution generated by $\mathcal{H}_k$, in which the dynamics can be viewed as rotations within its eigenspace. Consequently, the performance of the algorithm is governed by the spectral properties of $\mathcal{H}_k$. Specifically, the key quantity is the spectral gap $g_k$, defined as the energy difference between the highest and second-highest eigenvalues of $\mathcal{H}_k$. Here the target state corresponds to the computational basis state with the highest energy under the problem Hamiltonian. Under this formulation, the central task is to establish that
\begin{equation*}
	g_k = \Theta(n^{-1}).
\end{equation*}

The proof follows the same structural route as in Ref.~\cite{mine1}. Lemma~\ref{lemma_eigen_reduction} characterize the algebraic structure of the Hamiltonian and its invariant subspace. Based on these results, Lemma~\ref{lemma_gap} establishes that the spectral gap of $\mathcal{H}_k$ scales as $\Theta(n^{-1})$. Finally, Lemma~\ref{lemma_k_local_QS} implies that, for $\theta = \Theta(n^{-1})$, the required number of iterations satisfies
\begin{equation*}
	p = \mathcal{O}(n^2),
\end{equation*}
which completes the proof of efficiency.

\begin{lem}\label{lemma_eigen_reduction}
	For a sufficiently large $n$, the eigenvalues of $\mathcal{H}_{n+1,k}$ are of the same order of magnitude as those of $\tilde{\mathcal{H}}_{n+1,k}$, where
	\begin{equation}\label{eq_approxi_Hn}
		\tilde{\mathcal{H}}_{n+1,k} = \left[ \begin{matrix}
			\frac{n+1-k}{n+1}\mathcal{H}_{n,k} &  \\
			& 	\frac{n+1-k}{n+1}\mathcal{H}_{n,k} + \frac{k}{n+1}(H_{n,k-1}+I)  \\
		\end{matrix} \right].
	\end{equation}
\end{lem}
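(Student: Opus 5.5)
By Lemma~\ref{lemma_reduce_t} it suffices to take the target $t=0$. We then split the $(n+1)$-qubit space on the last qubit, $\mathcal{H}_{n+1}=\mathrm{span}\{\ket{0}\}\otimes\mathcal{H}_n\oplus\mathrm{span}\{\ket{1}\}\otimes\mathcal{H}_n$, and write $\mathcal{H}_{n+1,k}=H_{n+1,k}+H_{B,n+1,1}$ in this $2\times 2$ block form. The aim is to show that the diagonal blocks reproduce $\tilde{\mathcal{H}}_{n+1,k}$ up to lower-order terms, and that the off-diagonal coupling is too small to change the order of any eigenvalue.

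\textbf{Problem Hamiltonian.} Use the decomposition $H_{k,0}=\frac{1}{C_{n+1}^k}\sum_{\alpha\in I_{n+1,k}}h_k^{(\alpha)}$ and split the $k$-tuples $\alpha$ by whether they contain the last qubit.
\begin{itemize}
\item Tuples not containing it number $C_n^k$. Since $C_n^k/C_{n+1}^k=\frac{n+1-k}{n+1}$, they contribute $\frac{n+1-k}{n+1}H_{n,k}$ on both diagonal blocks.
\item Tuples containing it carry the factor $\sigma_{z^+}$ on that qubit, so they act only on one block. They number $C_n^{k-1}$, and $C_n^{k-1}/C_{n+1}^k=\frac{k}{n+1}$, so they contribute $\frac{k}{n+1}H_{n,k-1}$ on that block.
\end{itemize}
The same identity is just the combinatorial relation $C_{l+1}^k/C_{n+1}^k=\frac{n+1-k}{n+1}\frac{C_l^k}{C_n^k}+\frac{k}{n+1}\frac{C_l^{k-1}}{C_n^{k-1}}$ on the function $f_k$. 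The ordering of the blocks in Eq.~\eqref{eq_approxi_Hn} is then a choice of labeling.

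\textbf{Mixer.} $H_{B,n+1,1}$ splits as $H_{B,n,1}\otimes I$ plus a single-qubit term $\frac{c}{n+1}\,\sigma_x^{(n+1)}$ (after the paper's normalization and global phase, with the scaling factor of order $n$ absorbed).
\begin{itemize}
\item The first piece is block-diagonal. Its normalization changes from $\frac{1}{n}$ to $\frac{1}{n+1}$, which gives exactly the prefactor $\frac{n+1-k}{n+1}$ in front of $H_{B,n,1}$ up to a correction $\frac{k-1}{n+1}H_{B,n,1}$. That correction is a bounded-norm $\mathcal{O}(n^{-1})$-relative perturbation. The identity term on the second block likewise comes from the global-phase and normalization bookkeeping of the single-qubit mixer piece, which equals $\frac{k}{n+1}I$ on the block.
\item The second piece, $\frac{c}{n+1}\sigma_x^{(n+1)}$, is purely off-diagonal and has norm $\mathcal{O}(1/(n+1))$ relative to the total scale.
\end{itemize}

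\textbf{Controlling the off-diagonal coupling (main obstacle).} Writing $\mathcal{H}_{n+1,k}=\tilde{\mathcal{H}}_{n+1,k}+R$, where $R$ collects the off-diagonal coupling and the $\mathcal{O}(n^{-1})$ diagonal mismatch, Weyl's inequality gives $|\lambda_j(\mathcal{H}_{n+1,k})-\lambda_j(\tilde{\mathcal{H}}_{n+1,k})|\le\|R\|=\mathcal{O}(n^{-1})$ in the normalized units.

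The difficulty is that the spectral gaps themselves are $\Theta(n^{-1})$. A naive Weyl bound therefore only preserves order for eigenvalues of order at least $n^{-1}$, which is precisely what the statement asks. For the top eigenvalues I would sharpen the bound using second-order perturbation theory. The coupling $\sigma_x^{(n+1)}$ connects the two blocks, whose leading eigenvalues differ by $\frac{k}{n+1}\cdot\Theta(1)$, because $H_{n,k-1}+I$ shifts the top state by an amount of order $k/(n+1)$. The second-order shift is then $\mathcal{O}\bigl((n^{-1})^2/(k n^{-1})\bigr)=\mathcal{O}(n^{-1})$ with a small constant. This keeps each eigenvalue within a constant factor of its $\tilde{\mathcal{H}}$ counterpart and gives the claimed order equivalence, which feeds into the induction on $k$ in Lemma~\ref{lemma_gap}.
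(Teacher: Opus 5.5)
Your splitting of $H_{n+1,k}$ by the last qubit is correct. The gap is the claim that the $\frac{k}{n+1}I$ on the second block of $\tilde{\mathcal{H}}_{n+1,k}$ comes from ``global-phase and normalization bookkeeping'' of the single-qubit mixer, and the rest of your argument depends on it.

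In the paper's normalization, the last-qubit mixer piece is $\frac{1}{n+1}Hh_1H=\frac{1}{2(n+1)}\bigl(I+\sigma_x^{(n+1)}\bigr)$. Its coefficient is $\frac{1}{n+1}$, not $\frac{k}{n+1}$. It adds the \emph{same} $\frac{1}{2(n+1)}I$ to both diagonal blocks, plus a coupling $\frac{1}{2(n+1)}I$ off the diagonal. Set $A=\frac{n+1-k}{n+1}H_{n,k}+\frac{n}{n+1}H_{B,n,1}$, $D=\frac{k}{n+1}H_{n,k-1}$ and $c=\frac{1}{2(n+1)}$. Then the exact diagonal blocks are $A+D+cI$ and $A+cI$, and the coupling between them is $cI$. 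The relative shift $D$ is present; the extra $\frac{k}{n+1}I$ is not, and no global phase can create a relative shift between the blocks.

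In the paper, that diagonal shift appears only because the proof works with the Trotterized unitary $U_{H_1}U_{H_2}$ and replaces $U_{H_2}$ by $H^{(n+1)}U_{H_2}H^{(n+1)}$. This rotates $Hh_1H$ on the last qubit into the diagonal $h_1$, and the paper asserts that the order of the spectrum is unchanged. The shift is therefore the image of the off-diagonal coupling under that rotation, so you cannot keep both, as your splitting $\mathcal{H}_{n+1,k}=\tilde{\mathcal{H}}_{n+1,k}+R$ does. Computed correctly, $R$ also contains a block-asymmetric $-\frac{k}{n+1}I$. That term is of the same order as the gap, not a lower-order term.

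The perturbative step also does not deliver what Lemma~\ref{lemma_gap} needs from this statement, namely transfer of a $\Theta(n^{-1})$ gap. The pieces of $R$ (the coupling, $\frac{k-1}{n+1}H_{B,n,1}$, and the missing shift) all have norm $\Theta(n^{-1})$. Weyl therefore moves eigenvalues by amounts of gap size and says nothing about the gap. The second-order sharpening has four further problems:
\begin{itemize}
\item Your correction is itself $\Theta(n^{-1})$, so ``with a small constant'' has to be proved with explicit constants, not asserted.
\item The expansion parameter, coupling over separation, is about $1/(2k\langle H_{n,k-1}\rangle)$. This is a fixed constant, not a vanishing one.
\item The separation you use includes the spurious $+I$. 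In the true operator it is only the state-dependent $\frac{k}{n+1}\langle H_{n,k-1}\rangle$.
\item For the second eigenvalue, the relevant denominators lie between the top of the $A+cI$ block and the subleading levels of the $A+D+cI$ block. These are only of gap order and can be smaller, so nondegenerate second-order theory is not justified there.
\end{itemize}
To rescue your route, you would need a quantitative two-level analysis of the top two eigenvalues of the correct coupled operator, for example via a Schur complement. It must carry explicit constants showing that the gap stays $\Theta(n^{-1})$.
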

\begin{proof}
	% give the decomposition and reduce the operator 
	With $\theta = \Theta(n^{-1})$, the analysis of the spectral gap of $\mathcal{H}_{n,k}$ can be simplified by considering the gap of the Hamiltonian associated with 
	\begin{equation*}
		U_{n+1,k}(\theta) = e^{-i\theta H_{B,n+1,1}} e^{-i\theta H_{n+1,k}}.
	\end{equation*}
	Moreover, the Hamiltonian 
	\begin{equation*}
		H_{n+1,k} = \frac{1}{C_{n+1}^k} \sum_{\alpha \in I_{n+1,k}} h_k^{(\alpha)}
	\end{equation*}
	can be decomposed according to whether the ($n$+1)-th qubit is involved in the evolution. The only difference lies in the additional normalization factor $1/C_{n+1}^k$. Explicitly, we have
	\begin{equation*}
		H_{n+1,k} = \frac{n+1-k}{n+1} I \otimes H_{n,k} + \frac{k}{n+1} H'_{n,k-1},
	\end{equation*}
	where $H'_{n,k-1}$ corresponds to $H_{n,k-1}$ with an additional control qubit at the ($n$+1)-th position.  
	
	For the mixer Hamiltonian $H_{B,n+1,1}$, which acts locally on each qubit, we can decompose it into two parts as follows:
	\begin{equation*}
		H_{B,n+1,1} = \frac{n}{n+1} I \otimes H_{B,n,1} + \frac{1}{n+1} H_{B,1,1} \otimes I_n,
	\end{equation*}
	where $I_n$ denotes the identity operator on the first $n$ qubits. In the following, we denote by $H_{B,1,1}^{(n+1)}$ the single-qubit mixer Hamiltonian $H_{B,1,1}$ acting on the ($n$+1)-th qubit; explicitly, $H_{B,1,1}^{(n+1)} = H h_1 H$.
	
	Based on this, $U_{n+1,k}(\theta)$ can be expanded as 
	\begin{align*}
		U_{n+1,k}(\theta) & = e^{-i\theta\frac{n}{n+1} H_{B,n,1}} e^{-i\theta\frac{1}{n+1} H_{B,1,1}^{(n+1)}}  \nonumber\\
		& \quad \times  e^{-i\theta\frac{k}{n+1} H'_{n,k-1}} e^{-i\theta\frac{n+1-k}{n+1} H_{n,k}}.   
	\end{align*}
	There are $C_n^{k-1}$ terms in $e^{-i\theta\frac{k}{n+1} H'_{n,k-1}}$, each associated with a coefficient $1/C_{n}^{k-1}$. Due to the factors, $e^{-i\theta\frac{k}{n+1} H'_{n,k-1}}$ can commute with $e^{-i\theta\frac{1}{n+1} H_{B,1,1}}$ with a deviation on the order of ${\mathcal O}(\theta^2 n^{-2})$, as established in Lemma \ref{lemma_commutator}. Consequently, the evolution is reduced to 
	\begin{align*}
		U_{n+1,k}(\theta) =  & \, e^{-i\theta\frac{n}{n+1} H_{B,n,1}}  e^{-i\theta\frac{k}{n+1} H'_{n,k-1}} \\\nonumber
		& e^{-i\theta\frac{1}{n+1} H^{(n+1)}_{B,1,1}} e^{-i\theta\frac{n+1-k}{n+1} H_{n,k}} 
		+ {\mathcal O}(n^{-4}).
	\end{align*}
	
	% given the evolution in block matrix form
	Regardless of the negligible deviation in the operator, we denote the dominant component of $U_{n+1,k}(\theta)$ as $U_0(\theta)$. $U_0(\theta)$ is constituted by two analogous components, that is, $U_0(\theta) = U_{H_1, n+1, k}(\theta)U_{H_2, n+1, k}(\theta)$, respectively expressed as 
	\begin{align*}
		& U_{H_1, n+1, k}(\theta) = H^{\otimes n+1} e^{-i\theta\frac{n}{n+1} H_{n,1} }H^{\otimes n+1} e^{- i\theta\frac{k}{n+1} H'_{n,k-1}}, \nonumber\\
		&  U_{H_2, n+1, k}(\theta) = H^{\otimes n+1} e^{-i\theta\frac{1}{n+1} h_{1}^{(n+1)}} H^{\otimes n+1} e^{-i\theta\frac{n+1-k}{n+1} H_{n,k}}.
	\end{align*}
	% give the decompostion by the (n+1)-th qubit
	For an arbitrary computational basis $\left| x \right>$, the evolution of $U_{H_1, n+1, k}(\theta)$ on the states $\left| 0 \right> \left| x \right>$ and $\left| 1 \right> \left| x \right>$ are expressed as 
	\begin{align*}
		& U_{H_1, n+1, k}(\theta)  \left| 0 \right> \left| x \right> = e^{-i\theta\frac{n}{n+1} H_{B, n, 1} }  \left| 0 \right> \left| x \right>, \nonumber\\
		& U_{H_1, n+1, k}(\theta)  \left| 1 \right> \left| x \right> = e^{-i\theta\frac{n}{n+1} H_{B, n, 1} } e^{- i\theta\frac{k}{n+1} H_{n,k-1}} \left| 1 \right> \left| x \right>. 
	\end{align*}
	Thus, the evolution operator $U_{H_1, n+1, k}(\theta)$ can be reformulated in block-matrix form as
	\begin{align*}
	& U_{H_1, n+1, k}(\theta) = \\\nonumber
	& \quad\quad \left[ \begin{matrix}
			e^{-i\theta\frac{n+1-k}{n+1} H_{B, n, 1} } &  \\
			& e^{-i\theta\frac{n+1-k}{n+1} H_{B, n, 1} } e^{-i\theta\frac{k}{n+1} H_{n,k-1}}  \\
		\end{matrix} \right]. 
	\end{align*}
	
	Similarly, the evolution of $H^{(n+1)} U_{H_2, n+1, k}(\theta)H^{(n+1)}$ can be expressed as 
	\begin{align*}
		H^{(n+1)} U_{H_2, n+1, k}(\theta) H^{(n+1)} & \left| 0 \right> \left| x \right> = e^{-i\theta\frac{n+1-k}{n+1} H_{n,k} }  \left| 0 \right> \left| x \right>,\nonumber\\
		H^{(n+1)} U_{H_2, n+1, k}(\theta) H^{(n+1)}  &  \left| 1 \right> \left| x \right> = \nonumber\\
		& \,\,\,\, e^{- i\theta\frac{k}{n+1}} e^{-i\theta\frac{n+1-k}{n+1} H_{n,k} }  \left| 1 \right> \left| x \right>,
	\end{align*}
	which can also be represented in block-matrix form as
	\begin{align*}
		H^{(n+1)}& U_{H_2, n+1, k}(\theta)H^{(n+1)} = \nonumber\\
		&\quad\quad\quad \left[ \begin{matrix}
			e^{-i\theta\frac{n+1-k}{n+1} H_{n,k} } &  \\
			& e^{-i\theta \frac{k}{n+1}} e^{-i\theta\frac{n+1-k}{n+1}H_{n,k} } \\
		\end{matrix} \right]. 
	\end{align*}
	
	As $n$ increases, the extra pair of Hadamard gates $H^{(n+1)}$ applied to a single qubit cannot affect the magnitude of the eigenspectrum of the entire Hamiltonian. This is straightforward in a large-scale quantum system that the application of a single-qubit gate does not significantly alter the eigenspectrum of the overall Hamiltonian, as the system's energy levels are predominantly governed by the multi-qubit interactions. Consequently, our analysis focuses on the eigendecomposition of $U'_0 = U_{H_1, n+1, k}(\theta)H^{(n+1)}U_{H_2, n+1, k}(\theta)H^{(n+1)}$, which can be represented in block-matrix form as 
	\begin{equation*}
		U'_0 = \left[ \begin{matrix}
			U_{00}  &  \\
			& U_{11}  \\
		\end{matrix} \right],
	\end{equation*}
	where 
	\begin{align*}
		& U_{00} = e^{-i\theta\frac{n+1-k}{n+1} H_{B, n, 1} } e^{-i\theta\frac{n+1-k}{n+1} H_{n,k} },  \nonumber\\
		& U_{11} = e^{-i\theta\frac{n+1-k}{n+1}  H_{B, n, 1} } e^{-i\theta \frac{k}{n+1}  H_{n,k-1}} e^{-i \theta\frac{k}{n+1}}  e^{-i\theta\frac{n+1-k}{n+1}H_{n,k} }. 
	\end{align*}
	According to the Trotter-Suzuki decomposition, $U'_0$ has an identical eigenspectrum to that of 
	\begin{equation*}
		\mathcal{U}'_0 = \left[ \begin{matrix}
			e^{-i\theta\frac{n+1-k}{n+1}\mathcal{H}_{n,k} }&  \\
			& 	e^{-i\theta \left( \frac{n+1-k}{n+1}\mathcal{H}_{n,k} + \frac{k}{n+1}(H_{n,k-1}+I)  \right) }   \\
		\end{matrix} \right]
	\end{equation*}
	up to a negligible deviation. Thus, the spectral gap of the Hamiltonian associated with $U_{n+1,k}(\theta)$ is of the same magnitude as that of $\,\mathcal{U}'_0$.
\end{proof}

\begin{lem}\label{lemma_gap} 
	For a small constant $k$, the spectral gap $g_k$ of $\mathcal{H}_{n,k}$ scales as $\Theta(n^{-1})$. 
\end{lem}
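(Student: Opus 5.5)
The plan is a double induction, on the locality $k$ and on the system size $n$, driven by the block reduction of Lemma~\ref{lemma_eigen_reduction}.

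\textbf{Base case $k=1$.} Here $\mathcal{H}_{n,1}=H_{n,1}+H_{B,n,1}$ is a sum of identical single-qubit terms, conjugated by $X_C$ (Lemma~\ref{lemma_reduce_t}). Its spectrum is obtained exactly from the single-qubit $2\times 2$ problem underlying Lemma~\ref{lemma_eigen_1_local_QS}. Each qubit contributes two levels separated by a splitting of order $n^{-1}$ after the $1/n$ normalization, matching the phases $e^{\pm i\pi/(\sqrt{2}n)}$ in $E_0$. The top two eigenvalues of the tensor sum therefore differ by one single-qubit splitting, which gives $g_1=\Theta(n^{-1})$, consistent with the $n/\sqrt{2}$ iteration count of Lemma~\ref{lemma_1_local_QS}.

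\textbf{Inductive step $k-1\to k$.} Assume $g_{k-1}=\Theta(n^{-1})$ and $g_k(n)=\Theta(n^{-1})$ at size $n$; the goal is to pass to $n+1$ via Lemma~\ref{lemma_eigen_reduction}. That lemma shows that, up to negligible error, the spectrum of $\mathcal{H}_{n+1,k}$ is that of the block-diagonal $\tilde{\mathcal{H}}_{n+1,k}$.

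The first block is $\frac{n+1-k}{n+1}\mathcal{H}_{n,k}$. It inherits the gap $\frac{n+1-k}{n+1}g_k(n)=\Theta(n^{-1})$, since $k$ is constant.

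The second block equals the first plus the perturbation $P=\frac{k}{n+1}(H_{n,k-1}+I)$. Since $\|H_{n,k-1}\|\le 1$, we have $\|P\|=O(n^{-1})$, which is of the same order as the gap. Weyl's inequality alone therefore does not suffice, and the relative ordering of levels must be controlled. Two facts are available for this:
\begin{itemize}
\item $P$ is diagonal and maximized precisely at the target ($f_{k-1}(t)=1$).
\item For the relevant low-lying states, $H_{n,k-1}-H_{n,k}$ is controlled by Eq.~\eqref{eq_deviation_fkx}, giving $f_{k-1}-f_k=\frac{n-l}{n-k+1}f_{k-1}=O(d_H/n)$.
\end{itemize}
With these, do first-order perturbation on the top two eigenvectors of $\mathcal{H}_{n,k}$. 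Because these eigenvectors have bounded Hamming-weight spread around $t$ (base case picture), the shift $\langle P\rangle$ differs between them by only $O(n^{-2})$. Hence the second block keeps a gap $\Theta(n^{-1})$ between its own top two levels.

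\textbf{Cross-block step.} The global gap is the difference between the largest eigenvalue overall (in the second block, because of the $+\frac{k}{n+1}I$ shift and the nonnegative $H_{n,k-1}$) and the next one. The candidates for the next level are the second level of block two and the top level of block one. The latter is separated from the overall top by $\langle P\rangle_{\rm top}\approx\frac{k}{n+1}(1+f_{k-1})$, which is $\Theta(n^{-1})$ and bounded below by $k/(n+1)$. The global gap is then the minimum of two $\Theta(n^{-1})$ quantities, and the upper bound $O(n^{-1})$ follows from $\|P\|=O(n^{-1})$ together with the first block's gap.

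\textbf{Constants and main obstacle.} The constants must not degrade as the induction in $n$ is iterated $\Theta(n)$ times. I would handle this by tracking $n\,g_k(n)$ and showing that its recursion $n g_k(n+1)\approx (n+1-k)g_k(n)+O(n^{-1})$ has multiplicative factors $(1-k/(n+1))$ that telescope to a constant. I expect this uniform-in-$n$ control, together with the claim that the top eigenvectors stay concentrated near $t$, to be the main obstacle. My first attempt would be to borrow the concentration from the explicit tensor-product eigenvectors $V_0^{\otimes n}$ of Lemma~\ref{lemma_eigen_1_local_QS} and propagate it perturbatively in $k$.
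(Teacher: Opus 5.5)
Your skeleton is the paper's proof: induction on $n$ through the block form of Lemma~\ref{lemma_eigen_reduction}. The upper block inherits a rescaled gap, the lower block is treated as a perturbation, and the cross-block separation comes from the $\frac{k}{n+1}I$ shift. The paper handles the lower block differently: it uses Eq.~\eqref{eq_deviation_fkx} to restore the coefficient of $H_{n,k}$ to $1$, drops the leftover $H_{n,k-1}$ term as entrywise $\mathcal{O}(n^{-2})$, and reads the rest as a shift $\Delta s=\frac{k}{n+1}$ of the mixer weight. Your hypothesis on $g_{k-1}$ is never used, so yours is also a single induction on $n$, and it needs a base size for each $k\ge 2$.

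The problems are in what you add. First, the top eigenvectors are not concentrated near $t$. For $k=1$ the top eigenvector is $(\cos\frac{\pi}{8}\ket{0}+\sin\frac{\pi}{8}\ket{1})^{\otimes n}$ in the $t=0$ frame. So $d_H(x,t)$ has mean $n\sin^2\frac{\pi}{8}\approx 0.15\,n$ and spread $\Theta(\sqrt{n})$, and on such states $f_{k-1}-f_k=\frac{d_H}{n-k+1}f_{k-1}$ is $\Theta(1)$. For the same reason the concentration cannot be ``propagated perturbatively in $k$''. Your estimate $\langle P\rangle_{\mathrm{top}}-\langle P\rangle_{\mathrm{2nd}}=\mathcal{O}(n^{-2})$ does survive, but for another reason: the two levels differ by a single one-qubit excitation, so $\langle f_{k-1}\rangle$ moves by only $\mathcal{O}(n^{-1})$. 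Because $\|P\|$ is of the order of the gap, you must also control the off-diagonal part of $P$. Its coupling of the top state to the rest has norm $\frac{k}{n+1}$ times the standard deviation of $f_{k-1}$ in that state, i.e.\ $\mathcal{O}(n^{-3/2})$, which keeps second-order shifts at $\mathcal{O}(n^{-2})$.

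The decisive failure is the constant control, which the paper's proof also skips: it simply asserts that $\Theta(n^{-1})$ survives each of the $\Theta(n)$ inductive steps.
\begin{itemize}
\item The product $\prod_{j=n_0}^{N}(1-\frac{k}{j+1})=\Theta((n_0/N)^{k})$ does not telescope to a constant.
\item Write your lower-block step as $g_k(n+1)=(1-\frac{k}{n+1})g_k(n)+\delta_n$, with $\delta_n$ the $\mathcal{O}(n^{-2})$ correction you discard. Then $G(n)=n\,g_k(n)$ obeys $G(n+1)=(1-\frac{k-1}{n})G(n)+(n+1)\delta_n$.
\item For $k=1$ you have $P=\frac{2}{n+1}I$, so $\delta_n=0$, and the factor $1-\frac{1}{n+1}$ reproduces $\sqrt{2}/n$ exactly.
\item For every $k\ge 2$ the homogeneous part drives $G$ to zero like $n^{-(k-1)}$. $G$ stays bounded away from zero only if $\delta_n\ge c\,n^{-2}$ for a uniform $c>0$, in which case $G\to c/(k-1)$.
\end{itemize}
So the term you treat as negligible is exactly the one that sets the $\Theta(n^{-1})$ constant. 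A bound $|\delta_n|=\mathcal{O}(n^{-2})$ gives only $g_k=\mathcal{O}(n^{-1})$, and your cross-block step already gives that for free: with $B_1,B_2$ the two blocks, $g_k(n+1)\le\lambda_{\max}(B_2)-\lambda_{\max}(B_1)\le\|P\|\le\frac{2k}{n+1}$.

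The lower bound needs a sign-definite evaluation of $\delta_n$ at order $n^{-2}$. It is heuristically positive, since the top state sits closer to $t$, but the second-order terms enter at the same order. Lemma~\ref{lemma_eigen_reduction} cannot deliver that precision. It only claims agreement of eigenvalues in order of magnitude. It also reaches the block form by ignoring the Hadamard conjugation on qubit $n+1$: the mixer term $\frac{1}{n+1}H_{B,1,1}^{(n+1)}$ couples the two blocks with strength $\Theta(n^{-1})$, and the lemma replaces it by a block-diagonal one.

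A route that avoids the induction is to use permutation invariance: for $t=0$, both $H_{n,k}$ and $H_{B,n,1}$ commute with qubit permutations. This reduces $\mathcal{H}_{n,k}$ to collective-spin sectors, where it is an intensive mean-field Hamiltonian. A Holstein--Primakoff expansion around its classical maximum, when that maximum is non-degenerate, then gives a level spacing $\omega/n$ with an explicit $\omega>0$.
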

\begin{proof}
	We proceed by induction on $n$. For $k=1$, Lemma~\ref{lemma_eigen_1_local_QS} establishes that the spectral gap is of order $\Theta(n^{-1})$. For a fixed small constant $k \ge 2$, the lemma can be verified directly when $n$ is not large but still sufficiently greater than $k$.
	
	Assume that the lemma holds for $\mathcal{H}_{n,k}$, i.e., its spectral gap is $\Theta(n^{-1})$. We aim to establish its validity for $\mathcal{H}_{n+1,k}$. According to Lemma~\ref{lemma_eigen_reduction}, the spectral gap of $\mathcal{H}_{n+1,k}$ is of the same order as that of the block-diagonal approximation $\tilde{\mathcal{H}}_{n+1,k}$ shown in Eq.~\eqref{eq_approxi_Hn}. The spectral gap of $\tilde{\mathcal{H}}_{n+1,k}$ is determined by the internal gaps of the blocks $H_{00}$ and $H_{11}$, as well as the energy difference between their respective maximal eigenstates.
	
	For the internal gap of $H_{00}$, since $H_{00} = \frac{n+1-k}{n+1}\mathcal{H}_{n,k}$, its spectral gap is simply the gap of $\mathcal{H}_{n,k}$ scaled by a factor close to 1. By the inductive hypothesis, this remains $\Theta(n^{-1})$. For the internal gap of $H_{11}$, utilizing the deviation of the objective function in Eq.~\eqref{eq_deviation_fkx}, we can expand the lower block as
	\begin{align*}
		H_{11} = H_{n,k} + & \frac{n+1-k}{n+1}H_{B,n,1} \nonumber\\
		& \quad + \frac{k(n-d)}{(n+1)(n-k+1)}H_{n,k-1} + \frac{k}{n+1}I.
	\end{align*}
	Note that the term involving $H_{n,k-1}$ is a diagonal positive semi-definite matrix where every element is on the order of $\mathcal{O}(n^{-2})$. Thus, to determine the magnitude of the gap, we focus on the leading-order components:
	\begin{equation*}
		H'_{11} = H_{n,k} + \frac{n+1-k}{n+1}H_{B,n,1} + \frac{k}{n+1}I.
	\end{equation*}
	Up to a global phase, consider the parameterized Hamiltonian $H(s) = H_{n,k} + sH_{B,n,1}$. Based on the reduction premise, the gap of $H(1)$ is $\Theta(n^{-1})$. Introducing a small perturbation $\Delta s = \frac{k}{n+1}$ (or a global shift) does not change the magnitude of the gap. Thus, the internal gap of $H_{11}$ remains $\Theta(n^{-1})$.
	
	Regarding the gap between the blocks, the term $\frac{k}{n+1}I$ in $H_{11}$ provides a uniform energy shift relative to $H_{00}$. The energy difference between the minimal eigenvalues of $H_{00}$ and $H_{11}$ is thus dominated by this $\Theta(n^{-1})$ shift. Combining these results, the overall spectral gap of $\tilde{\mathcal{H}}_{n+1,k}$ scales as $\Theta(n^{-1})$, completing the inductive step.
\end{proof}

Based on the spectral gap analysis above, and similarly to the original $k$-local quantum search, we can directly invoke the results from Ref.~\cite{mine1} to characterize the evolution of the universal-mixer variant. For clarity, we briefly restate the key argument here.

\begin{lem}[Lemma~B.9 in Ref.~\cite{mine1}]\label{lemma_k_local_QS}
	With $\theta = \Theta(n^{-1})$, ${\mathcal O}(n^2)$ iterations are necessary for $k$-local quantum search to evolve the state such that the amplitude of $\left| t \right>$ reaches its first local maximum.
\end{lem}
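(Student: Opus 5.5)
We reduce the discrete circuit to a continuous rotation governed by the spectrum of the effective Hamiltonian, and then read off the time needed to reach the first amplitude maximum.

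\textbf{Step 1: reduction to a fixed target and a continuous evolution.} By Lemma~\ref{lemma_reduce_t}, it suffices to treat the target $t=0$. With $\theta=\Theta(n^{-1})$, the second-order Trotter--Suzuki approximation in Eq.~\eqref{eq_trotterized_um_QS} replaces $p$ iterations by the continuous evolution $e^{-ip\theta\mathcal{H}_{n,k,1}}\ket{+}^{\otimes n}$. The accumulated error is $\mathcal{O}(\theta^3p)$, which is $\mathcal{O}(n^{-1})$ for $p=\mathcal{O}(n^2)$. The question then becomes: for how long must $\mathcal{H}_{n,k,1}$ act before the overlap with $\ket{t}$ first peaks?

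\textbf{Step 2: two-level effective dynamics.} Decompose $\ket{+}^{\otimes n}$ and $\ket{t}$ in the eigenbasis of $\mathcal{H}_{n,k,1}$. We will argue that both states have weight $1-o(1)$ on the span of the top two eigenvectors $\ket{\psi_0},\ket{\psi_1}$.
\begin{itemize}
\item For $k=1$ this follows from the explicit product eigendecomposition of Lemma~\ref{lemma_eigen_1_local_QS}. Each qubit is rotated by $V_0$, so the initial and target states lie, up to $\mathcal{O}(n^{-1})$ corrections, in the relevant slowly rotating sector.
\item For general small $k$, we propagate this property through the same induction on $n$ used in Lemma~\ref{lemma_eigen_reduction}. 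The block structure $\tilde{\mathcal{H}}_{n+1,k}$ shows that the dominant eigenvectors of the enlarged system are built from those of $\mathcal{H}_{n,k}$ and its perturbed copy, and the perturbation changes the overlaps only by $\mathcal{O}(n^{-1})$ per step.
\end{itemize}

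\textbf{Step 3: locating the first maximum.} Restricted to this two-dimensional subspace, the amplitude on $\ket{t}$ oscillates as $|\sin(g_k\,p\theta/2+\phi)|$, up to a constant phase $\phi$. Its first local maximum therefore occurs at $p\theta=\Theta(g_k^{-1})$. Lemma~\ref{lemma_gap} gives $g_k=\Theta(n^{-1})$ in the units of $H_{n,k}$. Undoing the order-$n$ rescaling between $H_{n,k}$ and $H_k$, the effective angular frequency per iteration is $\Theta(\theta\,n^{-1})=\Theta(n^{-2})$. Hence $p=\Theta(n^2)$: $\mathcal{O}(n^2)$ iterations suffice, and because the maximum is the first one, $\Omega(n^2)$ are also needed.

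\textbf{Main obstacle.} Step~2 is the hardest part: controlling the overlaps with eigenvectors other than the top two, so that the two-level picture is valid. Lemma~\ref{lemma_gap} only bounds the gap, not the spectral weight. The plan is to carry the overlap bounds along the same induction on $n$, as follows.
\begin{itemize}
\item The $\frac{k}{n+1}I$ shift separates the two blocks by $\Theta(n^{-1})$.
\item The remaining eigenvalues lie a further $\Theta(n^{-1})$ away.
\item The leakage into them therefore contributes only bounded oscillatory terms of relative size $o(1)$. These do not shift the first maximum beyond a constant factor.
\end{itemize}
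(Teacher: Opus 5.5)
Your Step~2 is not just the hard step. It is false, already in the base case you cite.

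\textbf{The two-level reduction fails at $k=1$.} Take $t=0$. Then $\mathcal{H}_{n,1}=\frac1n\sum_j A^{(j)}$ with $A=\ket{0}\bra{0}+\ket{+}\bra{+}$. The eigenvalues of $A$ are $1\pm1/\sqrt2$, with top eigenvector $\cos(\pi/8)\ket{0}+\sin(\pi/8)\ket{1}$; this is exactly the rotation $V_0$ of Lemma~\ref{lemma_eigen_1_local_QS}. Hence:
\begin{itemize}
\item The spectrum of $\mathcal{H}_{n,1}$ is $1+\frac{1}{\sqrt2}-\frac{\sqrt2\,j}{n}$ for $j=0,\dots,n$. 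It is equally spaced by $g=\sqrt2/n$, and level $j$ is $\binom{n}{j}$-fold degenerate.
\item Both $\ket{+}^{\otimes n}$ and $\ket{0}^{\otimes n}$ carry weight $\binom{n}{j}\cos^{2(n-j)}(\pi/8)\sin^{2j}(\pi/8)$ on level $j$. This is binomial, centred at $j\approx n\sin^2(\pi/8)$.
\item The weight on the top eigenvector is $\cos^{2n}(\pi/8)\approx 0.85^n$. On the top two levels together it is $\cos^{2n}(\pi/8)\bigl(1+n\tan^2(\pi/8)\bigr)$, still exponentially small.
\end{itemize}
So the leakage out of your two-dimensional subspace is $1-o(1)$, not $o(1)$. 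There is nothing to propagate through the induction of Lemma~\ref{lemma_eigen_reduction}, and the law $|\sin(gp\theta/2+\phi)|$ in Step~3 has no basis.

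The dynamics is genuinely many-level. For $k=1$ it is a product of identical single-qubit Bloch rotations by angle $\alpha=p\theta g$ about the $(x+z)/\sqrt2$ axis. The target probability is exactly $\bigl(\frac{3-\cos\alpha}{4}\bigr)^{n}$.

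\textbf{What the paper uses instead.} The time scale is set by the spacing of the populated energy differences, not by spectral weight. The paper's proof writes $e^{-i\theta p\mathcal{H}_{n,k}}$ as a rotation in the eigenbasis and compares the target with the eigenstate $\ket{\tilde t}$ nearest in energy. It then uses only $g=\Theta(n^{-1})$ from Lemma~\ref{lemma_gap}: the relative phase $\theta p g$ becomes $\mathcal{O}(1)$ once $p$ exceeds $\Theta(n^2)$, after which over-rotation sets in. Hence the first maximum is reached within $\mathcal{O}(n^2)$ iterations. No control of overlaps with individual eigenvectors is needed.

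The $k=1$ formula above is consistent with this. Its first maximum sits exactly at $p\theta g=\pi$, i.e.\ $p=\pi n/(\sqrt2\,\theta)=\Theta(n^2)$, which agrees with Lemma~\ref{lemma_1_local_QS} when $\theta=\pi$. If you replace Steps~2--3 by this phase-accumulation argument, your Step~1 and your count $\theta g=\Theta(n^{-2})$ suffice.

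\textbf{The lower bound.} Your claim ``$\Omega(n^2)$ because it is the first maximum'' does not follow from $|\sin(gp\theta/2+\phi)|$ with $\phi$ unspecified, since $\phi$ near $\pi/2$ gives an early maximum. It needs the tiny initial overlap $2^{-n/2}$. In the $k=1$ case this shows up as monotonicity of $\bigl(\frac{3-\cos\alpha}{4}\bigr)^{n}$ on $[0,\pi]$. The paper's own argument supports only the upper-bound reading.
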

\begin{proof}
	Assume that $\mathcal{H}_{n,k}$ admits the eigendecomposition $\mathcal{H}_{n,k} = V^\dagger E V$. The evolution operator $e^{-i \theta p_\theta \mathcal{H}_{n,k}}$ can then be interpreted as a rotation within the eigenspace defined by $V$. Let $\ket{\tilde{t}}$ denote the eigenstate whose eigenvalue is closest to that of the target state $\ket{t}$. 
	
	Given that the spectral gap between $\ket{t}$ and $\ket{\tilde{t}}$ is of order $\Theta(n^{-1})$, if the number of iterations $p_\theta$ significantly exceeds $\Theta(n^2)$, the accumulated phase difference between $\ket{t}$ and $\ket{\tilde{t}}$ becomes of order $\mathcal{O}(1)$, i.e., $\theta p_\theta g_{n,k} > \mathcal{O}(1)$. Due to the periodicity of the phase, this leads to substantial over-rotation, causing phase differences among computational basis states to be effectively randomized over the interval $[0, 2\pi]$. In this regime, the amplitude of the target state can no longer increase monotonically. Consequently, the required number of iterations to reach the first local maximum is on the order of $\mathcal{O}(n^2)$.
\end{proof}

\subsection{Proof of efficiency of universal-mixer variant on max-\textit{k}-SSAT}
\label{apsubsec_proof_search_on_SAT}
% 然后证明搜索算法在SAT上的有效性

In the previous subsection, we established that the universal-mixer variant inherits the efficiency of the $k$-local quantum search algorithm on the idealized $k$-local search problem. Although this structured problem is relatively simple, it possesses a key property that allows the analysis to be extended to random max-$k$-SSAT instances. In Sec.~\ref{subsec_statistical_analysis}, our statistical study of the $F_f(n,m,k)$ ensemble revealed that the average-case behavior of the normalized problem Hamiltonian for random max-$k$-SSAT instances precisely corresponds to that of the $k$-local quantum search, as expressed in Eq.~\eqref{eq_normalized_HC}. Furthermore, the Gaussian convergence described in Eq.~\eqref{eq_normal_distribution} indicates that, in the high clause-density regime, deviations from the ideal Hamiltonian are small with high probability and do not significantly perturb its principal components. Based on these observations, this subsection demonstrates that, for random max-$k$-SSAT instances, the universal-mixer variant similarly retains its efficiency in the regime of sufficiently high clause density.

When applying the universal-mixer variant to random max-$k$-SSAT instances, the ideal $k$-local Hamiltonian $H_k$ in Eq.~\eqref{eq_ap_k_local_QS} can be replaced by the normalized problem Hamiltonian $\bar{H}_C$ for $k$-SAT instances. The resulting circuit is
\begin{equation}
	\ket{\psi} = \left( e^{-i\theta H_{B,k}} e^{-i\theta \bar{H}_C} \right)^p \ket{+}^{\otimes n}.
\end{equation}
As established in Section~\ref{subsec_statistical_analysis}, the normalized Hamiltonian $\bar{H}_C$ for random instances drawn from $F_f(n,m,k)$ converges in probability to $H_k$ as $m\to\infty$. Equivalently, the deviation $\Delta H_C = \bar{H}_C - H_k$ vanishes in probability as the clause number $m$ increases. More concretely, using the Gaussian approximation in Eq.~\eqref{eq_normal_distribution}, the eigenvalue $\mathcal{E}_{k,x}$ of $H_C$ satisfies the concentration bound
\begin{equation}\label{eq_Ekx_range}
	\left| \frac{1}{m} \mathcal{E}_{k,x} - \mu_{k,x} \right| \le \frac{c}{\sqrt{m(2^k-1)}}
\end{equation}
with probability at least $\operatorname{erf}(c/\sqrt{2})$, where $c>0$ is a tuning constant.

Based on this convergence, one can determine how large $m$ must be for the perturbation $\Delta H_C$ to remain negligible. Following the approach of Ref.~\cite{mine1}, the following theorem specifies the regime of $m$ in which, when applied to random max-$k$-SSAT instances, the universal-mixer variant achieves performance comparable to that of the ideal $k$-local search problem. For brevity, we provide only a sketch of the proof here, as the argument follows closely the reasoning in Ref.~\cite{mine1} with minor modifications to account for the universal mixer.

\begin{thm}\label{theo_threshold}
	For any fixed $\epsilon>0$ and sufficiently large $n$, the universal-mixer $k$-local quantum search algorithm, when applied to random max-$k$-SSAT instances drawn from $F_f(n,m,k)$ with $m=\Theta (n^{2+\epsilon+\delta})$, achieves efficiency comparable to that of the $k$-local search problem, with probability $1-\mathcal{O}(\mathrm{erfc}(n^{\delta/2}))$.
\end{thm}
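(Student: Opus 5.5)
The plan is a perturbative argument on top of the ideal analysis. Write $\bar{H}_C = H_k + \Delta H_C$ and compare the actual circuit with the ideal universal-mixer circuit of Eq.~\eqref{eq_ap_um_k_local_search}. The ideal circuit succeeds with $\theta=\Theta(n^{-1})$ and $p=\mathcal{O}(n^2)$ by Lemmas~\ref{lemma_gap} and~\ref{lemma_k_local_QS}. By the triangle inequality over $p$ layers (each $e^{-i\theta(\cdot)}$ is unitary), the deviation of the final state is at most
\[
p\,\theta\,\|\Delta H_C\|_{\mathrm{eff}} ,
\]
where $\|\cdot\|_{\mathrm{eff}}$ denotes the size of $\Delta H_C$ on the portion of Hilbert space actually explored. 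With $p\theta=\Theta(n)$, it suffices to show that this effective deviation is $o(n^{-1})$, say $\mathcal{O}(n^{-1-\epsilon/2})$, with the stated probability.

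\textbf{Step 1 (entrywise concentration).} From Eq.~\eqref{eq_Ekx_range} and the normalization Eq.~\eqref{eq_bar_HC}, each diagonal entry satisfies
\[
|\langle x|\Delta H_C|x\rangle| \le \frac{(2^k-1)c}{\sqrt{m(2^k-1)}} = \mathcal{O}\!\left(\frac{c}{\sqrt m}\right)
\]
with probability $\mathrm{erf}(c/\sqrt2)$. Choose $c=\Theta(n^{\delta/2})$. With $m=\Theta(n^{2+\epsilon+\delta})$, the deviation per entry is then $\mathcal{O}(n^{-1-\epsilon/2})$, and the failure probability per entry is $\mathrm{erfc}(\Theta(n^{\delta/2}))$.

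\textbf{Step 2 (from entries to the relevant operator).} I would not take a union bound over all $2^n$ basis states, since this loses exponentially. Instead, following Ref.~\cite{mine1}, I would use the fact that $\mathcal{E}_{k,x}$ depends on $x$ only through how clauses interact with $d_H(x,t)$. The evolution of the ideal circuit lives essentially in the symmetric (Hamming-weight) subspace relative to $t$ (Lemma~\ref{lemma_reduce_t} allows setting $t=0$). Projected onto the span of weight classes, $\Delta H_C$ has matrix elements given by class-averages of the entrywise deviations. These averages are sums of the same $m$ i.i.d.\ clause contributions, so Eq.~\eqref{eq_normal_distribution} applies to them with no larger variance. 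There are only $n+1$ classes, so a union bound costs a factor of $n$, which is absorbed into $\mathcal{O}(\mathrm{erfc}(n^{\delta/2}))$ after slightly adjusting constants.

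\textbf{Step 3 (spectral stability).} A perturbation of size $\mathcal{O}(n^{-1-\epsilon/2})$ is $o(g_k)$, since $g_k=\Theta(n^{-1})$ by Lemma~\ref{lemma_gap}. Weyl's inequality and Davis--Kahan therefore keep both the gap and the top eigenvector of the effective Hamiltonian $\mathcal{H}_{n,k,1}$ stable, with eigenvector rotation $\mathcal{O}(n^{-\epsilon/2})$. The accumulated phase error over $p\theta=\Theta(n)$ total time is $\mathcal{O}(n^{-\epsilon/2})\to0$. Hence the rotation argument of Lemma~\ref{lemma_k_local_QS} still reaches the first amplitude maximum at $p=\mathcal{O}(n^2)$ with amplitude close to the ideal one.

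\textbf{Main obstacle.} I expect Step 2 to be the hardest: justifying that the off-symmetric-subspace components of $\Delta H_C$, which break the permutation symmetry, do not leak amplitude. My first attempt would bound the leakage rate by the class-wise variance, which Eq.~\eqref{eq_normal_distribution} controls by $\sigma^2_{k,x}\le 1/(2^k-1)$. I would then show that the total leaked norm over time $\Theta(n)$ is still $\mathcal{O}(n\cdot n^{-1-\epsilon/2})$.
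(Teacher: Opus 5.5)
Your skeleton is the paper's. Entrywise Gaussian concentration with $c=n^{\delta/2}$ gives a normalized deviation $\mathcal{O}(n^{-1-\epsilon/2})$, which is accumulated linearly over a total evolution time $p\theta=\Theta(n)$. The paper takes $\theta=\pi$, $p=\mathcal{O}(n)$, expands $\Delta U$ to first order in $\Delta E_C$, and asserts that each of the $p$ terms $\Delta P_j$ is $o_r(n^{-1})$.

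What you add is a point the paper's sketch skips. Eq.~\eqref{eq_Ekx_devia} holds for each fixed $x$, but the paper's notation $|\Delta E_C|=o_r(n^{-1})$ uses it as an operator bound. This matters. Write $\Delta_x=\langle x|\Delta H_C|x\rangle$. Then $\max_x|\Delta_x|$ is of order $\sqrt{n/m}$ (the usual max-SAT fluctuation over $2^n$ assignments), not $n^{\delta/2}/\sqrt{m}$, and this is not $o(n^{-1})$ when $\epsilon+\delta<1$. So your Step~2 addresses a real issue, and your treatment of the class means $\bar\Delta_w$ is fine. For the same reason, your Step~3 (Weyl/Davis--Kahan) is valid only for the class-averaged perturbation. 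It is also superfluous once you bound the state deviation directly.

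The gap is in your ``main obstacle''. The within-class spread $s_w^2=\binom{n}{w}^{-1}\sum_{d_H(x,t)=w}(\Delta_x-\bar\Delta_w)^2$ averages exponentially many correlated entries. Eq.~\eqref{eq_normal_distribution} only gives $\mathbb{E}[s_w^2]=\mathcal{O}(1/m)$, and Markov then yields failure probability $\mathcal{O}(n^{-\delta})$, not $\mathcal{O}(\mathrm{erfc}(n^{\delta/2}))$.

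A clean fix, which also removes the leakage bookkeeping, is the exact telescoping identity
\[
U_p\cdots U_1-V_p\cdots V_1=\sum_{j=1}^{p}U_p\cdots U_{j+1}\,(U_j-V_j)\,V_{j-1}\cdots V_1,
\]
with $U_j$ the actual layers and $V_j$ the ideal ones.
\begin{itemize}
\item The left factors are unitary, so leaked amplitude never needs tracking.
\item $\|(U_j-V_j)\phi_{j-1}\|\le\theta\|\Delta H_C\,\phi_{j-1}\|$, where $\phi_{j-1}=V_{j-1}\cdots V_1|+\rangle^{\otimes n}$ is deterministic.
\item For a fixed unit vector $\phi$, resampling one clause changes $\|\Delta H_C\phi\|$ by at most $(2^k-1)/m$. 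McDiarmid's inequality together with $\mathbb{E}\|\Delta H_C\phi\|^2\le(2^k-1)/m$ then gives $\|\Delta H_C\phi\|=\mathcal{O}(c/\sqrt{m})$ except with probability $e^{-\Omega(c^2)}$.
\item A union bound over the $p$ layers costs only a polynomial factor, which a lower-order increase of $c$ absorbs.
\end{itemize}
The result is a final-state deviation $\mathcal{O}(p\theta\,n^{\delta/2}/\sqrt{m})=\mathcal{O}(n^{-\epsilon/2})$ with the stated probability. This argument also replaces the paper's unargued dismissal of higher-order terms in $\Delta E_C$, and it makes the symmetric-subspace decomposition unnecessary.
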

\begin{proof}[Sketch of Proof]
	We consider the representative case $\theta = \pi$ and $p = \mathcal{O}(n)$; the argument generalizes to any $\theta < \pi$ by modifying only constant factors in the evolution operator. Setting $c = n^{\delta/2}$ and applying Eq.~\eqref{eq_Ekx_range}, the normalized eigenvalues satisfy
	\begin{equation}\label{eq_Ekx_devia}
		\left| \bar{\mathcal{E}}_{k,x} - E_{k,x} \right| \le \sqrt{\frac{2^k-1}{m}} \, n^{\delta/2},
	\end{equation}
	with probability at least $1 - \mathcal{O}(\operatorname{erfc}(n^{\delta/2}))$, where $\bar{\mathcal{E}}_{k,x}$ and $E_{k,x}$ denote the eigenvalues of $\bar{H}_C$ and $H_k$, respectively. Since $m = \Theta(n^{2+\epsilon+\delta})$, the deviation is bounded by $\mathcal{O}(n^{-(1+\epsilon/2)}) = o(n^{-1})$.
	
	The evolution under the perturbed problem Hamiltonian can be written as
	\begin{equation*}
		\ket{\psi} = \left( E_B (E_C + \Delta E_C) \right)^p \ket{+}^{\otimes n},
	\end{equation*}
	where $E_B = e^{-i\pi H_{B,1}}$, $E_C = e^{-i\pi H_k}$, and $\Delta E_C = e^{-i\pi \bar{H}_C} - e^{-i\pi H_k}$. In the computational basis, $\Delta E_C$ introduces an additional phase shift:
	\begin{equation*}
		(E_C + \Delta E_C) \sum_x \alpha_x \ket{x} = \sum_x \alpha_x e^{-i\pi (E_{k,x} + \Delta \bar{\mathcal{E}}_{k,x})} \ket{x}.
	\end{equation*}
	We write $\left| \Delta E_C \right| = o_r(n^{-1})$ to indicate that this perturbation remains of order $o(n^{-1})$ with probability $1 - \mathcal{O}(r)$, where $r = \operatorname{erfc}(n^{\delta/2})$.
	
	Let $U$ denote the full evolution operator, and define its deviation from the ideal evolution $(E_B E_C)^p$ as $\Delta U$. Since the cumulative effect of higher-order terms in $\Delta E_C$ exceeds $o(1)$ only with a probability much smaller than $\mathcal{O}(r)$, this deviation can be expressed as
	\begin{equation*}
		\Delta U = \sum_{j=0}^{p-1} \left[ (E_B E_C)^j E_B \Delta E_C (E_B E_C)^{p-j-1} \right] + o_r(1).
	\end{equation*}
	Focusing on the amplitude of the target basis state $\ket{t}$, the corresponding deviation from the ideal amplitude is
	\begin{equation*}
		\Delta P = \sum_{j=0}^{p-1} \bra{t} (E_B E_C)^j E_B \Delta E_C (E_B E_C)^{p-j-1} \ket{+}^{\otimes n} + o_r(1).
	\end{equation*}
	
	Each term $\Delta P_j$ in the sum satisfies $\Delta P_j = o_r(n^{-1})$, so the total deviation remains bounded by $o_r(1)$. Consequently, with probability $1 - \mathcal{O}(\mathrm{erfc}(n^{\delta/2}))$, the overall deviation $\Delta P$ is $o(1)$, implying that the universal-mixer variant preserves the efficiency of the ideal $k$-local search problem up to a negligible error. 
\end{proof}

\subsection{Proof of efficiency of adiabatic variant on max-\textit{k}-SSAT}
\label{apsubsec_proof_adiabatic_on_SAT}
% 证明绝热的有效性，并证明average-case complexity

Having established the efficiency of the universal-mixer $k$-local quantum search algorithm on random max-$k$-SSAT instances, we now turn to its adiabatic variant. Building upon the results of the previous subsection, the analysis of the adiabatic case largely follows the discussion in Ref.~\cite{mine1}. For completeness and to aid the reader's understanding, we provide a brief outline of the proof strategy here, highlighting the key steps and the connection to the average-case complexity on random instances.

We consider the standard interpolation between the mixing Hamiltonian and the problem Hamiltonian:
\begin{equation}
	H_k(s) =  s H_k + (1-s) H_{B,1}, \quad s \in [0,1],
\end{equation}
with the initial state $\ket{+}^{\otimes n}$ being the highest-energy eigenstate of $H_{B,1}$. By slowly increasing $s$ from 0 to 1, the system adiabatically follows the highest-energy eigenstate of $H_k(s)$, reaching the target state $\ket{t}$ at $s=1$, which maximizes the objective function of the instance. 

According to the standard adiabatic condition, the total runtime $T$ is determined by the minimum spectral gap $g_{k,0}$ of $H_k(s)$. As shown in Lemma~\ref{lemma_gap}, for the ideal $k$-local Hamiltonian, the gap at $s = 1/2$ scales as $\Theta(n^{-1})$. In cases where the locality of the problem Hamiltonian and the mixing Hamiltonian do not perfectly match, the minimum gap $g_{k,0}$ may shift away from $s=1/2$. However, for small constant $k$, this shift is limited, and the gaps near $s=0$ and $s=1$ also scale as $\Theta(n^{-1})$. Therefore, the overall minimum gap $g_{k,0}$ remains of order $\Theta(n^{-1})$, implying a total runtime $T = \mathcal{O}(n^2)$ for solving the $k$-local search problem.

When applied to random max-$k$-SSAT instances, the $k$-local Hamiltonian $H_k$ is replaced by the normalized problem Hamiltonian $\bar{H}_C$, introducing a perturbation $\Delta H_C = \bar{H}_C - H_k$. Unlike circuit-based implementations, where errors accumulate with each iteration, adiabatic evolution distributes the effect of such perturbations over the entire evolution path. In particular, increasing the total runtime $T$ can effectively suppress the impact of $\Delta H_C$, in contrast to circuit-based quantum search where cumulative errors grow with successive iterations.

This inherent robustness allows the adiabatic variant to maintain efficiency even for smaller clause numbers $m$. The underlying effect is formalized in Ref.~\cite{mine1} and applies equally to the universal-mixer variant. We restate the relevant results here as Lemma~\ref{lemma_AQS_on_SAT} and Theorem~\ref{theo_adiabatic_k_local_QS}. The proof strategy directly follows Ref.~\cite{mine1}, with the only modification being the replacement of the mixer Hamiltonian $H_{B,k}$ by the universal-mixer Hamiltonian; all other steps remain unchanged.

\begin{lem}[Lemma~3.2 in Ref.~\cite{mine1}]\label{lemma_AQS_on_SAT}
	Assume the $k$-local quantum search algorithm is efficient on random max-$k$-SSAT instances with $m = \Theta(f(n))$, $\theta = \Theta(n^{-1})$, and $p_\theta = \mathcal{O}(n^2)$. Then the adiabatic $k$-local search achieves comparable efficiency for $m = \Theta(f^{1/2}(n))$ with runtime $\mathcal{O}(n^2)$.
\end{lem}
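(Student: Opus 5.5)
The plan is to compare the perturbed adiabatic evolution generated by $H(s)=s\bar H_C+(1-s)H_{B,1}$ with the ideal one generated by $H_k(s)=sH_k+(1-s)H_{B,1}$. The comparison uses the same concentration bound, Eq.~\eqref{eq_Ekx_range}, as in Theorem~\ref{theo_threshold}, but exploits the fact that an adiabatic run, unlike the discrete circuit, does not accumulate the diagonal phase error coherently step by step. The key observation is this. The hypothesis for the circuit version with $m=\Theta(f(n))$ amounts to requiring a per-basis-state deviation $\|\Delta H_C\|_{\max}=o(n^{-1})$, since $p_\theta=\mathcal{O}(n^2)$ iterations of angle $\Theta(n^{-1})$ add up linearly. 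By Eq.~\eqref{eq_Ekx_devia} this deviation scales as $m^{-1/2}$, so the condition reads $f(n)^{-1/2}=o(n^{-1})$. It therefore suffices to show that the adiabatic variant only needs $\|\Delta H_C\|_{\max}=o(n^{-1/2})$. Since $m^{-1/2}=f^{-1/4}$ when $m=f^{1/2}$, the condition $f^{-1/4}=o(n^{-1/2})$ is exactly the square root of the circuit condition.

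To show this, I would carry out two estimates along the interpolation. \textbf{First}, spectral stability. By Weyl's inequality, the eigenvalues of $H(s)$ differ from those of $H_k(s)$ by at most $s\|\Delta H_C\|\le\|\Delta H_C\|$. However, operator-norm control of a diagonal perturbation of size $o(n^{-1/2})$ is not enough to preserve a gap of order $\Theta(n^{-1})$. So here I would instead use first-order perturbation theory restricted to the two relevant eigenvectors. The relevant quantity is $\langle\psi_0(s)|\Delta H_C|\psi_0(s)\rangle-\langle\psi_1(s)|\Delta H_C|\psi_1(s)\rangle$. Since $\Delta H_C$ has independent, mean-zero Gaussian entries over basis states (Eq.~\eqref{eq_normal_distribution}) and the eigenvectors are spread over many basis states, this difference concentrates at a scale of roughly $\|\Delta H_C\|_{\max}^2$ relative to the gap. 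This squaring is the source of the $f^{1/2}$ improvement. The conclusion of this step is that $g_0$ remains $\Theta(n^{-1})$ (Lemma~\ref{lemma_gap}) with probability $1-\mathcal{O}(\mathrm{erfc}(n^{\delta/2}))$. \textbf{Second}, the transition matrix element $\varepsilon_0$ in Eq.~\eqref{eq_T_varepsilon} changes by $\mathcal{O}(\|\Delta H_C\|)$, because $\tfrac{d}{ds}H(s)=\bar H_C-H_{B,1}$ differs from the ideal derivative exactly by $\Delta H_C$. Hence $\varepsilon_0$ stays $\mathcal{O}(1)$. Feeding both estimates into the adiabatic condition Eq.~\eqref{eq_adiabatic_condition} gives $T=\mathcal{O}(\varepsilon_0 g_0^{-2})=\mathcal{O}(n^2)$, unchanged from the ideal case.

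Finally, I would check the endpoint. At $s=1$, the highest eigenstate of $\bar H_C$ must still be $\ket{t}$, with a separation from the next level of order $\Theta(n^{-1})$ so that the final state has large overlap with the target. This holds because the ideal gap $f_k(t)-f_k(x)$ for $d_H(x,t)=1$ is $k/n$, which dominates the $o(n^{-1/2})$ fluctuations only after the averaging argument above is applied, and otherwise needs a union bound over the $n$ neighbours. Together with the same probability bookkeeping as in Theorem~\ref{theo_threshold} (choosing $c=n^{\delta/2}$), this yields the claimed comparable efficiency.

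The main obstacle is the first estimate. Making rigorous that the random diagonal perturbation shifts the gap only at second order, rather than by its operator norm, requires controlling the delocalization of $\ket{\psi_0(s)}$ and $\ket{\psi_1(s)}$ uniformly in $s$, including near $s=1$, where they localize on $\ket{t}$ and its neighbours. Near that endpoint I would fall back on a direct $(n+1)$-dimensional reduction to the Hamming-ball around $t$ and bound the shifts explicitly.
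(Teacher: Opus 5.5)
\paragraph{Summary.} There is a genuine gap: your Step~1 rests on treating the entries of $\Delta H_C$ as independent, which is false, and the ``squaring'' you rely on does not occur. The paper does not prove this lemma itself. It quotes it from Ref.~\cite{mine1}, supported only by the remark that an adiabatic run spreads $\Delta H_C$ along the path instead of accumulating it per iteration. Your reduction (stability of $g_0=\Theta(n^{-1})$ and of $\varepsilon_0$, then the adiabatic condition, then an endpoint check) is a reasonable way to make that remark precise. The threshold you aim for, $m^{-1/2}=o(n^{-1/2})$, i.e.\ $m=\omega(n)$, is also the right one.

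\paragraph{Why Step~1 fails.} You treat the entries $\Delta_x=\langle x|\Delta H_C|x\rangle$ as independent across basis states. But Eq.~\eqref{eq_normal_distribution} is only a marginal statement for each fixed $x$. All $2^n$ entries are built from the same $m$ clauses, so $\Delta H_C$ is a random diagonal $k$-local operator that is strongly correlated on the hypercube.

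After conjugating by $X$ gates so that $t=0$, the ideal top eigenvector $\ket{\psi_0(s)}$ lies in the permutation-symmetric (Dicke) subspace. Hence $\langle\psi_0(s)|\Delta H_C|\psi_0(s)\rangle$ is a weighted sum of Hamming-shell averages of $\Delta_x$. Each shell average depends only on how many clauses have $j$ literals true under $t$ ($j=1,\dots,k$). It is therefore a sum of $m$ i.i.d.\ terms and fluctuates at order $m^{-1/2}$, the same order as a single entry. Delocalization buys nothing.

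Even for independent entries, the first-order term $\sum_x(|\psi_0(x)|^2-|\psi_1(x)|^2)\Delta_x$ would be linear in the deviation, damped only by a participation-ratio factor. So the claimed squaring does not occur. First-order perturbation theory is also not justified by any norm bound when $\|\Delta H_C\|\gg g_0$.

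In addition, Eq.~\eqref{eq_Ekx_range} holds per $x$. Uniformly in $x$ the deviation is of order $\sqrt{n/m}$: the one-body Walsh part of $\Delta H_C$ alone carries random fields of size $(nm)^{-1/2}$ on each qubit. Any condition phrased through $\|\Delta H_C\|_{\max}$ would therefore force $m\gg n^2$ and lose the whole improvement.

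\paragraph{What would work instead.} What you need is smoothness of $\Delta H_C$, not independence.
\begin{itemize}
\item Flipping one bit changes $\Delta_x$ only through the $\approx km/n$ clauses containing that variable. So for typical $x$, $|\Delta_x-\Delta_{x\oplus e_j}|=\mathcal{O}((nm)^{-1/2})$, to be compared with the local slopes $\Theta(n^{-1})$ of $f_k$.
\item The large collective part of $\Delta H_C$ shifts $\ket{\psi_0(s)}$ and $\ket{\psi_1(s)}$ almost equally. Heuristically, on the symmetric sector both states have width $\mathcal{O}(n^{-1/2})$ in $l/n$. The induced gap shift is then $\mathcal{O}(m^{-1/2}n^{-1})$ and the $\psi_0$--$\psi_1$ coupling is $\mathcal{O}((nm)^{-1/2})$. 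Both are $o(n^{-1})$ once $m=\omega(n)$.
\item The circuit, by contrast, accumulates the global deviation $m^{-1/2}$ coherently over a total phase $\Theta(n)$ and so needs $m=\omega(n^2)$. This contrast is where the square root in $f^{1/2}$ comes from.
\end{itemize}

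\paragraph{Endpoint check.} The same fact repairs your endpoint check, whose stated justification does not work. A union bound over the $n$ neighbours controls failure probabilities; it cannot make a deviation of size $o(n^{-1/2})$ small compared with $k/n$. What saves the endpoint is that $\sigma_{k,x}^2=\mu_{k,x}(1-\mu_{k,x})$ vanishes at $x=t$ and is $\Theta(n^{-1})$ at Hamming distance one. The neighbours of $t$ therefore fluctuate only by $\mathcal{O}((nm)^{-1/2})=o(n^{-1})$, up to a $\sqrt{\log n}$ factor from the union bound.

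\paragraph{Use of the hypothesis.} Your opening step says circuit efficiency at $m=\Theta(f)$ ``amounts to'' $f^{-1/2}=o(n^{-1})$. That uses the converse of Theorem~\ref{theo_threshold}, which the paper does not establish.
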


\begin{thm}[Theorem~3 in Ref.~\cite{mine1}]\label{theo_adiabatic_k_local_QS}
	For any fixed $\epsilon>0$ and sufficiently large $n$, the adiabatic $k$-local quantum search applied to random max-$k$-SSAT instances drawn from $F_f(n,m,k)$ with $m = \Theta(n^{1+\epsilon+\delta})$ achieves efficiency comparable to that of the ideal $k$-local search problem, with probability $1 - \mathcal{O}(\mathrm{erfc}(n^{\delta/2}))$.
\end{thm}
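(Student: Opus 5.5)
The plan is to obtain the statement by composing two results already available: the circuit-level threshold for the original (structured-mixer) $k$-local quantum search on $F_f(n,m,k)$, recalled in Appendix~\ref{apsubsubsec_circuit_k_local_QS} from Ref.~\cite{mine1}, and the square-root relaxation of the clause density in Lemma~\ref{lemma_AQS_on_SAT}. The mixer throughout is $H_{B,k}$, and the interpolation is $\bar{H}(s)=s\bar{H}_C+(1-s)H_{B,k}$. Under this choice, the ideal spectral input needed below is the one established in Ref.~\cite{mine1}: the gap $g_{k,0}=\Theta(n^{-1})$ along the path and runtime $T=\mathcal{O}(n^2)$. No universal-mixer argument is needed.

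\textbf{Step 1 (circuit hypothesis in the right form).} I would first restate the circuit result in exactly the form required by Lemma~\ref{lemma_AQS_on_SAT}. The claim is that, for every fixed $\epsilon'>0$, the $k$-local quantum search with $\theta=\Theta(n^{-1})$ and $p_\theta=\mathcal{O}(n^2)$ is efficient on $F_f(n,m,k)$ when $m=\Theta(n^{2+\epsilon'+\delta'})$, with failure probability $\mathcal{O}(\mathrm{erfc}(n^{\delta'/2}))$. The mechanism is as in the sketch of Theorem~\ref{theo_threshold}, with $H_{B,1}$ replaced by $H_{B,k}$. Taking $c=n^{\delta'/2}$ in Eq.~\eqref{eq_Ekx_range} gives a per-eigenvalue deviation $\sqrt{(2^k-1)/m}\,n^{\delta'/2}=\mathcal{O}(n^{-1-\epsilon'/2})$. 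Hence each of the $p_\theta$ first-order terms $\Delta P_j$ is of order $o(\theta)=o(n^{-1})$ in the small-$\theta$ scaling, and their sum remains $o(1)$.

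\textbf{Step 2 (apply the square-root lemma).} Set $f(n)=n^{2+\epsilon'+\delta'}$ and choose $\epsilon'=2\epsilon$ and $\delta'=2\delta$. Then $f^{1/2}(n)=n^{1+\epsilon+\delta}$, and Lemma~\ref{lemma_AQS_on_SAT} yields efficiency of the adiabatic $k$-local search with runtime $\mathcal{O}(n^2)$ at $m=\Theta(n^{1+\epsilon+\delta})$, comparable to the ideal $k$-local search problem.

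\textbf{Step 3 (probability bookkeeping).} The exceptional event in the adiabatic regime is again a concentration failure of Eq.~\eqref{eq_Ekx_range}, now at the smaller density $m=\Theta(n^{1+\epsilon+\delta})$. I would re-derive the failure probability directly rather than inherit $\mathrm{erfc}(n^{\delta'/2})=\mathrm{erfc}(n^{\delta})$. With $c=n^{\delta/2}$, the eigenvalue deviation is $\mathcal{O}(n^{-1/2-\epsilon/2})$ with probability $1-\mathcal{O}(\mathrm{erfc}(n^{\delta/2}))$, which is exactly what the statement claims.

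\textbf{Main obstacle.} The hardest step is justifying that a deviation of size $n^{-1/2-\epsilon/2}$ is harmless even though it exceeds the ideal gap $\Theta(n^{-1})$. This is the content of Lemma~\ref{lemma_AQS_on_SAT}. I would first try to check it through a random-phase averaging argument. The perturbation $\Delta H_C$ is diagonal, with independent, mean-zero entries across basis states weighted by the instantaneous eigenvector, so its matrix element between the top two instantaneous eigenstates of $\bar{H}(s)$ is suppressed by the delocalization of those states. If that estimate fails, the fallback is to invoke Lemma~\ref{lemma_AQS_on_SAT} as stated. Finally, because the theorem concerns $F_f$ only, no appeal to Lemma~\ref{lem_model_reduction} is required.
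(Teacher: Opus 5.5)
Your skeleton matches the paper's, which gives no self-contained proof. It restates the result from Ref.~\cite{mine1} and obtains it by composing the circuit-level threshold at $m=\Theta(n^{2+\epsilon+\delta})$ with the square-root relaxation of Lemma~\ref{lemma_AQS_on_SAT}, using the $\Theta(n^{-1})$ minimum gap to get $T=\mathcal{O}(n^2)$. Your choice $\epsilon'=2\epsilon$, $\delta'=2\delta$ is the right bookkeeping. So is recomputing the failure probability by concentration at the adiabatic density, rather than inheriting $\mathrm{erfc}(n^{\delta})$ from a different ensemble, and it reproduces the stated $1-\mathcal{O}(\mathrm{erfc}(n^{\delta/2}))$.

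There are three things to fix. (i) Scope. You keep $H_{B,k}$ and say no universal-mixer argument is needed. That proves the Ref.~\cite{mine1} version, but the paper restates the theorem in Appendix~\ref{apsubsec_proof_adiabatic_on_SAT} precisely so that it covers $\bar{\mathcal{H}}(s)=s\bar{H}_C+(1-s)H_{B,1}$, with the explicit remark that $H_{B,k}$ is replaced by $H_{B,1}$. This is the version the main text relies on. To match it, run the same composition with Theorem~\ref{theo_threshold} as the circuit input. Use Lemma~\ref{lemma_gap}, plus that appendix's argument that the minimum gap along the $H_{B,1}$ path stays $\Theta(n^{-1})$, as the spectral input. (ii) Step~1 has an arithmetic slip. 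At $\theta=\Theta(n^{-1})$, each first-order term is bounded by $\theta\,\|\Delta H_C\|=\mathcal{O}(n^{-1})\cdot\mathcal{O}(n^{-1-\epsilon'/2})=o(n^{-2})$. If the terms were only $o(n^{-1})$, summing them over $p_\theta=\mathcal{O}(n^2)$ iterations would give $o(n)$, not $o(1)$. The per-term $o(n^{-1})$ in the sketch of Theorem~\ref{theo_threshold} belongs to its $\theta=\pi$, $p=\mathcal{O}(n)$ setting. (iii) Your random-phase check rests on a false premise. The diagonal entries of $\Delta H_C$ are not independent across basis states: every $\bar{\mathcal{E}}_{k,x}$ is a function of the same $m$ clauses, so deviations at strings of small Hamming distance are strongly correlated. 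That heuristic would not go through as stated, so falling back on Lemma~\ref{lemma_AQS_on_SAT}, as the paper does, is the operative step. One caveat is shared with the paper's own sketch: Eq.~\eqref{eq_Ekx_range} holds per basis state. If Step~3 is read as bounding all $2^n$ eigenvalues simultaneously, the required union bound is not absorbed by a tail $\mathrm{erfc}(n^{\delta/2})=e^{-\Theta(n^{\delta})}$ when $\delta<1$.
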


In the previous analysis, we have established the efficiency of the proposed algorithms on random max-$k$-SSAT instances drawn from the distribution $F_f(n,m,k)$. These results naturally carry over to the widely studied distribution $F_s(n,m,k)$, as formalized in Lemma~\ref{lem_model_reduction}.

Furthermore, based on average-case complexity arguments, one can conclude that the algorithm solves random max-$k$-SSAT instances drawn from $F_s(n,m,k)$ with $m = \Omega(n^{2+\epsilon})$ in polynomial time on average. Specifically, by treating Grover search as a fallback for instances where no efficiency guarantee exists, the complexity in such rare cases is 
\[
\mathcal{O}(\text{erfc}(n^{1/2+\epsilon/4}) \, 2^{n/2}) = o(1),
\]
which vanishes asymptotically. Consequently, the algorithm achieves polynomial average-case complexity for the random max-$k$-SSAT problem under the $F_s(n,m,k)$ with $m = \Omega(n^{2+\epsilon})$.

\section{Generalized mixer constructions}
\label{apsec_general_mixer}

In the previous sections, we focused on the universal-mixer variant, which provides a discusstion with fully decoupled 1-local mixing Hamiltonian. In this section, we extend the discussion to a broader class of $k$-local quantum search algorithms by considering more general mixer constructions. These generalized mixers are no longer restricted in the locality of the mixing operator, and can be expressed as
\begin{equation}\label{apeq_generalized_mixer_k_local_QS}
	\ket{\psi} = \left( e^{-i\theta H_{B,k_0}} e^{-i\theta H_{k}} \right)^p \ket{+}^{\otimes n},
\end{equation}
where $k$ and $k_0$ are small constants. The original $k$-local quantum search corresponds to the special case $k = k_0$, while the universal-mixer construction corresponds to $k_0 = 1$. In this section, we show that the generalized-mixer variant retains efficiency comparable to the original $k$-local quantum search, effectively representing a more general formulation of the $k$-local quantum search framework.

As shown in Sec.~\ref{apsec_proof_universal_mixer}, the core of the efficiency proof lies in establishing that the circuit in Eq.~\eqref{apeq_generalized_mixer_k_local_QS} remains effective for the $k$-local search problem. Following the approach in Sec.~\ref{apsubsec_proof_search_on_search}, the key step is to extend the proofs of Lemmas~\ref{lemma_eigen_reduction} and \ref{lemma_gap} to the generalized-mixer variant. To avoid unnecessary repetition, we present here only the statements and proofs of these lemmas in the generalized-mixer setting as Lemmas~\ref{lemma_generalized_mixer_eigen_reduction} and \ref{lemma_generalized_mixer_gap}, omitting the remaining derivations.

\begin{lem}\label{lemma_generalized_mixer_eigen_reduction}
	Denotes 	
	\begin{align}\label{eq_universal_mixer_approxi_Hn}
		\tilde{\mathcal{H}}_{n+1,k,k_0} 
		& = \mathrm{diag}\{ s(k) H_{n,k} + s(k_0) H_{B,n,k_0}, \nonumber\\
		& \quad\,\, s(k) H_{n,k} + s(k_0) H_{B,n,k_0} + \nonumber\\
		& \quad\,\, \bar{s}(k) H_{n,k-1} + \bar{s}(k_0) H_{B,n,k_0-1}
		\},
	\end{align}
	where $s(k) = \frac{n+1-k}{n+1}$ and $\bar{s}(k) = 1 - s(k)$. For sufficiently large $n$, the eigenvalues of $\mathcal{H}_{n+1,k,k_0} = H_{B,n,k_0} + H_{n,k}$ are of the same order of magnitude as those of $\tilde{\mathcal{H}}_{n+1,k,k_0}$. 
\end{lem}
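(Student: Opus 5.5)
The plan is to follow the route of Lemma~\ref{lemma_eigen_reduction}, now with both the problem Hamiltonian and the mixer split according to whether they involve the $(n+1)$-th qubit. Work at the level of the one-step unitary
\[
U_{n+1,k,k_0}(\theta)=e^{-i\theta H_{B,n+1,k_0}}e^{-i\theta H_{n+1,k}},\qquad \theta=\Theta(n^{-1}).
\]
By the Trotter--Suzuki argument used in Eq.~\eqref{eq_trotterized_um_QS}, its effective Hamiltonian has the same spectrum as $\mathcal{H}_{n+1,k,k_0}$ up to negligible corrections.

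\textbf{Step 1: split the two Hamiltonians.} Counting $k$-tuples that do or do not contain index $n+1$ gives
\[
H_{n+1,k}=s(k)\,I\otimes H_{n,k}+\bar{s}(k)\,H'_{n,k-1},
\]
where $H'_{n,k-1}$ is $H_{n,k-1}$ controlled on the $(n+1)$-th qubit. Since $H_{B,n+1,k_0}=H^{\otimes n+1}H_{n+1,k_0,0}H^{\otimes n+1}$ (Lemma~\ref{lemma_exchange_H}), the same counting gives
\[
H_{B,n+1,k_0}=s(k_0)\,I\otimes H_{B,n,k_0}+\bar{s}(k_0)\,H^{\otimes n+1}H'_{n,k_0-1,0}H^{\otimes n+1}.
\]
The second term is the part of the mixer touching the new qubit. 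For $k_0=1$ this reduces exactly to the decomposition used in Lemma~\ref{lemma_eigen_reduction}.

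\textbf{Step 2: reorder the factors.} Move the new-qubit mixer factor $e^{-i\theta\bar{s}(k_0)(\cdots)}$ past the controlled factor $e^{-i\theta\bar{s}(k)H'_{n,k-1}}$. Each is a sum of $C_n^{k_0-1}$, respectively $C_n^{k-1}$, local terms with coefficients of order $n^{-1}$ relative to $\theta$. By Lemma~\ref{lemma_commutator}, each swap therefore costs $\mathcal{O}(\theta^2 n^{-2})$, so the reordered unitary differs from the original by $\mathcal{O}(n^{-4})$. The result splits into two factors: $U_{H_1}$, which groups the bulk mixer with the controlled problem term, and $U_{H_2}$, which groups the new-qubit mixer with the bulk problem term.

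\textbf{Step 3: block-diagonalize.} Evaluate $U_{H_1}$ on $\ket{0}\ket{x}$ and $\ket{1}\ket{x}$. The controlled problem term acts only in the $\ket{1}$ block and contributes $e^{-i\theta\bar{s}(k)H_{n,k-1}}$ there. Conjugate $U_{H_2}$ by $H^{(n+1)}$. Using Lemma~\ref{lemma_HX_commutator}, the new-qubit mixer component becomes diagonal in the $(n+1)$-th qubit, so in the $\ket{1}$ block it contributes $e^{-i\theta\bar{s}(k_0)H_{B,n,k_0-1}}$; for $k_0=1$ this is the scalar phase $\frac{k}{n+1}I$ of Lemma~\ref{lemma_eigen_reduction}. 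Then argue, as in that lemma, that conjugating by a single-qubit Hadamard does not change the order of the spectrum for large $n$. Recombining the exponentials within each block by Trotter--Suzuki yields $e^{-i\theta\tilde{\mathcal{H}}_{n+1,k,k_0}}$ with exactly the two diagonal blocks stated.

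\textbf{Main obstacle: Step 3, the new-qubit mixer for $k_0\ge 2$.} It is a $k_0$-local operator coupling the new qubit to others, not a single-qubit gate, and it is conjugated by $H^{\otimes n+1}$ rather than only $H^{(n+1)}$. To diagonalize it in the new qubit's basis alone, I would conjugate by $H^{(n+1)}$ only. The remaining $H^{\otimes n}$ on the other qubits then reassemble $H_{B,n,k_0-1}$ via Lemma~\ref{lemma_exchange_H}. The ordering between that term and the bulk problem term is then controlled again by Lemma~\ref{lemma_commutator}, with error $\mathcal{O}(\theta^2n^{-1})$ per iteration.
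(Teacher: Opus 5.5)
Your proposal is essentially the paper's proof, step for step: the same $s(\cdot)/\bar{s}(\cdot)$ splitting of both $H_{n+1,k}$ and $H_{B,n+1,k_0}$; the same $\mathcal{O}(\theta^2 n^{-2})$ swap of the two new-qubit factors via Lemma~\ref{lemma_commutator}; the same regrouping into one factor that is block-diagonal directly and one that becomes block-diagonal after conjugation by $H^{(n+1)}$, with the same heuristic dismissal of that Hadamard pair; and the same Trotter--Suzuki recombination into $e^{-i\theta\tilde{\mathcal{H}}_{n+1,k,k_0}}$. The only differences are these: the paper first reduces to $k>k_0>1$ by symmetry, whereas you treat all $(k,k_0)$ uniformly; and you have one harmless slip, since for $k_0=1$ your block term $\bar{s}(k_0)H_{B,n,k_0-1}$ reduces to $\frac{1}{n+1}I$, not the $\frac{k}{n+1}I$ you quote (that factor is an inconsistency already present in Lemma~\ref{lemma_eigen_reduction}).
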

\begin{proof}	
	Initially, the case $k = k_0$ has been analyzed in Ref.~\cite{mine1} for the original $k$-local quantum search, while the case $k_0 = 1$ was addressed in the previous section. By symmetry between $k$ and $k_0$, it therefore suffices to consider the case $k > k_0 > 1$. Similar to Lemma~\ref{eq_approxi_Hn}, we set $\theta = \Theta(n^{-1})$, and the analysis of the spectral gap of $\mathcal{H}_{n,k,k_0}$ can be reduced to the study of the Hamiltonian
	\begin{equation*}
		U_{n+1,k,k_0}(\theta) = e^{-i\theta H_{B,n+1,k_0}} e^{-i\theta H_{n+1,k}}.
	\end{equation*}
	
	According to whether the ($n$+1)-th qubit participates in the evolution, the Hamiltonian $H_{n+1,k}$ can be decomposed as
	\begin{equation*}
		H_{n+1,k} = s(k) I \otimes H_{n,k} + \bar{s}(k) H'_{n,k-1},
	\end{equation*}
	where $H'_{n,k-1}$ corresponds to $H_{n,k-1}$ with an additional control qubit at the ($n$+1)-th position. Based on this decomposition, the unitary $U_{n+1,k,k_0}(\theta)$ can be expanded as	
	\begin{align*}
		U_{n+1,k,k_0}(\theta) 
		& = e^{-i\theta s(k_0) H_{B,n,k_0}} \, e^{-i\theta \bar{s}(k_0) H'_{B,n,k_0-1}} \nonumber\\
		& \quad \times e^{-i\theta \bar{s}(k) H'_{n,k-1}} \, e^{-i\theta s(k) H_{n,k}}.
	\end{align*}
	
	The operators $e^{-i\theta \bar{s}(k_0) H'_{B,n,k_0-1}}$ and $e^{-i\theta \bar{s}(k) H'_{n,k-1}}$ approximately commute, with a deviation of order ${\mathcal O}(\theta^2 n^{-2})$, as established in Lemma~\ref{lemma_commutator}. Consequently, the evolution can be written as
	 \begin{align*}
	 	U_{n+1,k,k_0}(\theta) & = e^{-i\theta s(k_0) H_{B,n,k_0}} e^{-i\theta \bar{s}(k) H'_{n,k-1}} \\\nonumber
	 	& \quad \times e^{-i\theta \bar{s}(k_0) H'_{B,n,k_0-1}} e^{-i\theta s(k) H_{n,k}} + {\mathcal O}(n^{-4}).
	 \end{align*}
	 Neglecting this small deviation, we denote the main component of $U_{n+1,k,k_0}(\theta)$ as $U_m(\theta)$. It can be decomposed into two subcomponents,
	 \begin{equation*}
	 	U_m(\theta) = U_{m1}(\theta) \, U_{m2}(\theta),
	 \end{equation*}
	 with
	 \begin{align*}
	 	& U_{m1}(\theta) = H^{\otimes n+1} e^{-i\theta s(k_0) H_{n,k_0} }H^{\otimes n+1} e^{- i\theta \bar{s}(k) H'_{n,k-1}}, \nonumber\\
	 	&  U_{m2}(\theta) = H^{\otimes n+1} e^{-i\theta \bar{s}(k_0) H'_{n,k_0-1}} H^{\otimes n+1} e^{-i\theta s{k} H_{n,k}}.
	 \end{align*}
	 
	For an arbitrary computational basis $\ket{x}$, the evolution of $U_{m1}(\theta)$ on the states $\ket{0} \ket{x}$ and $\ket{1} \ket{x}$ are expressed as 
	 \begin{align*}
	 	& U_{m1}(\theta) \ket{0} \ket{x} = e^{-i\theta s(k_0) H_{B, n,k_0} } \ket{0} \ket{x}, \nonumber\\
	 	& U_{m1}(\theta) \ket{1} \ket{x} = e^{-i\theta s(k_0) H_{B, n,k_0} } e^{- i\theta \bar{s}(k) H_{n,k-1}}  \ket{1} \ket{x}. 
	 \end{align*}
	 Thus, the evolution operator $U_{m1}(\theta)$ can be reformulated in block-matrix form as
	 \begin{align*}
	 	& U_{m1}(\theta) = \\\nonumber
	 	&\quad\quad \left[ \begin{matrix}
	 		e^{-i\theta s(k_0) H_{B, n,k_0} } &  \\
	 		& e^{-i\theta s(k_0) H_{B, n,k_0} } e^{-i\theta \bar{s}(k) H_{n,k-1}}  \\
	 	\end{matrix} \right]. 
	 \end{align*}
	 
	 Similarly, the evolution of $H^{(n+1)} U_{m2}(\theta)H^{(n+1)}$, where $H^{(n+1)}$ denotes the Hadamard gate on the ($n$+1)-th qubit, can be expressed as 
	 \begin{align*}
	 	H^{(n+1)} U_{m2}(\theta) & H^{(n+1)} \ket{0} \ket{x} = e^{-i\theta s(k) H_{n,k} } \ket{0} \ket{x} ,\nonumber\\
	 	H^{(n+1)} U_{m2}(\theta) & H^{(n+1)} \ket{1} \ket{x} = \nonumber\\
	 	& e^{- i\theta \bar{s}(k_0) H_{B, n,k_0-1}} e^{-i\theta s(k) H_{n,k} } \ket{1} \ket{x}.
	 \end{align*}
	 Equivalently, this operator can be represented in block-matrix form as
	 \begin{align*}
	 	H^{(n+1)} U_{m2} & (\theta)H^{(n+1)} = \nonumber\\
	 	& \left[ \begin{matrix}
	 		e^{-i\theta s(k) H_{n,k_1} } &  \\
	 		& e^{-i\theta \bar{s}(k_0) H_{B, n,k_0-1}} e^{-i\theta s(k) H_{n,k} } \\
	 	\end{matrix} \right]. 
	 \end{align*}

	Similar to the previous analysis, we can ignore the extra pair of Hadamard gates $H^{(n+1)}$ when analyzing a large-scale quantum system, and focus on the eigendecomposition of $U'_m = U_{m1}(\theta)H^{(n+1)}U_{m2}(\theta)H^{(n+1)}$, which can be represented in block-matrix form as 
	\begin{equation*}
		U'_a = \left[ \begin{matrix}
			U_{m00}  &  \\
			& U_{m11}  \\
		\end{matrix} \right],
	\end{equation*}
	where 
	\begin{align*}
		U_{m00} & = e^{-i\theta s(k_0) H_{B, n,k_0} } e^{-i\theta s(k) H_{n,k}},  \nonumber\\
		U_{m11} & = e^{-i\theta s(k_0)  H_{B, n,k_0} } e^{-i\theta \bar{s}(k)  H_{n,k-1}} \nonumber\\
		& \quad \times e^{-i \theta \bar{s}(k_0) H_{B, n,k_0-1}}  e^{-i\theta s(k) H_{n,k}}. 
	\end{align*}
	According to the Trotter-Suzuki decomposition, $U'_m$ has an eigenspectrum identical to that of $e^{-i\theta \tilde{\mathcal{H}}_{n+1,k,k_0}}$, and the lemma follows.
\end{proof}

\begin{lem}\label{lemma_generalized_mixer_gap} 
	For a small constant $k$, the spectral gap $g_{k,k_0}$ of $\mathcal{H}_{n,k,k_0}$ scales as $\Theta(n^{-1})$. 
\end{lem}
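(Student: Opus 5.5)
The plan is to mirror the induction used for Lemma~\ref{lemma_gap}, now with Lemma~\ref{lemma_generalized_mixer_eigen_reduction} as the reduction step. I would induct on $n$ with $k$ and $k_0$ fixed small constants.

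\textbf{Base cases.} When $k=k_0$, the claim is the gap result of Ref.~\cite{mine1}. When $k_0=1$, it is Lemma~\ref{lemma_gap}. By the $k\leftrightarrow k_0$ symmetry obtained from conjugating with $H^{\otimes n}$, which exchanges $H_{n,k}$ and $H_{B,n,k}$, it suffices to treat $k>k_0>1$. For such pairs, and $n$ moderately larger than $k$, the gap can be checked directly, exactly as in Lemma~\ref{lemma_gap}.

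\textbf{Inductive step.} Assume the gap of $\mathcal{H}_{n,k,k_0}$ is $\Theta(n^{-1})$. By Lemma~\ref{lemma_generalized_mixer_eigen_reduction}, it suffices to bound the gap of the block-diagonal operator $\tilde{\mathcal{H}}_{n+1,k,k_0}=\mathrm{diag}\{H_{00},H_{11}\}$. Three quantities control this gap.

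\emph{Internal gap of $H_{00}$.} We have $H_{00}=s(k)H_{n,k}+s(k_0)H_{B,n,k_0}$. It differs from $s(k)\mathcal{H}_{n,k,k_0}$ only by the term $(s(k_0)-s(k))H_{B,n,k_0}$, whose coefficient is $(k-k_0)/(n+1)$. I would regard this as moving along the family $H_{n,k}+sH_{B,n,k_0}$ with $s=1+\mathcal{O}(n^{-1})$. This is the same "small $\Delta s$" argument used for $H'_{11}$ in Lemma~\ref{lemma_gap}, so the gap stays $\Theta(n^{-1})$.

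\emph{Internal gap of $H_{11}$.} Here $H_{11}=H_{00}+\bar s(k)H_{n,k-1}+\bar s(k_0)H_{B,n,k_0-1}$. I would use Eq.~\eqref{eq_deviation_fkx} to write $H_{n,k-1}=H_{n,k}+\frac{n-l}{n-k+1}H_{n,k-1}$ as a diagonal identity. The analogous identity for the mixer holds after conjugation by $H^{\otimes n}$, giving $H_{B,n,k_0-1}=H_{B,n,k_0}+(\text{remainder})$. With these, $H_{11}$ becomes
\[
\Bigl(1+\tfrac{k}{n+1}\Bigr)H_{n,k}+\Bigl(1+\tfrac{k_0}{n+1}\Bigr)H_{B,n,k_0}+R,
\]
where $R$ collects the remainders. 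Up to an overall $1+\mathcal{O}(n^{-1})$ rescaling, the leading part is again a small-$\Delta s$ perturbation of $\mathcal{H}_{n,k,k_0}$, so its gap is $\Theta(n^{-1})$.

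\textbf{Main obstacle: the remainder $R$.} In Lemma~\ref{lemma_gap} the mixer was $1$-local, so its correction was just the constant $\frac{k}{n+1}I$. Here $k_0>1$, and the mixer correction is an off-diagonal operator, $H^{\otimes n}$ times a diagonal matrix whose entries are of size $\mathcal{O}(n^{-1})\cdot f_{k_0-1}$. To control it I would bound $\|R\|$ by $\mathcal{O}(n^{-2})$ on the relevant low-Hamming-weight sectors. The first thing I would try is to evaluate the remainder restricted to the top two eigenvectors, using that the top eigenstates concentrate near $\ket{t}$ and $\ket{+}$, where $\frac{n-l}{n-k+1}$ is small. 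Weyl's inequality would then show that $R$ cannot close a gap of size $\Theta(n^{-1})$.

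\emph{Inter-block separation.} The two blocks differ by $\bar s(k)H_{n,k-1}+\bar s(k_0)H_{B,n,k_0-1}$. Both terms are positive semidefinite with norm $\Theta(n^{-1})$, so the top eigenvalue of $H_{11}$ exceeds that of $H_{00}$ by $\Theta(n^{-1})$. For this I would evaluate the expectation of that difference in the top eigenvector of $H_{00}$, which is $\Theta(n^{-1})$ since both $f_{k-1}$ and its Hadamard-conjugate are of order one on that state.

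\textbf{Conclusion.} Combining the two internal gaps and the inter-block separation gives a gap of $\Theta(n^{-1})$ for $\tilde{\mathcal{H}}_{n+1,k,k_0}$, which completes the induction.
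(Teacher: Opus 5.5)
Your skeleton matches the paper's proof. You reduce to $k>k_0>1$ by the $k\leftrightarrow k_0$ symmetry, induct on $n$ through Lemma~\ref{lemma_generalized_mixer_eigen_reduction}, and examine the two internal block gaps and the separation between the blocks. The paper also carries the case $(k-1,k_0-1)$ in its induction hypothesis, though its written argument never visibly uses it.

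Your separation argument is sharper than the paper's ``clearly $\Theta(n^{-1})$''. Since $\Delta H\succeq 0$, evaluating it on the top eigenvector of $H_{00}$ gives the lower bound, and $\|\Delta H\|\le (k+k_0)/(n+1)$ gives the upper bound.

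Two simplifications are worth recording. First, because $s(k)+\bar s(k)=1$, the leading part of $H_{11}$ is exactly $\mathcal{H}_{n,k,k_0}$, not $\bigl(1+\frac{k}{n+1}\bigr)H_{n,k}+\bigl(1+\frac{k_0}{n+1}\bigr)H_{B,n,k_0}$. So $H_{11}=\mathcal{H}_{n,k,k_0}+R$ with $R=\bar s(k)(H_{n,k-1}-H_{n,k})+\bar s(k_0)(H_{B,n,k_0-1}-H_{B,n,k_0})$. Second, once the top of $H_{11}$ lies above the top of $H_{00}$, the gap of $\tilde{\mathcal{H}}_{n+1,k,k_0}$ equals $\min\{\mathrm{gap}(H_{11}),\,\text{separation}\}$. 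So you do not need the internal gap of $H_{00}$, nor your small-$\Delta s$ argument for it. That argument would in any case face the objection below: a perturbation of norm $\frac{k-k_0}{n+1-k}$ against a $\Theta(n^{-1})$ gap. Everything therefore rests on $R$.

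\textbf{The step that fails is your control of $R$.} The top eigenvectors of $\mathcal{H}_{n,k,k_0}$ are not concentrated near $\ket{t}$ or $\ket{+}^{\otimes n}$.
\begin{itemize}
\item For $k=k_0=1$ and $t=0$, the top eigenvector is $\bigl(\cos\frac{\pi}{8}\ket{0}+\sin\frac{\pi}{8}\ket{1}\bigr)^{\otimes n}$ (cf.\ Lemma~\ref{lemma_eigen_1_local_QS}). It sits at Hamming distance about $n\sin^2\frac{\pi}{8}$ from $t$.
\item In general, the energy of $\bigl(\cos\frac{\phi}{2}\ket{0}+\sin\frac{\phi}{2}\ket{1}\bigr)^{\otimes n}$ is $\cos^{2k}\frac{\phi}{2}+\cos^{2k_0}\bigl(\frac{\pi}{4}-\frac{\phi}{2}\bigr)$, which is maximized strictly between the poles.
\item Near $\ket{+}^{\otimes n}$ the factor $\frac{n-l}{n-k+1}$ is about $\frac12$ anyway; only the Hadamard-conjugated part of $R$ is small there.
\end{itemize}
Hence $\frac{n-l}{n-k+1}=\Theta(1)$ on the relevant states, and both $\langle R\rangle$ and $\|R\|$ are $\Theta(n^{-1})$, the same order as the gap. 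Weyl's inequality therefore cannot rule out the gap closing. Moreover, Weyl needs $\|R\|$, whereas evaluating $R$ on two eigenvectors is first-order perturbation theory.

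What has to be shown is that $R$ shifts the top two levels by almost the same amount: both the first-order difference and the second-order corrections must be $o(n^{-1})$. One way is to exploit that $R$ is permutation-symmetric and varies slowly with Hamming weight across the $\mathcal{O}(\sqrt{n})$-wide window occupied by the top eigenvectors. This holds in the computational basis for its first term and in the Hadamard basis for its second.

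Note also that an induction in $n$ whose hypothesis is only ``$\Theta(n^{-1})$'' closes only with fixed constants. Per-step gap errors of order $n^{-2}$ are of relative order $n^{-1}$, which is not summable, so the constants can drift.

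The paper passes over the lower block with ``neglecting higher-order terms''. So the problem is not that you depart from its proof, but that the mechanism you propose to fill that hole does not work.
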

\begin{proof}
	Similar to Lemma~\ref{lemma_generalized_mixer_eigen_reduction}, it suffices to focus on the case of small constants $k > k_0 > 1$. Assume, as the induction hypothesis, that the lemma holds for the case $(k-1, k_0-1)$ with any $n$ sufficiently larger than $k$, and also for the case $(k, k_0)$ with a specific $n$ moderately larger than $k$.  We now aim to establish the lemma for the case $(k, k_0)$ with $n+1$.
	
	% Proof of Lemma~\ref{lemma_generalized_mixer_gap}
	According to Lemma~\ref{lemma_generalized_mixer_eigen_reduction}, the spectral gap of $\mathcal{H}_{n+1,k,k_0}$ can be analyzed by examining each block of $\tilde{\mathcal{H}}_{n+1,k,k_0}$, as defined in Eq.~\eqref{eq_approxi_Hn}. Specifically, the gap is determined both by the gaps within the individual blocks and by the difference between the largest eigenvalues of the two blocks.
	
	For the upper block $H_{m00} = s(k) H_{n,k} + s(k_0) H_{B,n,k_0}$, since $k$ is a small constant, the spectral gap of this block is of the same order as that of $\mathcal{H}_{n,k,k_0}$, namely $\Theta(n^{-1})$.  For the lower block $
	H_{m11} = s(k) H_{n,k} + s(k_0) H_{B,n,k_0} + \bar{s}(k) H_{n,k-1} + \bar{s}(k_0) H_{B,n,k_0-1}$, recalling the deviation of the objective function in Eq.~\eqref{eq_deviation_fkx} and neglecting higher-order terms, the spectral gap of this block also remains of the same order as that of $\mathcal{H}_{n,k,k_0}$.
	
	Regarding the gap between the two blocks, it is primarily determined by $
	\Delta H_m = \bar{s}(k) H_{n,k-1} + \bar{s}(k_0) H_{B,n,k_0-1}$, whose magnitude is also clearly $\Theta(n^{-1})$. Combining these observations, we conclude that the spectral gap $g_{k,k_0}$ of $\mathcal{H}_{n+1,k,k_0}$ scales as $\Theta(n^{-1})$, proving the lemma.
\end{proof}

Based on the two key lemmas above, and following a reasoning similar to that in Sec.~\ref{apsec_proof_universal_mixer}, we can establish the efficiency of the generalized-mixer $k$-local quantum search and its adiabatic variant on random max-$k$-SSAT instances. Having established a unified framework for $k$-local quantum search, we now formulate a theorem to characterize its performance. For clarity, hereafter the term ``$k$-local quantum search'' refers specifically to the generalized-mixer variant, while the original formulation (where $k = k_0$) will be distinguished as the ``symmetric $k$-local quantum search''.

\begin{thm}\label{theo_main}
	Given a random instance of max-$k$-SSAT drawn from $F_s(n,m,k)$ with $m = \Theta(n^{1+\epsilon+\delta})$, for any fixed $\epsilon>0$ and sufficiently large $n$, the adiabatic $k$-local quantum search achieves efficiency comparable to that of the ideal $k$-local search problem, with probability $1 - \mathcal{O}(\mathrm{erfc}(n^{\delta/2}))$.
\end{thm}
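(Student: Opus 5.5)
The plan is to assemble Theorem~\ref{theo_main} from the same chain of reductions used for the universal-mixer case in Appendix~\ref{apsec_proof_universal_mixer}. The only new ingredient is the generalized-mixer spectral gap, Lemma~\ref{lemma_generalized_mixer_gap}. The argument runs in four stages: ideal search problem, circuit version on $F_f$, adiabatic version on $F_f$, and finally transfer to $F_s$.

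\emph{Step 1 (ideal problem).} First, Lemma~\ref{lemma_reduce_t} reduces an arbitrary target $t$ to $t=0$; its proof already covers general $(k,k_0)$. Lemma~\ref{lemma_generalized_mixer_gap} then gives that $\mathcal{H}_{n,k,k_0}$ has gap $\Theta(n^{-1})$. Feeding this gap into the phase-rotation argument of Lemma~\ref{lemma_k_local_QS}, with $\theta=\Theta(n^{-1})$, shows that the circuit in Eq.~\eqref{apeq_generalized_mixer_k_local_QS} reaches large amplitude on $\ket{t}$ within $p=\mathcal{O}(n^2)$ iterations.

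For the adiabatic interpolation $sH_k+(1-s)H_{B,k_0}$, the minimum gap is also needed. I would argue as in Sec.~\ref{apsubsec_proof_adiabatic_on_SAT}. Rescaling $s$ and $1-s$ by constants only changes the relative weights by $\Theta(1)$, and the inductive block argument of Lemma~\ref{lemma_generalized_mixer_eigen_reduction} is insensitive to such constant reweighting. Hence the gap stays $\Theta(n^{-1})$ uniformly in $s$. Near the endpoints it is $\Theta(n^{-1})$ directly, from the spectrum of $H_k$ (consecutive values of $f_k$ differ by $\Theta(n^{-1})$) and similarly for $H_{B,k_0}$. The matrix element $\varepsilon_0$ is $\mathcal{O}(1)$, since both Hamiltonians have norm at most $1$. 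Eq.~\eqref{eq_adiabatic_condition} then gives $T=\mathcal{O}(n^2)$.

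\emph{Step 2 (circuit version on $F_f$).} I repeat the proof sketch of Theorem~\ref{theo_threshold} with $H_{B,1}$ replaced by $H_{B,k_0}$; nothing in that argument uses the locality of the mixer. Take $c=n^{\delta/2}$ in Eq.~\eqref{eq_Ekx_range} and $m=\Theta(n^{2+\epsilon+\delta})$. Then each eigenvalue deviation is $o(n^{-1})$ with probability $1-\mathcal{O}(\mathrm{erfc}(n^{\delta/2}))$. The first-order Dyson-type expansion of $\Delta U$ keeps the amplitude error $o(1)$.

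\emph{Step 3 (adiabatic version on $F_f$).} I invoke Lemma~\ref{lemma_AQS_on_SAT} with $f(n)=n^{2+\epsilon+\delta}$. Its proof only uses the circuit-version efficiency from Step~2 and the runtime $\mathcal{O}(n^2)$ from Step~1, so it is independent of the mixer. It yields efficiency at $m=\Theta(f^{1/2})$. Relabelling the constants $\epsilon,\delta$ gives the regime $m=\Theta(n^{1+\epsilon+\delta})$, with the same failure probability; this is Theorem~\ref{theo_adiabatic_k_local_QS} for the generalized mixer.

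\emph{Step 4 (transfer to $F_s$).} Since $m=\Omega(n^{1+\epsilon})$, Lemma~\ref{lem_model_reduction} carries the result from $F_f(n,m,k)$ to $F_s(n,m,k)$. One must check that the high-probability (not just average-time) guarantee transfers. I would do this through the same mechanism as that lemma: for $m$ in this range, an $F_s$ instance is, with overwhelming probability, distributed like an $F_f$ instance with $t_0$ uniform. I expect this to lose only a term negligible compared with $\mathrm{erfc}(n^{\delta/2})$.

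The main obstacle I anticipate is Step~1's uniform-in-$s$ gap bound for mismatched localities $k\neq k_0$. Lemma~\ref{lemma_generalized_mixer_gap} controls only the equal-weight point, and the location of the minimum gap can drift. My first attempt would be to rerun the induction of Lemma~\ref{lemma_generalized_mixer_gap} with weights $(s,1-s)$ carried through. The inter-block shift then becomes $s\bar{s}(k)H_{n,k-1}+(1-s)\bar{s}(k_0)H_{B,n,k_0-1}$, which remains $\Theta(n^{-1})$ for every $s\in[0,1]$ because at least one weight is $\ge 1/2$.
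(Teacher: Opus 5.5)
Your Steps 1--4 are essentially the paper's argument. The paper obtains Theorem~\ref{theo_main} by combining Lemmas~\ref{lemma_generalized_mixer_eigen_reduction} and~\ref{lemma_generalized_mixer_gap} with the universal-mixer chain: Lemma~\ref{lemma_k_local_QS}, Theorem~\ref{theo_threshold}, Lemma~\ref{lemma_AQS_on_SAT} and Theorem~\ref{theo_adiabatic_k_local_QS}, then Lemma~\ref{lem_model_reduction} to pass from $F_f$ to $F_s$. It handles the off-center minimum gap with the same ``limited shift, $\Theta(n^{-1})$ endpoint gaps'' reasoning you use. Where you go beyond the paper is Step~4: you rightly note that Lemma~\ref{lem_model_reduction} only transfers average polynomial time, not the high-probability guarantee, which the paper leaves implicit; your planted-versus-conditioned comparison plausibly closes this, since for $m=n^{1+\epsilon+\delta}$ the planted assignment is the unique solution except with probability about $n\,e^{-c n^{\epsilon+\delta}}$, negligible against $\mathrm{erfc}(n^{\delta/2})$.
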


\section{Analysis of adiabatic-manifold-based reparameterization}
\label{apsec_analysis_am_repara}

This subsection focuses on how the smooth adiabatic manifold can be exploited to construct a more effective reparameterization of the variational parameter space. Since the adiabatic manifold is closely related to the underlying adiabatic evolution trajectory, it is natural to generalize the linear schedule in Eq.~\eqref{eq_paraform_two} according to the analysis of the optimal adiabatic trajectory presented in Sec.~\ref{subsec_connecting_qaoa}. Instead of adopting fixed linear coefficients, we allow the interpolation between the two Hamiltonians to be governed by continuous functions of time, yielding
\begin{equation}
	H(t) = f_C(s) \bar{H}_C + f_B H_{B,1},
\end{equation}
subject to the boundary conditions $f_C(0) = 0, \, f_B(0) > 0 \, f_C(T) > 0 ,\ f_B(T) = 0$ with $0\le t\le T$. As the evolution proceeds, the system Hamiltonian evolves smoothly from the mixer Hamiltonian toward the problem Hamiltonian. The pair of functions $\bm{f}(t)=(f_C(t),f_B(t))$ therefore completely characterizes the adiabatic trajectory.

Importantly, the optimal adiabatic trajectory $\bm{f}^*(t)$ is expected to remain continuous with respect to time. Intuitively, if an ``optimal'' schedule contained discontinuities, one could always construct a smoother alternative through local interpolation, thereby reducing the magnitude of $\frac{{\rm d}H(t)}{{\rm d}t}$, which in turn relaxes the adiabatic constraint and decreases the required evolution time. Consequently, the effective variational manifold induced by the compressed adiabatic evolution is naturally expected to inherit a smooth geometric structure.

To simulate the optimal evolution, the optimal adiabatic passage $\bm{f}^*(t)$ is discretized into $p$ intervals. For the $d$-th interval $\left(t_{d-1}, t_d\right]$, the time evolution is approximated by
\begin{equation}\label{eq_AP_para}
	e^{iH(t_d)\Delta t} \approx e^{if^*_C(t_d)\Delta t_d H_C}e^{if^*_B(t_d)\Delta t_d H_B}, 
\end{equation}
where $\Delta t_d = t_d - t_{d-1}$. In QAOA, the total evolution is parameterized by introducing variational parameters $\gamma_d=f^*_C(t_d)\Delta t_d$ and $\beta_d=f^*_B(t_d)\Delta t_d$, effectively encoding the adiabatic passage into the sequence of alternating unitaries. If no information is available about the underlying structure of $\bm{f}^*(t)$, this direct parameterization serves as a natural and efficient approach. However, since $\bm{f}^*(t)$ is known to be continuous, there exists an opportunity to exploit this property for a more structured and potentially more efficient parameterization of $\bm{f}(t)$.

% give the relation of the factor on continuity
In the evolution described by Eq.~(\ref{eq_AP_para}), three explicit factors appear: the time duration $\Delta t_d$, and the coefficients $f^*_C(t_d)$ and $f^*_B(t_d)$. The interdependence among them is governed by the optimality conditions of the adiabatic passage functions $f^*_C(t)$ and $f^*_B(t)$. According to the adiabatic condition, the time step $\Delta t_d$ should satisfy
\begin{equation}\label{eq_factor_relation}
	\Delta t_d\sim \varepsilon(t_d)g^{-2}(t_d).
\end{equation}
where $\varepsilon(t_d)$ characterizes the instantaneous rate of change of the Hamiltonian. Based on Eq.~\eqref{eq_T_varepsilon},$\varepsilon(t_d)$ is primarily determined by the derivative ${\rm d}H(t)/{\rm d}t$. Thus, it is not the absolute values of $(f^*_C(t_d), f^*_B(t_d))$ that govern $\Delta t_d$, but rather their local variations. For convenience, we denote
\begin{equation}
	\begin{aligned}
		\Delta f^*_C(t_d) &= f^*_C(t_d) - f^*_C(t_{d-1}), \\
		\Delta f^*_B(t_d) &= f^*_B(t_{d-1}) - f^*_B(t_d).
	\end{aligned}
\end{equation}
Accordingly, the evolution at each step depends on four tightly coupled quantities: the time duration $\Delta t_d$, the spectral gap $g(t_d)$, and the variations $\Delta f^*_C(t_d)$ and $\Delta f^*_B(t_d)$. These quantities together determine the necessary conditions for preserving adiabaticity throughout the discretized evolution.

% analysis the continuity of each factor
Because the gap $g(t)$ is an intrinsic property of the Hamiltonian that cannot be directly adjusted, it can be treated as an independent variable. The other three factors---$\Delta t_d$, $\Delta f^*_C(t_d)$, and $\Delta f^*_B(t_d)$---depend on $g(t_d)$ and exhibit varying degrees of continuity. While discrete variables lack strict continuity, we refer here to approximate continuity, meaning that the variations between adjacent intervals remain small.

According to Eq.~\eqref{eq_factor_relation}, the variations  $\Delta f^*_C(t_d)$ and $\Delta f^*_B(t_d)$ tend to be positively correlated with $g(t_d)$, while the time step $\Delta t_d$ is inversely proportional to $g(t_d).$ However, these factors are also subject to Trotterization constraints, which limit their magnitudes. As a result, when $g(t_d)$ becomes small, $\Delta f^*_C(t_d)$ and $\Delta f^*_B(t_d)$ are typically reduced, preventing a compensatory quadratic increase in $\Delta t_d$.

%%%% another explain %%%%
% parameterization of better continuity, intro  (theta, rho)
The continuity of parameters plays a crucial role in setting effective schedules. Accordingly, the parameterization of the adiabatic passage should preserve the best possible continuity among the relevant factors. Notably, the incremental form $\Delta f^*_C(t_d)$ exhibits better continuity than the cumulative quantity $f^*_C(t_d) = \sum_{j=1}^d \Delta f^*_C(t_j)$. Moreover, based on the analysis of the normalized Hamiltonian, the variations $\Delta f^*_C(t_d)$ and $\Delta f^*_C(t_d)$ are expected to share similar behaviors with the parameters $f_\gamma$ and $f_\beta$. This observation motivates the mapping
\begin{equation}
	(\Delta f^*_C(t_d), \Delta f^*_C(t_d)) \mapsto (\rho_d\sin{\theta_d}, \rho_d\cos{\theta_d}),
\end{equation}
where $\rho_d$ represents the magnitude of the Hamiltonian’s rate of change at step $d$, and $\theta_d$ captures the relative contributions of $H_C$ and $H_B$ in that interval. A natural approach is to parameterize these dependent factors directly, leading to the following mapping for the QAOA parameters:
\begin{equation}\label{eq_para3}
	\gamma_d=\sum_{j=1}^d \rho_j\sin{\theta_j}\Delta t_d,
	\beta_d=\sum_{j=d}^p \rho_j\cos{\theta_j}\Delta t_d. 
\end{equation}
Here, $\theta_d$ encodes the ratio between the increments of $H_C$ and $H_B$ in the $d$-th interval. Under proper normalization, its expected value is approximately $\pi/4$, and it tends to vary smoothly across steps. In contrast, $\rho_d$ captures the total strength of the Hamiltonian’s evolution and is positively correlated with the gap $g(t_d)$. Compared to individual components like $\Delta f^*_C(t_d)$ or $\Delta f^*_C(t_d)$, typically exhibits better continuity.

This ternary parameterization offers favorable continuity and interpretability. However, the introduction of $3p$ parameters significantly increases the optimization cost. To address this, a reduction in parameter count can be achieved by approximating the cumulative behavior of the summation terms. For instance, considering the component associated with $H_C$, we define
\begin{equation}
	\kappa_d = \frac{1}{p\rho_d\sin{\theta_d}}\sum_{j=1}^{d}\rho_j\sin{\theta_j}. 
\end{equation} 
Here, each term $\rho_j\sin{\theta_j}$ can be approximately treated as an i.i.d. random variable. As $d$ increases, the cumulative sum tends to exhibit a dominant linear growth, allowing the approximation $\kappa_d =\frac{d}{p}+\Delta\kappa_d$, where $\Delta\kappa_d$ captures the residual fluctuations around the linear trend.

% The remaining factors in Eq.~(\ref{eq_para3}) are $p$, $\rho_d$, $\Delta t_d$, and $\sin{\theta_d}$. 
In Eq.~(\ref{eq_para3}), $\gamma_d$ can be rewritten as $p \kappa_d \rho_d \Delta t_d \sin{\theta_d}$. Since $\rho_d$ and $\Delta t_d$ exhibit opposite correlations with the gap $g(t_d)$, their product $\rho_d \Delta t_d$ tends to vary more smoothly and therefore exhibits better continuity. Additionally, the total evolution time satisfies $\sum_{d=1}^p \Delta t_d = T$, implying that each $\Delta t_d$ is approximately inversely proportional to $p$. As a result, the term $p \rho_d \Delta t_d$ can be absorbed into a single parameter $\tau_d$ with improved continuity.

Regarding $\kappa_d$, the lower-order fluctuation term $\Delta \kappa_d$ can be further decomposed into two components: an amplitude-related fluctuation that can be absorbed into $\tau_d$, and a phase-related fluctuation that can be absorbed into $\theta_d$. Therefore, the original three-parameter system can be reduced to a two-parameter one, denoted by $({\bm\theta, \bm\tau})$. Their correspondence to the original parameterization $({\bm \gamma}, {\bm \beta})$ is given by
\begin{equation}
	\gamma_d=\frac{d}{p+1}\tau_d\sin{\theta_d}, \beta_d=\frac{p+1-d}{p+1}\tau_d\cos{\theta_d}.
\end{equation}
fter a straightforward parameter substitution and rescaling, this formulation becomes equivalent to the form presented in Eq.~\eqref{eq_repara}.

\section{Experimental Setup}\label{app_exp}

This appendix summarizes the experimental settings used throughout the simulations presented in this paper.%Section~\ref{sec_comp_pref}.

\paragraph{Circuit Depth.}

All simulations were conducted with the QAOA depth fixed at $p = n$, where $n$ denotes the number of variables in each problem instance. This setting provides a balance between practical feasibility and theoretical relevance. On the one hand, superlinear depths are typically impractical for near-term quantum devices; on the other hand, constant or logarithmic depths may fail to capture the intrinsic computational hardness of NP-complete problems. 

Our method remains applicable for depths exceeding $n$, where the optimization cost continues to stay at a relatively low level. For shallower circuits, the success probability tends to decrease but still remains sufficiently high for meaningful performance. Therefore, we adopt $p = n$ as a representative and well-balanced configuration throughout our experiments.

\paragraph{Problem Instance Generation.}
Random instances of 3-SAT were generated using a satisfiable-instance-restricted distribution $F_s(n, m_n^*, 3)$, where each instance is guaranteed to be satisfiable. The clause-to-variable ratio was chosen as $m_n^* = \mu_n n$, with each $\mu_n$ empirically tuned so that the expected number of satisfying assignments is approximately 1.3. Table~\ref{tab:mu_values} lists the $\mu_n$ values used for $n=4$ to $20$, covering the phase transition region that is widely considered computationally challenging~\cite{Achioptas2006, Kirousis1998, Coja2016}.

\begin{table}[!t]
	\footnotesize
	\caption{\small Empirical values of $\mu_n$ used to generate clause-to-variable ratios $m_n^* = \mu_n n$ for satisfiable 3-SAT instances. These values yield approximately 1.3 expected satisfying assignments per instance.}
	\label{tab:mu_values}
	\tabcolsep 2.75pt
	\begin{tabular*}{0.47\textwidth}{ccccccccc}
		\toprule
		$n$ & 4 & 5 & 6 & 7 & 8 & 9 & 10 & 11 \\\hline
		$\mu_n$ & 5.460 & 5.413 & 5.833 & 5.781 & 5.900 & 5.895 & 6.075 & 6.036 \\
		\bottomrule
		12 & 13 & 14 & 15 & 16 & 17 & 18 & 19 & 20 \\\hline
		6.090 & 6.191 & 6.160 & 6.234 & 6.320 & 6.396 & 6.368 & 6.460 & 6.413 \\
		\bottomrule
	\end{tabular*}
\end{table}

\paragraph{Optimization Procedure.}
For parameter optimization, we employed the BFGS algorithm, consistent with the setup used in~\cite{Zhou2020}, with the objective of maximizing the expectation value $\langle H_C \rangle$. Unlike many prior works that report the number of optimization epochs, we measure optimization cost by the total number of expectation evaluations of the cost Hamiltonian $\langle H_C \rangle$. This metric better reflects the true computational effort, as each gradient-based update typically requires $\mathcal{O}(p)$ circuit executions.

\paragraph{Evaluation Metric.}
We evaluate performance using the \emph{success probability}, defined as the probability of sampling a satisfying assignment from the QAOA output state. This metric is adopted in place of approximation ratios or expected energies to reflect the nature of the problem reduction described in Eq.\eqref{eq_QAOA_opt}, where an exact interpretation of the solution is required to determine satisfiability.

In practice, we find that this metric provides sufficient resolution to distinguish between different parameter-setting strategies across random instances. It ensures consistency across evaluations and aligns with the goal of solving decision-type problems, making it well-suited for both empirical comparison and theoretical analysis in our follow-up work. 

\paragraph{Circuit Execution Assumptions.}
All simulations were conducted in a noiseless setting. To approximate realistic execution costs, each expectation value $\langle H_C \rangle$ is estimated using 1000 measurement shots, following standard practices in variational quantum algorithms. The cost of a single circuit execution is modeled to scale as $\mathcal{O}(mp)$, where $m$ is the number of clauses in the cost Hamiltonian and $p$ is the circuit depth. This scaling reflects the number of two-qubit gates and circuit duration, both of which are key constraints on near-term quantum hardware.

\section{Algorithmic details for baseline methods}

This appendix provides the full pseudocode for three baseline parameter-setting strategies used in our experiments: the FOURIER heuristics introduced in~\cite{Zhou2020} and the TQA-based initialization proposed in~\cite{Sack2021}. These algorithms serve as standard benchmarks for evaluating the performance of our proposed SAMP method. %For completeness and reproducibility, we present each method in algorithmic form below.

\subsection{FOURIER heuristic}\label{app_FOUR}

The FOURIER method starts from a shallow QAOA circuit with randomly initialized parameters. However, its core idea differs in how information is carried across depths: instead of directly propagating optimized parameters, it maintains and updates their underlying frequency components. 

Specifically, QAOA parameters are represented as a truncated Fourier series, that is, using a small number of low-frequency sine and cosine modes. As the circuit depth increases, additional Fourier modes are introduced, gradually enhancing the expressiveness of the parameter schedule. By optimizing in a low-dimensional space of frequency coefficients, the method significantly reduces the complexity of the parameter search. This low-dimensional space is progressively expanded until the full parameter space is effectively explored. The detailed procedure is provided in Algorithm~\ref{alg_fourier_appendix}.

\begin{algorithm}[H]
	\caption{\small FOURIER Heuristic for QAOA}
	\label{alg_fourier_appendix}
	\small
	\begin{algorithmic}[1]
		\State Randomly initialize $\bm{u} = (u_1), \bm{v} = (v_1)$
		\State Set $q \leftarrow 1$
		\While{$q \le p$}
		\State Construct $(\bm{\gamma}, \bm{\beta})$ of length $q$ via Fourier transform:
		\State \quad $\gamma_j = \sum_{k=1}^q u_k \sin\left(\frac{(k-\frac{1}{2})(j-\frac{1}{2})\pi}{q}\right)$
		\State \quad $\beta_j = \sum_{k=1}^q v_k \cos\left(\frac{(k-\frac{1}{2})(j-\frac{1}{2})\pi}{q}\right)$
		\State Optimize $(\bm{u}, \bm{v})$ using QAOA of depth $q$
		\State Set $\bm{u} \leftarrow (u_1, \dots, u_q, 0)$, $\bm{v} \leftarrow (v_1, \dots, v_q, 0)$
		\State $q \leftarrow q + 1$
		\EndWhile
	\end{algorithmic}
\end{algorithm}

\subsection{TQA initialization}\label{app_TQA}

% give the initialization, analyze the method and shows the variation of this method
The TQA-based initialization aims to estimate the statistical optimum $(\bar{\theta}^*, \bar{\rho}^*)$ via instance-based parameter prediction. This strategy proves effective---particularly under our normalized Hamiltonian framework---as it significantly reduces the cost of such prediction by identifying a promising region with favorable gradient directions. However, when the true optimal parameters $(\theta^*, \rho^* )$ significantly deviate from these statistical estimates, it is preferable to avoid overfitting to potentially misleading values and instead initialize with parameters that exhibit stronger gradients.

The Trotterized Quantum Annealing (TQA) method constructs a parameter schedule inspired by linear adiabatic evolution, using precomputed statistical averages of the parameters of QAOA.  This method assumes the parameters in form of
\begin{equation*}
	\gamma_d =  \frac{d\bar{\gamma^*}}{p+1}, \qquad
	\beta_d = \frac{(p-d+1)\bar{\beta^*}}{p+1} , 
\end{equation*}
where $(\bar{\gamma^*}, \bar{\beta^*})$ are statistical averages derived from small-instance optimization.

The TQA method is used solely for parameter initialization. The subsequent optimization is performed directly using classical optimizers, without applying any further heuristics. Accordingly, in Section~\ref{subsec_benchmarking}, a full parameter optimization is conducted to reach a quasi-optimal setting.

\section{Supplementary experiments}

\subsection{Additional results for adiabatic manifold}
\label{apsubsec_extra_simulation}

we conduct additional simulations on a broader set of random instances generated from $F_s(12, m, 3)$, as presented in Figure \ref{fig_exp4g}. These results visualize the probability of measuring the target (satisfying) state as a function of $f_\gamma$ and $f_\beta$, with each probability estimated from 10{,}000 measurement shots and represented via the corresponding raw count.

While the exact shape of the optimal region may vary across instances, a consistent pattern emerges: along the axis $f_\gamma = f_\beta$ (i.e., $\theta = \pi/4$), the probability gradient is generally steep, indicating a robust path toward optimality. This trend persists until the scale of $f_\gamma$ and $f_\beta$ becomes excessively large, beyond which performance rapidly declines and the landscape flattens. These results are consistent with our theoretical explanation and reinforce the effectiveness of the proposed initialization approach.

\begin{figure}[!t]
	\centering
	\footnotesize
	{\includegraphics[width=0.9\linewidth]{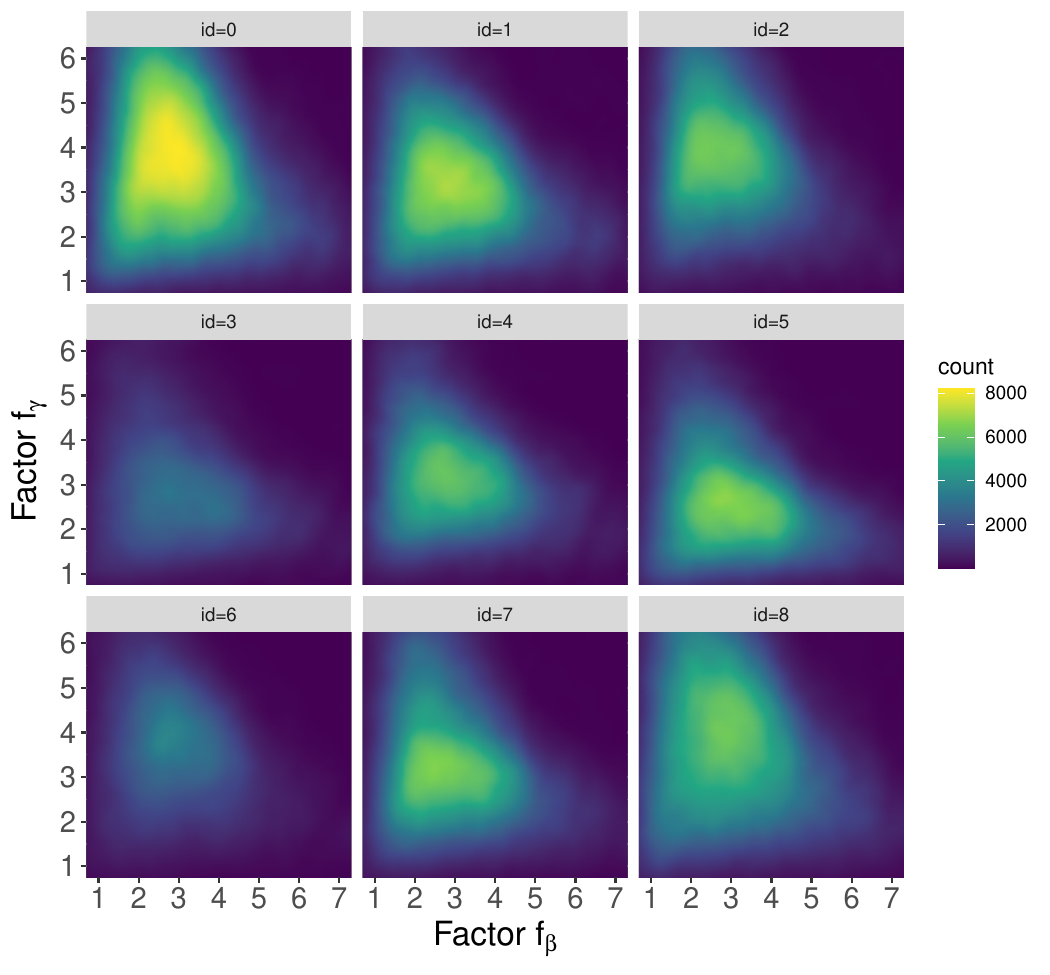}}
	\caption{\small Additional simulations on random instances from $F_s(12, m, 3)$, showing the distribution of measurement counts for obtaining the target state across different values of $f_\gamma$ and $f_\beta$.}
	\label{fig_exp4g}
\end{figure}

\subsection{Experiment on max-cut}\label{app_cut}

In this supplementary simulation, both parameter-setting algorithms proposed in this paper are applied to instances of the Max-Cut problem on random graphs generated using the Erd\H{o}s--R\'enyi model. Each random graph has an edge density of $0.5$, meaning that for a graph with $n$ nodes, the number of edges is approximately $0.5C_n^2$.

The simulation results are presented in Fig.~\ref{fig_cut}. In Fig.~\ref{fig_cutProb}, the significantly high success probabilities demonstrate the efficiency of the SAMP method in solving the Max-Cut problem. Fig.~\ref{fig_cutCount} shows the distribution of optimization costs. The observed trends are consistent with those reported for the 3-SAT problem that the SAMP method exhibits a logarithmic growth in cost as the problem size increases.

\begin{figure}[t]
	\centering
	\begin{subfigure}[t]{0.8\linewidth}
		\centering
		\includegraphics[width=\linewidth]{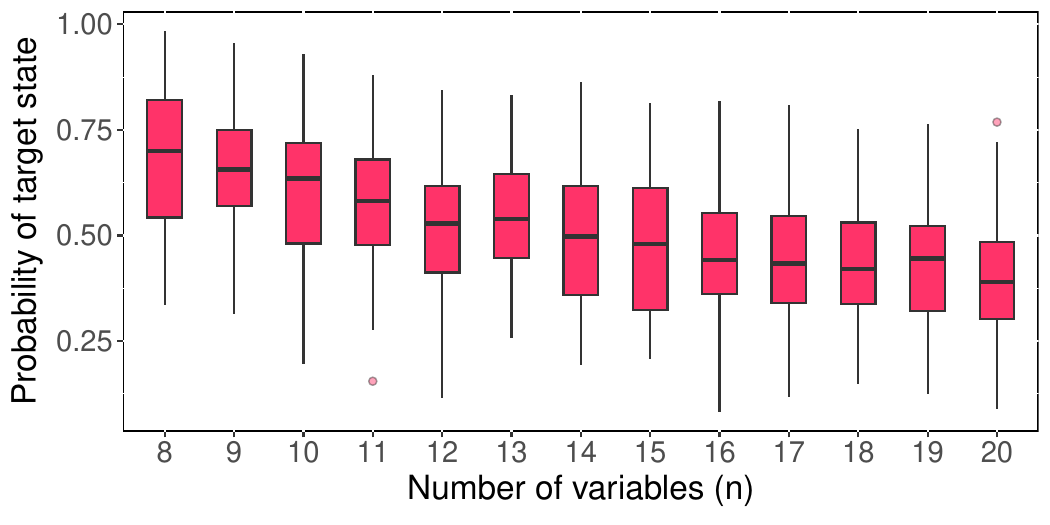}
		\caption{\small Distribution of success probabilities.}
		\label{fig_cutProb}
	\end{subfigure}
	%\vspace{-1.2ex} 
	\begin{subfigure}[t]{0.8\linewidth}
		\centering
		\includegraphics[width=\linewidth]{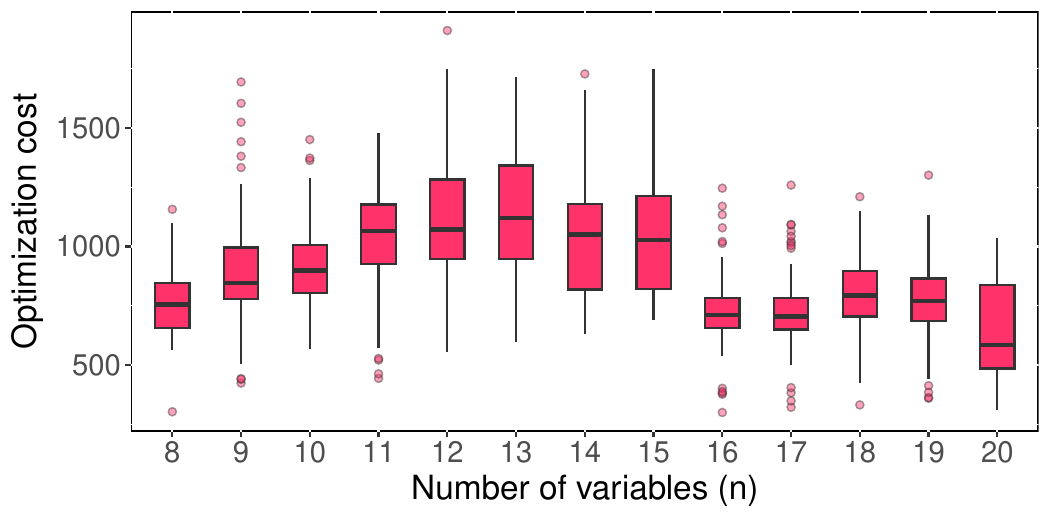}
		\caption{\small Distribution of optimization costs.}
		\label{fig_cutCount}
	\end{subfigure}
	\vspace{-0.6ex}
	\caption{\small Performance of the SAMP method (red, right) on 100 random Erd\H{o}s--R\'enyi graph instances with edge density $0.5$ and node sizes ranging from $n = 8$ to $20$.}
	\label{fig_cut}
\end{figure}

\bibliography{SAMP_aps}% Produces the bibliography via BibTeX.
% 参考文献至少要加到50.

\end{document}